\title{Tractability Beyond $\beta$-Acyclicity for Conjunctive Queries with Negation}
\author{Matthias Lanzinger}
\affiliation{%
  \institution{University of Oxford}
  \country{United Kingdom}
}
\affiliation{%
  \institution{TU Wien}
  \country{Austria}
}
\email{matthias.lanzinger@cs.ox.ac.uk}
\keywords{nest-set width, conjunctive queries with negation, beta-acyclic, hypergraph, satisfiability}
\begin{abstract}
Numerous fundamental database and reasoning problems are known to be \textsf{NP}-hard in general but tractable on instances where the underlying hypergraph structure is $\beta$-acyclic. Despite the importance of many of these problems, there has been little success in generalizing these results beyond acyclicity.

In this paper, we take on this challenge and propose \emph{nest-set width}, a novel generalization of hypergraph $\beta$-acyclicity.  We demonstrate that nest-set width has desirable properties and algorithmic significance. In particular, evaluation of boolean conjunctive queries with negation (\cqneg) is tractable for classes with bounded nest-set width. Furthermore, propositional satisfiability (\sat) is fixed-parameter tractable when parameterized by nest-set width.

 \end{abstract}
\begin{document}
\fancyhead{}
\maketitle

\section{Introduction}

Hypergraph cyclicity has been identified as a key factor for the computational complexity of multiple fundamental database and reasoning problems. 
While various natural notions of hypergraph acyclicity exist, the
two most general ones --- $\alpha$- and $\beta$-acyclicity ---  have proven to be the
most relevant in the study of the complexity of reasoning. Important problems that are \np-hard in general often become tractable when restricted to acyclic instances. An example from databases is the evaluation of conjunctive queries (CQs), which is \np-hard in general but becomes tractable when the underlying hypergraph structure of the query is $\alpha$-acyclic~\cite{DBLP:conf/vldb/Yannakakis81}. The restriction to $\beta$-acyclic instances yields tractable classes for a variety of fundamental problems, including \sat~\cite{DBLP:journals/tcs/OrdyniakPS13}, \#\sat~\cite{DBLP:conf/stacs/Brault-BaronCM15}, and \cqneg evaluation~\cite{DBLP:conf/csl/Brault-Baron12,DBLP:conf/pods/NgoNRR14}. Notably, these problems remain \np-hard  when restricted to $\alpha$-acyclic instances or \#\textsf{P}-hard in the case of \#\sat (see~\cite{DBLP:conf/sat/CapelliDM14}).

However, while both types of acyclicity have proven to be interesting, the generalization of
$\alpha$-acyclicity has received significantly more attention. There, a rich hierarchy of width measures has been developed over the last two decades.
The \emph{tree-likeness} of $\alpha$-acyclic hypergraphs has been successfully generalized by hypertree width~\cite{DBLP:journals/jcss/GottlobLS02} and even further to fractional hypertree width~\cite{2014grohemarx}. What makes these generalizations particularly interesting is that they remain sufficient conditions for tractable CQ evaluation, emphasizing the deep connection between cyclicity and the complexity of CQs. The yet more general submodular width~\cite{DBLP:journals/jacm/Marx13}  characterizes the fixed-parameter tractability of CQ evaluation on the hypergraph level.
Moreover, related research has revealed notable parallels to other fields, e.g., to game theory~\cite{DBLP:journals/jair/GottlobGS05} and information theory~\cite{DBLP:journals/siamcomp/AtseriasGM13,panda}.

Despite the unquestionable success of the generalization of $\alpha$-acyclicity,  the generalization of $\beta$-acyclicity has received little attention so far. In the most prominent approach, Gottlob and Pichler~\cite{DBLP:conf/icalp/GottlobP01} introduced $\beta$-hypertree width (\bhw) as an analogue to hypertree width. In particular, they define \bhw as the maximum hypertree width over all subhypergraphs, mirroring a characterization of $\beta$-acyclicity in terms of every subhypergraph being $\alpha$-acyclic. However, it is difficult to exploit low \bhw algorithmically. An inherent problem with \bhw is that a witness for low \bhw would need to include a hypertree decomposition for each of the, exponentially many, subhypergraphs. Furthermore, none of the problems listed above as tractable on $\beta$-acyclic instances are known to be tractable for bounded \bhw (beyond those that are tractable for the more general bounded $hw$).

In recent work, Carbonell, Romero, and Zivny introduced point-decompositions and the accompanying point-width ($pw$)~\cite{DBLP:conf/lics/Carbonnel0Z19}, which generalizes both $\beta$-acyclicity and MIM-width\cite{DBLP:journals/jair/SaetherTV15}. They show that, given a point-decomposition of bounded point-width and polynomial size, \textsc{Max-CSP} can be decided in polynomial time. However, just as with \bhw, it is not known if $pw \leq k$ can be decided in polynomial time, even for constant $k$.
In summary, neither of these approaches allows us to extend the tractability under $\beta$-acyclicity for the problems mentioned above to larger tractable fragments. Considering the importance of the affected problems and the restrictiveness of $\beta$-acyclicity the situation is unsatisfactory for theoretical and practical use.

In this paper, we propose a new generalization of $\beta$-acyclicity
which we call nest-set width ($nsw$). In contrast to \bhw and $pw$, it
is not based on decompositions but instead generalizes a
characterization of $\beta$-acyclicity by the existence of certain kinds of elimination
orders. Nest-set width has several attractive properties that
suggest it to be a natural extension of
$\beta$-acyclicity. Importantly, $nsw \leq k$ can be decided in fixed-parameter tractable time when parameterized by $k$. Furthermore, we show that bounded $nsw$ yields new islands of tractability for \sat and
\cqneg evaluation. The full contributions of this paper are summarized as follows:
\begin{itemize}
\item We introduce a new hypergraph width notion --  nest-set width -- that generalizes the existence of nest point elimination orders.
\item We establish the relationship of $\nsw$ to other related widths. In particular, we show that bounded $\nsw$ is a special case of bounded \bhw and incomparable to other prevalent width measures such as bounded clique width and treewidth.
\item It is shown that deciding $\nsw \leq k$ is \np-complete when $k$ is part of the input but fixed-parameter tractable when parameterized by $k$.
\item Building on work by Brault-Baron~\cite{DBLP:conf/csl/Brault-Baron12} for the $\beta$-acyclic case, we show the tractability of  evaluation of boolean CQs with negation for classes with bounded $\nsw$.
\item Finally, we demonstrate how to derive the fixed-parameter tractability of \sat parameterized by $\nsw$
  from our main result.
  
\end{itemize}

The rest of the paper is structured as follows. Section~\ref{sec:prelim} introduces necessary notation and preliminaries. We define nest-set width and establish some basic properties in Section~\ref{sec:nsw}. We move on to establish the relationship between $\nsw$ and other width measures, most importantly \bhw, in Section~\ref{sec:betahw}. The complexity of checking $nsw$ is discussed in Section~\ref{sec:complexity}.
The tractability of \cqneg under bounded $\nsw$ is shown in Section~\ref{sec:scq}. Concluding remarks and a discussion of future work in Section~\ref{sec:conclusion} complete the paper. %

\section{Preliminaries}
\label{sec:prelim}

For positive integers $n$ we will use $[n]$ as a shorthand for the set $\{1,2,\dots,n\}$. When $X$ is a set of sets
we sometimes write $\bigcup X$ for $\bigcup_{x\in X}x$. The same applies analogously to intersections.

Linear orders will play an important role throughout this paper. Recall, a binary relation $R$ is a \emph{linear order} if it is antisymmetric, transitive and connex (either $aRb$ or $bRa$ for all $a$ and $b$). We will be particularly interested in whether the subset relation $\subseteq$ is a linear order on some domain. If $\subseteq$ is a linear order for some set $X$, we say $X$ is \emph{linearly ordered} by $\subseteq$. Note that $\subseteq$ is inherently transitive and antisymmetric and we can limit our arguments to connexity.

We use standard notions from (parameterized) computational complexity
theory such as reductions and the classes \textsf{P} and \np. We refer
to~\cite{DBLP:books/daglib/0018514}
and~\cite{DBLP:books/sp/CyganFKLMPPS15} for comprehensive overviews of
computational complexity and parameterized complexity, respectively.
Furhtermore, we assume the reader to be familiar with propositional logic.

\subsection{Hypergraphs, Acyclicity \& Width}
\label{sec:hgprelim}

A \emph{hypergraph} $H$ is a pair $(V(H), E(H))$ where $V(H)$ is a set of \emph{vertices} and $E(H) \subseteq 2^{V(H)}$
is a set of \emph{hyperedges}.
For hypergraph $H$ and vertex $v$, we denote the set of incident edges of $v$ as $\neigh(v, H) := \{ e\in E(H) \mid v \in e\}$.
The notation is extended to sets of vertices $s = \{v_1, \dots, v_\ell\}$ as $\neigh(s, H) := \bigcup_{i=1}^\ell \neigh(v_i, H)$.
We say an edge $e \in \neigh(s, H)$ is \emph{incident} to the set $s$. If $H$ is clear from the context we drop $H$ in the argument and write only $\neigh(s)$.

A \emph{subhypergraph} $H'$ of $H$ is a hypergraph with
$E(H') \subseteq E(H)$ and $V(H') = \bigcup E(H')$.  The \emph{vertex
  induced subhypergraph} $H[U]$ of $H$ is the hypergraph with
$V(H[U]) = U$ and
$E(H[U]) = \{e\cap U\mid e \in E(H)\} \setminus \{\emptyset\}$.  For a set
of vertices $X$ we write $H-X$ as shorthand for the vertex induced
subhypergraph $H[V(H)\setminus X]$.

All common notions of hypertree acyclicity have numerous equivalent definitions (see e.g.,~\cite{fagin1983degrees,DBLP:journals/csur/Brault-Baron16}). Here we recall only those definitions that are necessary to present the results of this paper.

A \emph{join tree} of $H$ is a pair $(T, \epsilon)$ where $T$ is a tree and $\epsilon: T \to E(H)$ is a bijection from the nodes in $T$ to the edges of $H$ such that the following holds: for every $v \in V(H)$ the set $\{ u\in T\mid v \in \epsilon(u)\}$ is a subtree of $T$. If $H$ has a join tree, then we say that $H$ is \emph{$\alpha$-acyclic}.

A (weak) \emph{$\beta$-cycle} is a sequence $(e_1, v_1, e_2,\dots, v_{n-1}, e_n, v_n, e_{n+1})$ with
$n \geq 3$ where $e_1,\dots,e_n$ are distinct hyperedges, $e_1=e_{n+1}$, and $v_1,\dots,v_n$ are
distinct vertices. Moreover, for all $i \in [n]$, $v_i$ is in
$e_i$ and $e_{i+1}$ and not in any other edge of the sequence. A hypergraph is \emph{$\beta$-acyclic}
if it has no $\beta$-cycle.

An alternative (equivalent) definition of $\beta$-acyclicity is that $H$ is $\beta$-acyclic if and only if all subhypergraphs of $H$ are $\alpha$-acyclic. In this paper, a third characterization of $\beta$-acyclicity will be important.
We call a vertex $v$ of $H$ a \emph{nest-point} if $I(v)$ is linearly ordered by $\subseteq$. We can then characterize $\beta$-acyclicity by a kind of elimination order for nest-points (this will be made more precise for a more general case in Definition~\ref{def:neo}).

\begin{proposition}[\cite{DBLP:journals/ipl/Duris12}]
  \label{prop:acyc.elim}
  A hypergraph $H$ is $\beta$-acyclic if and only if the empty hypergraph can be reached by successive removal of nest-points and empty-edges from $H$.
\end{proposition}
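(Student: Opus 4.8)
The plan is to pin both directions on a single key lemma: \emph{a nest-point $v$ of $H$ can never occur as one of the vertices $v_i$ of a $\beta$-cycle of $H$}. I would prove this directly. Suppose $v = v_i$ in a $\beta$-cycle $(e_1,v_1,\dots,e_n,v_n,e_{n+1})$. Then $e_i,e_{i+1}\in\neigh(v)$, so the nest-point property makes them $\subseteq$-comparable. If $e_i\subseteq e_{i+1}$, then the private vertex $v_{i-1}$, which lies in $e_i$ but in no other cycle edge, would satisfy $v_{i-1}\in e_i\subseteq e_{i+1}$, a contradiction; the case $e_{i+1}\subseteq e_i$ is symmetric using $v_{i+1}$ (all indices read cyclically, using $n\ge 3$ so that $e_{i-1},e_i,e_{i+1}$ are distinct). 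This lemma is short but does all the structural work.

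For the ($\Leftarrow$) direction I would induct on the length of the elimination sequence $H=H_0\to H_1\to\dots\to H_m=\emptyset$, proving downward that each $H_k$ is $\beta$-acyclic. The base case $H_m=\emptyset$ is trivial, and empty-edge removal neither creates nor destroys $\beta$-cycles, so the content is: if $H_k$ contains a $\beta$-cycle and $v$ is the nest-point removed at step $k$, then $H_{k+1}$ still contains one. By the lemma $v\ne v_i$ for every cycle vertex, so deleting $v$ only strips incidences: setting $f_i:=e_i\setminus\{v\}$, each $v_i$ still lies in $f_i\cap f_{i+1}$ and in no other $f_j$. The one thing to verify is distinctness of the $f_i$: if $f_i=f_j$ with $i\ne j$, then $v_i\in f_j\subseteq e_j$ forces $e_j\in\{e_i,e_{i+1}\}$, i.e.\ $j=i+1$, whence $v_{i-1}\in e_i\setminus\{v\}=f_{i+1}\subseteq e_{i+1}$ contradicts $v_{i-1}\notin e_{i+1}$. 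Since each $f_i$ contains two cycle vertices it is non-empty and survives the empty-edge cleanup, so $H_{k+1}$ has a $\beta$-cycle, contradicting the induction hypothesis; hence $H=H_0$ is $\beta$-acyclic.

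For the ($\Rightarrow$) direction I would induct on $|V(H)|$, using that $\beta$-acyclicity is hereditary under vertex deletion: a $\beta$-cycle in $H-\{v\}$, whose edges are $e_i\setminus\{v\}$, lifts verbatim to distinct edges $e_i\in E(H)$ on the same private vertices, giving a $\beta$-cycle in $H$. The inductive step rests on the converse of the lemma, namely that \emph{every $\beta$-acyclic hypergraph with a non-empty edge has a nest-point}. Given such a nest-point $v$, I remove it, clean up empty edges to obtain a strictly smaller $\beta$-acyclic hypergraph, and recurse until only empty edges remain.

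The main obstacle is this existence claim, and I would attack it by contraposition: if \emph{no} vertex of $H$ is a nest-point, I aim to exhibit a $\beta$-cycle. For every vertex $v$ the set $\neigh(v)$ is then not a chain, so one can fix incomparable edges $e,e'\ni v$ together with witnesses $a\in e\setminus e'$ and $b\in e'\setminus e$. Starting from any vertex and such a pair, I would grow an alternating sequence of edges and vertices, at each step using an incomparability witness to move to a genuinely new edge, until an edge repeats, yielding a closed alternating sequence with $v_i\in e_i\cap e_{i+1}$ and distinct edges and vertices. Choosing such a closed sequence of minimal length, I would argue that minimality forces exactly the defining conditions of a weak $\beta$-cycle --- in particular that each $v_i$ is private to its two neighbouring edges, since any additional incidence would furnish a strictly shorter closed sequence (a ``chord''-removal argument analogous to extracting a chordless cycle from a closed walk in a graph). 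Verifying that minimality yields precisely the private-vertex condition, rather than merely a Berge-type cycle, is the delicate combinatorial core, and is where I expect the argument of Duris~\cite{DBLP:journals/ipl/Duris12} to require the most care.
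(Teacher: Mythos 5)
The paper itself gives no proof of Proposition~\ref{prop:acyc.elim}; it is imported verbatim from Duris~\cite{DBLP:journals/ipl/Duris12}, so your attempt has to stand on its own. Most of it does. Your key lemma --- a nest-point $v$ cannot occur as one of the vertices $v_i$ of a $\beta$-cycle --- is correct, and the index bookkeeping (using $v_{i-1}$ against $e_i\subseteq e_{i+1}$ and $v_{i+1}$ against the reverse inclusion, read cyclically with $n\geq 3$) checks out. The ($\Leftarrow$) direction built on it is complete: deleting a nest-point only strips incidences from the cycle edges, your distinctness argument for the sets $f_i=e_i\setminus\{v\}$ is right, each $f_i$ survives the empty-edge cleanup because it still contains $v_{i-1}$ and $v_i$, and empty-edge removal is harmless. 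The reduction of ($\Rightarrow$) to the claim that every $\beta$-acyclic hypergraph with a non-empty edge has a nest-point is also the right move, and the lifting of a $\beta$-cycle of $H-\{v\}$ back to $H$ is sound.

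The genuine gap is that existence claim, which you reduce to but do not prove --- and it is the entire content of the forward direction. Your sketch (grow a closed alternating edge/vertex sequence from incomparability witnesses, take one of minimal length, and argue that minimality forces the private-vertex condition by chord removal) hits a concrete obstruction exactly where you suspect it would: removing a chord $v_i\in e_j$ with $j\notin\{i,i+1\}$ shortens the closed sequence only if the surviving arc still has length at least $3$. When $n=3$, or when the chord joins next-but-one edges, the shortcut produces a length-$2$ closed sequence, which is not a $\beta$-cycle, so minimality alone does not deliver the private-vertex condition; a vertex lying in all three edges of a minimal triangle is not excluded by anything you have said. There is also a prior issue of showing that the growth process closes up into a sequence of length at least $3$ at all, rather than oscillating between two incomparable edges. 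Repairing this requires re-invoking the hypothesis that \emph{no} vertex is a nest-point to re-route through fresh witnesses (or minimizing a more carefully chosen quantity), which is precisely the delicate part of Duris's argument. As written, the forward direction is an honest but incomplete sketch; the rest of the proof is correct.
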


Join trees have been successfully generalized to hypertree decompositions.
A \emph{hypertree decomposition}~\cite{DBLP:journals/jcss/GottlobLS02} of a hypergraph $H$ is a tuple $\hdecomp$, where $T$ is a rooted tree, for every node $u$ of the tree, $B_u \subseteq V(H)$ is called the \emph{bag} of node $u$, and $\lambda_u \subseteq E(H)$ is the \emph{cover} of $u$.
Furthermore, $\hdecomp$ must satisfy the following properties.
\begin{enumerate}
\item The subgraph $T_v  =\{u \in T \mid v \in B_u\}$ for vertex $v \in V(H)$ is a tree.
\item For every $e \in E(H)$ there exists a $u \in T$ such that $e \subseteq B_u$.
\item For every node $u$ in $T$ it holds that $B_u \subseteq \bigcup \lambda_u$.
\item Let $T_u$ be the subtree of $T$ rooted at node $u$ and let $B(T_u)$ be the union of all bags of nodes in $T_u$.
  For every node $u$ in $T$ it holds that $\bigcup \lambda_u \cap B(T_u) \subseteq B_u$.
\end{enumerate}
The first property is commonly referred to as the \emph{connectedness condition} and the fourth property is called the \emph{special condition}. The \emph{hypertree width} ($\hw$) of a hypertree decomposition is $\max_{u\in T}(|\lambda_u|)$ and the hypertree width of $H$ ($\hw(H)$) is the minimal width of all hypertree decompositions of $H$.

If we exclude the special condition in the above list of properties, we obtain the definition of a \emph{generalized hypertree decomposition}. The \emph{generalized hypertree width} of hypergraph $H$ ($\ghw(H)$) is defined analogously to before as the minimal width of all generalized hypertree decompositions of $H$.

It is known that $hw(H)=1$ if and only if $H$ is
$\alpha$-acyclic~\cite{DBLP:journals/jcss/GottlobLS02}. Analogous to
the definition of $\beta$-acyclicity in terms of every subhypergraph
being $\alpha$-acyclic, Gottlob and
Pichler~\cite{DBLP:conf/icalp/GottlobP01} introduced
\emph{$\beta$-hypertree width} \bhw$(H) = \max \{ \hw(H') \mid H'$ is a subhypergraph of $ H\}$. Note that
we therefore also have \bhw$(H) = 1$ if and only if $H$ is
$\beta$-acyclic.

We will make some comparisons to some further well-known width notions of hypergraphs: treewidth and clique width of the primal and incidence graph. The technical details of these concepts are of no importance in this paper and we refer to~\cite{DBLP:conf/icalp/GottlobP01} for full definitions.

\subsection{Conjunctive Queries}
\label{sec:cqprelim}

A \emph{signature} $\sigma$ is a finite set of relation symbols with associated arities. We write $ar(R)$ for the arity of relation symbol $R$. A \emph{database} $D$ (over signature $\sigma$) consists of a \emph{finite} domain $Dom$ and a relation $R^D$ for each relation symbol $R$ in the signature.

A \emph{conjunctive query with negation} (over signature $\sigma$) is a set of literals. A literal is
of the form $L(v_1, \dots, v_m)$ where $v_1,\dots,v_m$ are variables and $L$ is either $R$ or $\neg R$ for any $m$-ary relation symbol $R$ in $\sigma$. If $L$ is of the form $R$ we call the literal \emph{positive}, otherwise, if it is of the form $\neg R$ we say that the literal is \emph{negative}.
We commonly refer to a \cqneg simply as \emph{query}. We write $vars(q)$ for the set of all variables that occur in the literals of query $q$. We sometimes denote queries like logical formulas, i.e., $R_1(\vec{v_1}) \land \cdots  \land R_n(\vec{v_n})$ with the understanding that the query is simply the set of all conjuncts.

Let $q$ be a query and $D$ a database over the same signature. We call a function $a \colon vars(q) \to Dom$ an \emph{assignment} for $q$. For a set of variables $X$ we write $a[X]$ for the assignment with domain restricted to $X$. In a slight abuse of notation we also write $a[\vec{v}]$ for the tuple $(a(v_1),\dots,a(v_n))$ where $\vec{v}=(v_1,\dots,v_n)$ is a sequence of variables. An \emph{extension} of an assignment $a \colon \mathit{Vars} \to Dom$ is an assignment $a' \colon \mathit{Vars}' \to Dom$ with $\mathit{Vars}' \supset \mathit{Vars}$ and $a(v) = a'(v)$ for every variable $v \in \mathit{Vars}$.

We say that the assignment $a$ \emph{satisfies a positive literal} $R(\vec{v})$ if $a[\vec{v}]\in R^D$. Similarly, $a$ \emph{satisfies a negative literal} $\neg R(\vec{v})$ if $a[\vec{v}] \not \in R^D$. An assignment \emph{satisfies a query} $q$ (over database $D$) if it satisfies all literals of $q$. We write $q(D)$ for the set of all satisfying assignments for $q$ over $D$. We can now define the central decision problem of this paper.
\begin{problem}[framed]{\cqnegprob}
   Instance: & A \cqneg $q$ and a database $D$ \\
   Question: & $q(D) \neq \emptyset$?
\end{problem}

A query $q$ has an associated hypergraph $H(q)$. The
vertices of $H(q)$ are the variables of $q$. Furthermore, $H(q)$ has
an edge $\{v_1,\dots, v_n\}$ if and only if there exists a literal
$R(v_1,\dots,v_n)$ or $\neg R(v_1,\dots,v_n)$ in $q$.

To simplify later arguments we will assume that every relation symbol
occurs only once in a query. We will therefore sometimes write the
relation symbol, without the variables, to identify a literal.  Note
that every instance of \cqnegprob can be made to satisfy this
property, by copying and renaming relations, in linear time.

Finally, for our algorithmic considerations we assume a reasonable representation of queries and databases. In particular we assume that a relation $R$ has a representation of size $\norm{R}=O(|R|\cdot ar(R)\cdot \log Dom)$. Accordingly, we assume the \emph{size of a database} $D$ as $\norm{D} = \norm{Dom} + \sum_{R\in \sigma} \norm{R}$ and the \emph{size of a query} $q$ as $\norm{q} = O\left(\sum_{R \in \sigma} ar(R) \log|vars(q)|\right)$.
Finally, we refer to the cardinality of the largest relation in $D$ as $|R_{max}(D)|=\max_{R\in \sigma}|R^D|$. When the database is clear from the context we write just $|R_{max}|$.

\section{Nest-Set Width}
\label{sec:nsw}
In this section we introduce nest-set width and establish some of its basic
properties.  The crucial difference between \bhw and $\nsw$ is that the
generalization is based on a different characterization of
$\beta$-acyclicity. While \bhw generalizes the condition of every subgraph having a
join tree, nest-set width instead builds on the
characterization via nest point elimination from Proposition~\ref{prop:acyc.elim}. We start by generalizing nest points to \emph{nest-sets}:

\begin{definition}[Nest-Set]
  Let $H$ be a hypergraph. A non-empty set $s \subseteq V(H)$ of vertices is called
  a \emph{nest-set} in $H$ if the set
  \[
    \neight(s, H) := \{ e\setminus s  \mid e \in \neigh(s, H)\}
  \]
  is linearly ordered by $\subseteq$.
\end{definition}

As the comparability by $\subseteq$ of sets minus a nest-set will appear frequently,
 we introduce explicit notation for it.  Let $H$ be a
hypergraph and $s \subseteq V(H)$. For two sets of vertices
$V,U \subseteq V(H)$, we write $V \subseteq_s U$ for
$V \setminus s \subseteq U \setminus s$.  We could thus alternatively
define nest-sets as those sets $s$ for which $\neigh(s,H)$ is linearly
ordered by $\subseteq_s$.

In later sections, the maximal elements with respect to $\subseteq_s$
will play an important role. For a nest-set $s$ we will refer to a maximum
edge in $\neigh(s)$ w.r.t. $\subseteq_s$ as a \emph{guard} of $s$.
Note that there may be multiple guards. However, in all of the following usage
it will make no difference which guard is used and we will implicitly
always use the lexicographically first one (and thus refer to
\emph{the} guard).

Like for nest points, we want to investigate how a hypergraph can
be reduced to the empty hypergraph by successive removal of nest
sets. We formalize this notion in the form of \emph{nest-set elimination orderings}.

\begin{definition}[Nest-Set Elimination Ordering]
  \label{def:neo}
  Let $H$ be a hypergraph and let $\calO = (s_1, \dots, s_q)$ be a
  sequence of sets of vertices.  Define $H_0 = H$ and $H_i := H_{i-1}-s_i$.  We call $\calO$ a \emph{nest-set
    elimination ordering} (NEO) if, for each $i \in [q]$, $s_i$ is a
  nest-set of $H_{i-1}$ and $H_{q}$ is the empty hypergraph.
\end{definition}

Note that an elimination ordering is made up of at most $|V(H)|$
nest-sets. We are particularly interested in how large the nest-sets
have to be for a NEO to exist. Hence, we introduce notation for
restricted-size nest-sets and NEOs:

\begin{itemize}
\item If $s$ is a nest-set of $H$ with at most $k$ elements then we
  call $s$ a \emph{$k$-nest-set}.
\item A nest-set elimination ordering that consists of only
  $k$-nest-sets is a \emph{$k$-nest-set elimination ordering} ($k$-NEO).
\item Finally, the \emph{nest-set width} $\nsw(H)$ of a hypergraph $H$
  is the lowest $k$ for which there exists a $k$-NEO.
\end{itemize}

It is easy to see that a hypergraph has a $1$-nest-set $\{v\}$ if and only if $v$ is a nest point.
Therefore, a $1$-NEO corresponds directly to a sequence of nest point deletions that eventually result in the empty hypergraph. As this is exactly the characterization of $\beta$-acyclicity from Proposition~\ref{prop:acyc.elim}, we see that $\nsw$ generalizes $\beta$-acyclicity.

\begin{corollary}
  A hypergraph $H$ has $\nsw(H)=1$ if and only if $H$ is $\beta$-acyclic.
\end{corollary}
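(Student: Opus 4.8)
The plan is to reduce the statement directly to Proposition~\ref{prop:acyc.elim} by showing that $1$-NEOs are exactly nest-point elimination sequences. Since every nest-set is non-empty, any NEO consists of sets of size at least one, so $\nsw(H)\geq 1$ always holds and $\nsw(H)=1$ is equivalent to the existence of a $1$-NEO. It therefore suffices to identify $1$-NEOs with the elimination process of the proposition.

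The key observation I would isolate is that a singleton $\{v\}$ is a nest-set of $H$ if and only if $v$ is a nest point of $H$, i.e. $\neigh(\{v\},H)$ is linearly ordered by $\subseteq$. To prove this, I would note that the map $e \mapsto e\setminus\{v\}$ is a bijection from $\neigh(\{v\},H)$ onto $\neight(\{v\},H)$: it is surjective by the definition of $\neight$ and injective because every $e\in\neigh(\{v\},H)$ contains $v$, so $e$ is recovered as $(e\setminus\{v\})\cup\{v\}$. Moreover this bijection is an order isomorphism for $\subseteq$, since for edges $e,f$ both containing $v$ one has $e\subseteq f$ iff $e\setminus\{v\}\subseteq f\setminus\{v\}$ (the forward direction is immediate, and the backward direction uses that both edges contain $v$). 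Hence $\neigh(\{v\},H)$ is linearly ordered exactly when $\neight(\{v\},H)$ is, so $v$ is a nest point exactly when $\{v\}$ is a nest-set.

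With this correspondence, a $1$-NEO $(s_1,\dots,s_q)$ is precisely a sequence of vertices $v_1,\dots,v_q$, writing $s_i=\{v_i\}$, such that each $v_i$ is a nest point of $H_{i-1}$ and $H_q$ is empty. The only point requiring care is to match the deletion step $H_i = H_{i-1}-s_i$ with the ``removal of nest-points and empty-edges'' of Proposition~\ref{prop:acyc.elim}. This follows from the definition of the vertex-induced subhypergraph: forming $H[V(H)\setminus\{v\}]$ deletes $v$ from every incident edge and, because $E(H[U])$ explicitly excludes $\emptyset$, simultaneously discards any edge that consisted only of $v$. Thus a single NEO step realizes exactly one nest-point removal together with the empty-edge cleanup of the proposition, and a $1$-NEO reaching the empty hypergraph exists iff the empty hypergraph is reachable by successive nest-point and empty-edge removal, which by Proposition~\ref{prop:acyc.elim} holds iff $H$ is $\beta$-acyclic.

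The main (and essentially only) obstacle is the bookkeeping in this final alignment. One must confirm that the collapsing of edges which become equal after deleting $v$ --- forced because $E(H[U])$ is a set --- does not affect the argument, and that the nest-point property is evaluated in the successively reduced hypergraphs $H_{i-1}$ rather than in $H$ itself. Both are automatically consistent with the inductive structure of the elimination process, so no genuine difficulty arises beyond stating the correspondence carefully.
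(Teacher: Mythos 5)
Your proposal is correct and follows exactly the route the paper takes (the paper only sketches it): identify singleton nest-sets with nest points via the order-preserving bijection $e \mapsto e\setminus\{v\}$ on $\neigh(\{v\},H)$, and then invoke Proposition~\ref{prop:acyc.elim}. Your additional bookkeeping about the induced-subhypergraph operation discarding empty edges and collapsing duplicates is a correct and careful filling-in of details the paper leaves implicit.
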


\begin{example}
  \label{ex:nsw}
  Let $H_0$ be the hypergraph with edges $\{a,b,c,d\}$, $\{a,d,e\}$, $\{c,d,f\}$, $\{b,e\}$, and $\{c,f\}$.
  Figure~\ref{fig:nswex} illustrates the step-wise elimination of $H_0$ according to the $2$-NEO
  $(\{c,f\}, \{b,e\}, \{a,d\})$.

  For the first nest-set $s_1=\{c,f\}$ we see that
  $\neigh(s_1, H_0)= \{\{a,b,c,d\}$, $\{c,d,f\}, \{c,f\}\}$ and
  $\neight(s_1,H_0) = \{\{a,b,d\}, \{d\}, \emptyset\}$.
  To verify that $s_1$ is a nest-set of $H_0$ we observe that
  $\{a,b,d\} \supseteq \{d\} \supseteq \emptyset$. Note that $\{f\}$ is also a nest-set of $H_0$ whereas $\{c\}$ is not since $\{a,b,d\}$ and $\{d,f\}$ are both in $\neight(\{c\},H_0)$ and clearly neither $\{a,b,d\} \subseteq \{d,f\}$ nor $\{a,b,d\}\supseteq \{d,f\}$ holds.

  In the second step of the elimination process we then consider $H_1 = H_0 - \{c,f\}$ and the nest-set $s_2=\{e,b\}$.
  It is again straightforward to verify that $\neight(s_2,H_1)=\{\{a,d\},\emptyset\}$ is linearly ordered by $\subseteq$.
  This is in fact the only nest-set of $H_1$.
  The third nest-set in the NEO, $s_3=\{a,d\}$ only becomes a nest-set
  after elimination of $s_2$: observe that $\neight(s_3,H_1)=\{\{e\},\{b\},\emptyset\}$ which is not linearly ordered by $\subseteq$.

  In the final step, $H_2 = H_1 - \{e,b\}$ only has two vertices
  left. The set of all vertices of a hypergraph is trivially a
  nest-set since $\neight(V(H),H)$ is always $\{\emptyset\}$. Thus, the set $V(H_2)=\{a,d\}$ is a nest-set of $H_2$. The hypergraph $H_0$ has no $1$-NEO (it has a $\beta$-cycle)
  and therefore $\nsw(H_0)=2$.
\end{example}

\begin{figure}[b]
  \centering
  \includegraphics[width=0.9\columnwidth]{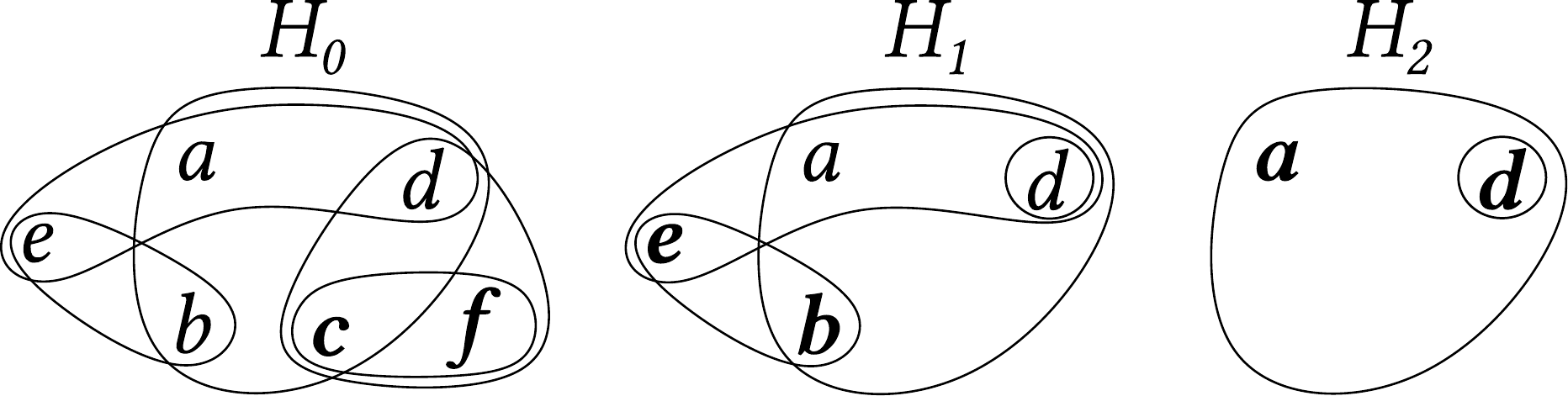}
    \Description{A visual representation of the hypergraphs described in Example~\ref{ex:nsw}}
  \caption{The nest-set elimination from Example~\ref{ex:nsw}}
  \label{fig:nswex}
\end{figure}

An important difference between $\alpha$- and $\beta$-acyclicity is
that only the latter is \emph{hereditary}, i.e., if hypergraph $H$ is
$\beta$-acyclic then so is every subhypergraph of $H$.  Nest-set
width, just like $\beta$-acyclicity and $\beta$-hypertree width, is indeed
also a hereditary property. In the following two simple but important lemmas, we first establish that NEOs remain valid
when vertices are removed from the hypergraph (and the NEO) and then
show that this also applies to removing edges.

Note that the construction in the following lemma, and Lemma~\ref{lem:nsw.hereditary} below, can technically create empty sets in the resulting NEOs. Formally speaking this is not allowed (recall that nest-sets are non-empty).
Whenever this occurs the implicit meaning is that all the empty sets are removed from the NEO.

\begin{lemma}
  \label{lem:nsw.induced.sub}
  Let $H$ be a hypergraph with $k$-NEO $\calO = (s_1, s_2, \dots, s_\ell)$ and let $r \subseteq V(H)$.
  Then the sequence $\calO' =(s_1 \setminus r, s_2 \setminus r, \dots, s_\ell \setminus r)$ is a $k$-NEO of $H-r$.
\end{lemma}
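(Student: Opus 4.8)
The plan is to reduce the claim to a single statement about individual nest-sets and then lift it along the elimination sequence. Write $H_0 = H$, $H_i = H_{i-1} - s_i$ for the intermediate hypergraphs of $\calO$, and $H'_0 = H - r$, $H'_i = H'_{i-1} - (s_i \setminus r)$ for those of $\calO'$. The two ingredients are: (i) a local lemma stating that a nest-set stays a nest-set after deleting an arbitrary vertex set, and (ii) the observation that the primed intermediate hypergraphs are exactly the unprimed ones with $r$ deleted, i.e.\ $H'_i = H_i - r$.

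For (i) I would prove: if $s$ is a nest-set of a hypergraph $G$, $r \subseteq V(G)$, and $s \setminus r \neq \emptyset$, then $s \setminus r$ is a nest-set of $G - r$. The computation is direct. Every edge of $G - r$ has the form $e \setminus r$ with $e \in E(G)$, and $e \setminus r$ is incident to $s \setminus r$ exactly when $(e \cap s) \setminus r \neq \emptyset$; in particular such an $e$ lies in $\neigh(s, G)$ and the edge $e\setminus r$ is automatically non-empty. A short set-algebra computation gives $(e \setminus r) \setminus (s \setminus r) = (e \setminus s) \setminus r$, so that
\[
  \neight(s \setminus r, G - r) \subseteq \{\, A \setminus r \mid A \in \neight(s, G) \,\}.
\]
Since $s$ is a nest-set, $\neight(s,G)$ is a $\subseteq$-chain, and the map $A \mapsto A \setminus r$ is monotone ($A \subseteq B$ implies $A \setminus r \subseteq B \setminus r$) and hence preserves comparability; thus its image is again a chain, and so is its subset $\neight(s \setminus r, G - r)$. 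Therefore $s \setminus r$ is a nest-set of $G - r$, which is the crux of the whole argument.

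For (ii) I would argue by induction that $H'_i = H_i - r$. The base case is $H'_0 = H - r = H_0 - r$. For the step I rely on the fact that vertex deletion is additive and order-independent, $G - A - B = G - (A \cup B)$, which is immediate from $(H[U])[U'] = H[U']$ for $U' \subseteq U$. Then $H'_i = (H_{i-1} - r) - (s_i \setminus r) = H_{i-1} - (r \cup s_i) = (H_{i-1} - s_i) - r = H_i - r$, using the induction hypothesis in the first equality.

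Assembling the two pieces finishes the proof. Whenever $s_i \setminus r \neq \emptyset$, part (i) applied to the nest-set $s_i$ of $H_{i-1}$ with the set $r$, together with $H'_{i-1} = H_{i-1} - r$ from (ii), shows that $s_i \setminus r$ is a nest-set of $H'_{i-1}$; when $s_i \setminus r = \emptyset$ the step is vacuous (it leaves the hypergraph unchanged, since $s_i \subseteq r$) and is simply dropped from $\calO'$, consistent with the stated convention on empty sets. Since $|s_i \setminus r| \leq |s_i| \leq k$, all nest-sets of $\calO'$ have size at most $k$, and because $H_\ell$ is empty we get $H'_\ell = H_\ell - r = \emptyset$, so $\calO'$ reduces $H - r$ to the empty hypergraph. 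Hence $\calO'$ is a $k$-NEO of $H - r$. The only genuinely delicate point is the set-algebra identity together with the monotonicity argument in (i); everything else is routine bookkeeping on vertex deletions.
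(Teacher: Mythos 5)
Your proposal is correct and follows essentially the same route as the paper: the crux in both is that for a nest-set $s$ and any $r \subseteq V(H)$, the set $s \setminus r$ is either empty or a nest-set of $H-r$ (your monotonicity argument via $(e\setminus r)\setminus(s\setminus r) = e\setminus(s\cup r)$ is the same computation the paper carries out inside a proof by contradiction), after which both proofs iterate along the NEO using the commutativity of vertex deletion. Your version is slightly more explicit about the bookkeeping ($H'_i = H_i - r$ and the empty-set convention), but there is no substantive difference.
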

\begin{proof}

  We first show that for any nest-set $s$ let $r \subseteq V(H)$ we
  have that $s\setminus r$ is either the empty set or a nest-set of
  $H-r$.
  Suppose $s \setminus r$ is not empty and not a nest-set of $H-r$, then there are $e_1, e_2 \in \neigh(s\setminus r, H-r)$ that are not comparable by $\subseteq_{s \setminus r}$.
  It is easy to see that there exist $e'_1, e'_2 \in \neigh(s, H)$ such that $e_1 = e'_1 \setminus r$ and $e_2 = e'_2 \setminus r$. Since $s$ is a nest-set in $H$, w.l.o.g., $e'_1 \setminus s \subseteq e'_2 \setminus s$ and therefore also
  $$
  e_1 \setminus (s \setminus r) = e'_1 \setminus (s \cup r) \subseteq e'_2 \setminus (s \cup r) = e_2 \setminus (s \setminus r)
  $$
  and we arrive at a contradiction.

  It follows that $s_1 \setminus r$ is a $k$-nest-set of $H-r$.
  Since $\calO$ is a NEO, $s_2$ must be a nest-set of $H-s_1$.
  Now, to verify $\calO'$ we need to show that $s_2\setminus r$ is a $k$-nest-set of $H-r-s_1$. However, this is clearly
  the same hypergraph as $(H-s_1)-r$ and the above obsevation applies again. We can repeat this argument for all $s_i$ until $s_\ell$ and thus $\calO'$ is a $k$-NEO.
\end{proof}

\begin{lemma}
  \label{lem:nsw.hereditary}
  Let $H$ be a hypergraph with $k$-NEO $\calO = (s_1, \dots, s_\ell)$. Let $H'$ be a connected subhypergraph of $H$
  and $\Delta = V(H) \setminus V(H')$ the set of vertices no longer present in the subhypergraph.
  Then the sequence $(s_1 \setminus \Delta, s_2 \setminus \Delta, \dots s_\ell \setminus \Delta)$ is a $k$-NEO of $H'$.
\end{lemma}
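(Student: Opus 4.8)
The plan is to factor the passage from $H$ to the subhypergraph $H'$ into two independent moves: first deleting the vertices in $\Delta$, which is exactly the situation covered by Lemma~\ref{lem:nsw.induced.sub}, and then deleting the surplus edges. Concretely, I would first invoke Lemma~\ref{lem:nsw.induced.sub} with $r = \Delta$ to obtain that $\calO' = (s_1 \setminus \Delta, \dots, s_\ell \setminus \Delta)$ is a $k$-NEO of the vertex-induced subhypergraph $H - \Delta$. Since $\Delta = V(H)\setminus V(H')$, this hypergraph has vertex set exactly $V(H')$. Moreover, every edge $e \in E(H')$ satisfies $e \subseteq V(H') = \bigcup E(H')$, so $e \cap V(H') = e$, and hence $E(H') \subseteq E(H - \Delta)$. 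Thus $H'$ is obtained from $H - \Delta$ purely by deleting edges, with the vertex set left untouched.

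The heart of the argument is then an edge-deletion observation: if $G$ and $G''$ are hypergraphs with $V(G'') = V(G)$ and $E(G'') \subseteq E(G)$, then every $k$-NEO $(t_1, \dots, t_m)$ of $G$ is also a $k$-NEO of $G''$. I would prove this by induction along the elimination, maintaining the invariant that the $i$-th intermediate hypergraphs satisfy $V(G_i'') = V(G_i)$ and $E(G_i'') \subseteq E(G_i)$. The invariant is preserved because both $G_i$ and $G_i''$ arise from $G_{i-1}$ and $G_{i-1}''$ by the same vertex-induced deletion $-\,t_i$, and that operation is monotone in the edge set. Given the invariant, $\neigh(t_i, G_{i-1}'') \subseteq \neigh(t_i, G_{i-1})$, so $\neight(t_i, G_{i-1}'')$ is a subcollection of $\neight(t_i, G_{i-1})$; since $t_i$ is a nest-set of $G_{i-1}$, this collection is linearly ordered by $\subseteq$, and any subfamily of a $\subseteq$-linearly-ordered family is again linearly ordered. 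Hence $t_i$ is a nest-set of $G_{i-1}''$. Since deletion never enlarges a set, all $t_i$ keep at most $k$ elements, and finally $G_m''$ has the same empty vertex set as $G_m$, so it is the empty hypergraph. Applying this with $G = H - \Delta$ and $G'' = H'$ gives that $\calO'$ is a $k$-NEO of $H'$, as required.

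The only genuinely delicate point is the bookkeeping in the inductive invariant: one must check that deleting the same nest-set on both sides keeps the vertex sets synchronized and the edge inclusion intact, and that the convention of silently dropping empty sets $s_i \setminus \Delta$ from the sequence does not disturb this correspondence (each such drop is simply a no-op step $H_i' = H_{i-1}'$ in the $H'$-process). The mathematical content — that edge deletion can only shrink $\neight$, and that a subfamily of a linearly ordered family stays linearly ordered — is immediate. I would also remark that connectedness of $H'$ is not actually exploited anywhere in this argument; it is a harmless strengthening of the hypothesis, presumably stated in the form in which the lemma is later applied.
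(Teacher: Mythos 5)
Your proposal is correct and follows essentially the same route as the paper: it combines the vertex-deletion argument of Lemma~\ref{lem:nsw.induced.sub} with the observation that $E(H')\subseteq E(H-\Delta)$ makes $\neight(\cdot,H')$ a subfamily of $\neight(\cdot,H-\Delta)$, hence still linearly ordered, and iterates this along the NEO. Your explicit two-stage factorization and induction invariant merely make the paper's ``iterate the same observation'' step more precise, and your remark that connectedness of $H'$ is never used is accurate.
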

\begin{proof}
  From the argument at the beginning of the proof of Lemma~\ref{lem:nsw.induced.sub} we know that $s \setminus \Delta$ is empty or a nest-set of $H-\Delta$. Therefore, $\neight(s \setminus \Delta, H-\Delta)$ has a linear order under $\subseteq$. Now, since $H'$ does not contain any vertices from $\Delta$ and is a subhypergraph of $H$ we have $E(H')=E(H'-\Delta) \subseteq E(H-\Delta)$ and thus $\neight(s \setminus \Delta, H') \subseteq \neight(s \setminus \Delta, H-\Delta)$. Therefore $\neight(s \setminus \Delta, H')$ can be linearly ordered by $\subseteq$ and thus $s \setminus \Delta$ is a nest-set.
  This observation can again be iterated along the NEO in the same fashion as in the proof of Lemma~\ref{lem:nsw.induced.sub} to prove the statement.
\end{proof}

\section{Nest-Set Width vs. $\beta$-Hypertree Width}
\label{sec:betahw}

A wide variety of hypergraph width measures have been studied in the
literature. To provide some context for the later algorithmic results,
we will first investigate how $\nsw$ relates to a number of prominent
width notions from the literature.
In particular, in this section we show that
$\nsw$ is a specialization of $\beta$-hypertree width and incomparable
to primal and incidence clique width and treewidth. The relationship
to $\beta$-hypertree width is of particular interest since bounded
\bhw also generalizes $\beta$-acyclicity. The section is structured
around proving the following theorem.

\begin{theorem}
  \label{thm:betahw}
  Bounded $nsw$ is a strictly less general property than bounded $\beta$-hw. In particular, the following two statements hold:
  \begin{enumerate}
  \item For every hypergraph $H$ we have $\beta$-hw$(H) \leq 3\, \nsw(H)+1$.
    \label{thm:betahw.1}
  \item There exists a class of hypergraphs with bounded $\beta$-hw and unbounded $nsw$.
    \label{thm:betahw.2}
  \end{enumerate}
\end{theorem}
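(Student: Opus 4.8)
For statement~(1), the goal is to turn a $k$-NEO into a generalized hypertree decomposition of bounded width for every subhypergraph simultaneously, since $\beta\text{-hw}$ is the maximum $\hw$ over subhypergraphs. The natural idea is to build a \emph{path-like} decomposition directly from the elimination order $\calO = (s_1, \dots, s_\ell)$. Reading the NEO from last to first gives a way to process the hypergraph bottom-up. At the step where nest-set $s_i$ is eliminated from $H_{i-1}$, all edges incident to $s_i$ are linearly ordered by $\subseteq_{s_i}$, so they are ``guarded'' by a single edge, the guard $g_i$. I would create a decomposition node whose bag is $s_i$ together with the vertices of $H_{i-1}$ that interact with $s_i$ through its neighborhood, and whose cover is $g_i$ plus at most a constant number of additional edges needed to cover $s_i$ itself. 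The key observation driving the $3\,\nsw(H)+1$ bound is that $s_i$ has at most $k = \nsw(H)$ vertices, so covering $s_i$ costs at most $k$ edges; the guard edge covers everything in the neighborhood outside $s_i$; and the linear-order structure keeps the ``boundary'' that must be remembered from earlier steps small — bounded by another $2k$ or so terms. The arithmetic $3k+1$ presumably comes from bounding the cover by $s_i$'s own edges (up to $k$), plus guards tracking the interface in two directions, plus the single guard edge. I would verify the connectedness condition by checking that each vertex appears in a contiguous block of nodes along the path induced by the order in which it is eliminated.

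For statement~(2), I need a concrete family of hypergraphs that is $\beta$-hw-bounded but forces $\nsw$ to grow. Since $\nsw = 1$ exactly captures $\beta$-acyclicity and $\beta\text{-hw}$ also equals $1$ there, the separating family must have $\beta\text{-hw}$ bounded by some constant larger than $1$ (say $2$ or $3$) while $\nsw \to \infty$. The natural candidate is a grid-like or ``crown''-style construction where every subhypergraph admits a width-$c$ hypertree decomposition, but no small nest-set ever exists: one wants a hypergraph where every set $s$ of bounded size has two incident edges incomparable under $\subseteq_s$, so elimination of small sets is blocked at every stage and stays blocked under deletions. I would look for something resembling the incidence structure of a projective plane or a suitably dense bipartite-like hypergraph where each candidate small nest-set is ``crossed'' by two edges.

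\textbf{The main obstacle} I anticipate is statement~(2): constructing a family and proving the \emph{lower bound} $\nsw \to \infty$. Upper-bounding $\beta\text{-hw}$ by a constant is usually routine once a good decomposition template is identified, but showing that no $k$-NEO exists is a global, inductive argument — one must show that after \emph{any} sequence of small-nest-set eliminations, some incomparable pair of edges always survives to block the next small nest-set. The challenge is that eliminating a nest-set changes the hypergraph, so the lower bound cannot just inspect the original incidence structure; it must be robust under the deletions that a purported $k$-NEO performs. I would try to identify a structural invariant (e.g., a large ``incomparability core'' or a forbidden substructure that any $k$-nest-set would have to break but cannot while $k$ is small) that is preserved, or only slowly degraded, under vertex deletion, and argue it persists long enough to rule out a complete $k$-NEO. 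For statement~(1), the secondary difficulty is getting the constant exactly $3k+1$ rather than a looser bound, which will require careful accounting of precisely which edges must go into each cover to satisfy both the edge-cover property and the connectedness condition.
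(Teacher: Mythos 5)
Your proposal has genuine gaps in both halves. For statement~(2) you never actually produce a construction: you correctly observe that the hard part would be a lower bound on $\nsw$ robust under the deletions performed by a hypothetical $k$-NEO, and then go looking for projective-plane or grid-like structures. The paper's separating family is far simpler: ordinary cycle graphs $C_n$. The key fact (Lemma~\ref{lem:cycle}) is that if a nest-set $s$ contains any vertex of a $\beta$-cycle of length $\ell$, a cascading argument forces $|s|\geq \ell-1$: once $v_1\in s$, the edges $e_2$ and $e_\ell$ both lie in $\neigh(s)$ and are incomparable under $\subseteq_s$ unless $v_2$ or $v_\ell$ is also in $s$, and so on around the cycle. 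Since the very first nest-set of any NEO must touch the cycle, this blocks the first elimination step already; no invariant maintained across a sequence of deletions is needed. Meanwhile $C_n$ has hypertree width $2$ and every subhypergraph of a cycle is a disjoint union of paths or the cycle itself, so \bhw (and treewidth and clique width) stay bounded. The obstacle you flag as the main difficulty dissolves once the right example is chosen.

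For statement~(1), your path-shaped decomposition and your accounting for the constant $3k+1$ (``guards tracking the interface in two directions, plus the single guard edge \dots presumably'') do not hold together. First, after eliminating $s_1$ the remainder can split into several connected components, and forcing them onto a path means interleaving them, which threatens the connectedness condition; the paper's construction is genuinely a tree, with one subtree per connected exhaustive-hinge component, each re-attached through the guard edge $e_g$. Second, the factor $3$ and the $+1$ do not come from counting cover edges at all: the paper proves the clean bound $\ghw(H)\leq\nsw(H)$ --- each node's cover is the guard plus at most $k-1$ edges covering $s\setminus e_g$, hence at most $k$ edges --- and then invokes the known inequality $\hw(H)\leq 3\,\ghw(H)+1$ of Adler, Gottlob, and Grohe to convert generalized hypertree width into hypertree width, which is what \bhw is defined through. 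Trying to hit $3k+1$ directly in the cover sizes is aiming at the wrong target. Finally, to pass from a bound on $\hw(H)$ to one on \bhw$(H)$ you need that every subhypergraph inherits a $k$-NEO; this heredity (Lemma~\ref{lem:nsw.hereditary}) is essential but is neither established nor cited in your sketch.
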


We begin by establishing a useful technical lemma that will eventually lead us to the second statement of
Theorem~\ref{thm:betahw}. An important consequence of the following Lemma~\ref{lem:cycle} is 
 that the length (minus 1) of the longest $\beta$-cycle of $H$
is a lower bound of $nsw(H)$ since any vertex in a cycle has to be removed at some point in any NEO.

\begin{lemma}
  \label{lem:cycle}
  Let $C = (e_1, v_1,e_2,v_2 \dots, e_\ell,v_{\ell}, e_{\ell+1})$ be a $\beta$-cycle in a hypergraph $H$. For every nest-set $s$
  of $H$ we have that $|s \cap \{v_1,\dots,v_\ell\}|$ is either $0$ or at least $\ell-1$.
\end{lemma}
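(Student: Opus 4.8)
The plan is to argue by contraposition. I fix an arbitrary nest-set $s$ of $H$, assume that it meets the cycle vertices in at least one but at most $\ell-2$ of them, i.e. $1 \le |s \cap \{v_1,\dots,v_\ell\}| \le \ell-2$, and derive a contradiction by exhibiting two edges $e,e' \in \neigh(s,H)$ that are incomparable under $\subseteq_s$. Since $s$ being a nest-set means exactly that $\neigh(s,H)$ is linearly ordered by $\subseteq_s$, this rules out the assumed range and leaves only $|s \cap \{v_1,\dots,v_\ell\}| = 0$ or $|s \cap \{v_1,\dots,v_\ell\}| \ge \ell-1$, which is the claim.

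Two ingredients prepare the ground. First, I pin down how each cycle edge meets the cycle vertices: by the definition of a $\beta$-cycle, $v_j$ lies in $e_j$ and $e_{j+1}$ and in no other edge of the sequence, so (reading indices cyclically, with $e_1 = e_{\ell+1}$ and $v_0 = v_\ell$) the edge $e_i$ contains exactly the two cycle vertices $v_{i-1}$ and $v_i$. Second, I deal with the nuisance that edges may also contain vertices off the cycle by passing to restrictions: for incident edges $e,e'$, if their \emph{cycle-restrictions} $(e \setminus s) \cap \{v_1,\dots,v_\ell\}$ and $(e' \setminus s) \cap \{v_1,\dots,v_\ell\}$ are incomparable, then $e$ and $e'$ are already incomparable under $\subseteq_s$, because $e \subseteq_s e'$ would imply the corresponding inclusion after intersecting with $\{v_1,\dots,v_\ell\}$. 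Hence it suffices to find two incident edges with incomparable cycle-restrictions.

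The combinatorial core then goes as follows. I split the cycle indices into $A = \{i : v_i \in s\}$ and $B = \{i : v_i \notin s\}$, viewed cyclically; by assumption $A \ne \emptyset$ and $|B| \ge 2$. Call $v_i$ with $i \in B$ a \emph{boundary} vertex if one of its two cyclic neighbours lies in $A$. For a boundary vertex with $v_{i-1} \in A$, the edge $e_i$ is incident to $s$ (it contains $v_{i-1}\in s$) and, by the first ingredient, its cycle-restriction is $\{v_{i-1},v_i\}\setminus s = \{v_i\}$; symmetrically $e_{i+1}$ works if instead $v_{i+1}\in A$. So each boundary vertex $v_i$ produces an incident edge whose cycle-restriction is the singleton $\{v_i\}$, and two \emph{distinct} boundary vertices give two distinct singletons, which are incomparable. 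It remains to produce two distinct boundary vertices, which I do via a short case analysis on the maximal runs (blocks) of consecutive $B$-indices: since $A \ne \emptyset$, every block is flanked by $A$-indices on both sides, so its first and last vertices are boundary vertices. If there are at least two blocks, the first vertices of two different blocks are two distinct boundary vertices; if there is a single block, then $|B| \ge 2$ forces its length to be at least two, so its first and last vertices are distinct boundary vertices. Either way we get two incident edges with distinct singleton cycle-restrictions, the required contradiction.

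I expect the main obstacle to be managing the two conceptually orthogonal complications simultaneously: edges carrying vertices outside the cycle, and the cyclic combinatorics of $A$ and $B$. The restriction-to-$\{v_1,\dots,v_\ell\}$ step cleanly isolates the former by collapsing everything to singletons, and the block argument disposes of the latter; the only delicate points are the cyclic index convention and the fact that the two boundary vertices obtained are genuinely distinct, so that their singleton restrictions are truly incomparable.
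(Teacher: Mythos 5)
Your proof is correct. The paper establishes the lemma by a propagation argument: after rotating indices so that some $v_1 \in s$, the two cycle edges containing $v_1$ are both incident to $s$, and each retains a ``private'' cycle vertex not present in the other; comparability under $\subseteq_s$ therefore forces a further cycle vertex into $s$, and this step is iterated (``as long as the two edges are not adjacent in the cycle'') until all cycle edges are incident to $s$ and at least $\ell-1$ cycle vertices have been absorbed. You instead argue by contraposition with a static witness: assuming $1 \le |s\cap\{v_1,\dots,v_\ell\}| \le \ell-2$, the set of cycle vertices outside $s$ has two distinct boundary vertices, each of which supplies an incident cycle edge whose surviving cycle part is a distinct singleton, and two such edges are incomparable under $\subseteq_s$ because inclusion is preserved under intersection with $\{v_1,\dots,v_\ell\}$. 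Both arguments rest on the same elementary fact --- a cycle edge meets exactly two cycle vertices, so two edges incident to $s$ with different surviving cycle vertices cannot be nested --- but your version replaces the paper's informal ``apply the argument exhaustively'' step with a one-shot counting contradiction, which makes the termination of the argument and the bound $\ell-1$ more transparent (and avoids the index bookkeeping that trips up the paper's write-up, which misstates which edges contain $v_2$ and $v_\ell$). The price is the block/boundary case analysis on the cyclic complement, but that part is routine and you handle the only delicate points (distinctness of the two boundary vertices, and of the resulting singleton restrictions) explicitly.
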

\begin{proof}[Proof of Lemma~\ref{lem:cycle}]
  Suppose the cardinality of $s \cap \{v_1, \dots, v_\ell\}$ is not
  $0$. That is, at least one vertex of $C$ is in $s$. Since we can rotate the indices of a cycle arbitrarily we assume, w.l.o.g., that $v_1 \in s$. Then,  $e_2$ and $e_\ell$ are both in $\neigh(s)$. Recall that a $\beta$-cycles has $\ell \geq 3$ and that $v_2$ can occur only in $e_1$ and $e_2$ and no other edges. Similarly, $v_\ell$ can occur exclusively in $e_{\ell-1}$ and $e_{\ell}$.
  We therefore see that $v_2 \not \in e_\ell$ and $v_\ell \not \in e_2$. Thus, $e_2$ and $e_\ell$ can only be comparable by $\subseteq_s$ if at least one of $v_2$ or $v_\ell$ is in $s$.
  
  Suppose, w.l.o.g.,  $v_2 \in s$, then we have $e_3$ and $e_\ell$ in $\neigh(s)$ and the same argument can be applied again, as long as the two edges are not adjacent in the cycle. We can then apply the argument exhaustively, until all edges of the cycle are in $\neigh(s)$ at which point it is clear that at least $\ell-1$ vertices are necessarily in $s$.
\end{proof}

Lemma~\ref{lem:cycle} further emphasizes the aforementioned
distinction between generalizing acyclicity in sense of tree-likeness
and our approach. Any cycle graph $C_n$ has hypertree width $2$
whereas the lemma shows us that $\nsw(C_n) \geq n-1$ since any
nest-set will contain at least one vertex of the cycle, so it must
contain at least $n-1$ of them. Furthermore, cycle graphs have
clique width at most 4~\cite{DBLP:journals/dam/CourcelleO00} and
treewidth at most 2. We therefore arrive at the following lemma.

\begin{lemma}
  \label{lem:cycleclass}
  There exists a class of hypergraphs that has bounded \bhw, treewidth, and clique width and unbounded $\nsw$.
\end{lemma}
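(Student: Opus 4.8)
The plan is to take as the witnessing class the family of cycle graphs $\mathcal{C} = \{ C_n \mid n \geq 3 \}$, where each $C_n$ is viewed as a hypergraph on $n$ vertices whose $n$ edges are the pairs of consecutive vertices. Essentially all the required bounds were already assembled in the discussion preceding the lemma, so the proof mainly consists of collecting them; the single point that needs a short argument of its own is the bound on \bhw, since \bhw is a property of \emph{all} subhypergraphs of $C_n$ rather than of $C_n$ alone.

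First I would record the two standard, class-uniform graph bounds. Every cycle $C_n$ has treewidth $2$, and by~\cite{DBLP:journals/dam/CourcelleO00} clique width at most $4$; moreover the incidence graph of $C_n$ is again a cycle (namely $C_{2n}$), so the incidence clique width is bounded as well. Hence $\mathcal{C}$ has bounded treewidth and bounded (primal and incidence) clique width.

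Next I would bound \bhw$(C_n)$ by showing that $\hw(H') \leq 2$ for every subhypergraph $H'$ of $C_n$. A subhypergraph is determined by a choice of a subset of the $n$ edges. If all edges are retained, then $H' = C_n$, which has hypertree width $2$ (as noted in the preceding discussion). If at least one edge is dropped, the retained edges form a disjoint union of paths, which is a forest and therefore $\alpha$-acyclic, so $\hw(H') = 1$. Taking the maximum over all subhypergraphs yields \bhw$(C_n) = 2$, a bound uniform over $\mathcal{C}$.

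Finally, for the unboundedness of $\nsw$ I would apply Lemma~\ref{lem:cycle}. The entire vertex sequence of $C_n$ forms a $\beta$-cycle of length $n$, so the lemma guarantees that every nest-set $s$ of $C_n$ has $|s \cap V(C_n)|$ equal to $0$ or at least $n-1$. Since nest-sets are non-empty and every vertex of $C_n$ lies on this cycle, necessarily $|s| \geq n-1$; hence the first set of any NEO of $C_n$ already has at least $n-1$ elements, giving $\nsw(C_n) \geq n-1$, which grows without bound as $n \to \infty$. Combining the four bounds establishes the claim. The only genuinely non-mechanical step is the \bhw argument: one must remember that \bhw quantifies over subhypergraphs and verify that both cases --- the forests and the single full cycle --- stay within width $2$; the remaining three bounds follow directly from facts already stated or cited.
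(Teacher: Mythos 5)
Your proposal is correct and follows the same route as the paper, which establishes this lemma via the discussion preceding it: the class of cycle graphs $C_n$, with $\nsw(C_n)\geq n-1$ from Lemma~\ref{lem:cycle}, hypertree width $2$, clique width at most $4$, and treewidth $2$. The only addition is your explicit check that \emph{proper} subhypergraphs of $C_n$ are forests and hence $\alpha$-acyclic, which the paper leaves implicit when passing from $\hw(C_n)=2$ to bounded \bhw; that is a correct and worthwhile clarification but not a different approach.
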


The lemma establishes the second statement of Theorem~\ref{thm:betahw}.  We can derive some further results by combining Lemma~\ref{lem:cycleclass} with results from~\cite{DBLP:conf/icalp/GottlobP01}. There it was shown 
that there exist classes of $\beta$-acyclic hypergraphs that have unbounded clique width and treewidth. In combination with the previous lemma this demonstrates that bounded clique width and bounded treewidth are incomparable to bounded $\nsw$.
The results in~\cite{DBLP:conf/icalp/GottlobP01} also apply to incidence clique width and incidence treewidth and since the incidence graph of a cycle graph is also a cycle graph, so does Lemma~\ref{lem:cycleclass}. Thus, bounded $\nsw$ is also incomparable to bounded incidence clique width and bounded incidence treewidth.
The resulting hierarchy is summarized in Figure~\ref{fig:hierarchy} at the end of this section.

We move on to show that \bhw$(H) \leq 3\,nsw(H)+1$. We will give a procedure to construct a generalized hypertree decomposition of width $k$ from a $k$-NEO. Since $k$-NEOs are hereditary, every subhypergraph of $H$ will also have a generalized hypertree decomposition of width $k$. By a result of Adler, Grohe, and Gottlob in~\cite{DBLP:journals/ejc/AdlerGG07} we have that $\hw(H) \leq 3\,\ghw(H)+1$. From there we can then derive our bound of \bhw$(H) \leq 3\,k+1$. In particular, we make use of the observation that a nest-set is connected to the rest of the hypergraph only via its guard.
The necessary details of this observation are captured by the following two definitions and the key Lemma~\ref{lem:hinge} below.
The following construction is inspired by the hinge decompositions of Gyssens, Jeavons, and Cohen~\cite{DBLP:journals/ai/GyssensJC94}.

\begin{definition}[Exhaustive Subhypergraphs]
  Let $H$ be a hypergraph and $E' \subseteq E(H)$. Let $E^* := \{e \in E(H) \mid e \subseteq \bigcup E'\}$ be the edges covered by $E'$. Then we call the subhypergraph $H'$ with $E(H') = E(H) \setminus E^*$ the \emph{exhaustive $E'$-subhypergraph} of $H$.

  We use the term \emph{connected exhaustive $E'$-subhypergraphs} of $H$ to refer to the
  connected components of $H'$ (considering each component as an individual hypergraph).
\end{definition}

We use exhaustive subhypergraphs to express that, when we remove a set of edges $E'$ from $H$, then we also want to remove the edges $E^*$ that are covered by $\bigcup E'$. The following construction of a hypertree decomposition from a NEO will use sets of the form $\bigcup E'$ as its bags. This means that the respective bag also covers all edges in $E^*$. We are therefore interested in the components resulting from removing all of $E^*$ instead of just $E'$ from $H$.

In particular, we want to remove sets of edges $E'$ in such a way that the exhaustive $E'$-subhypergraphs are all connected to $E'$ via a single edge. This will allow us to bring together the decompositions of the subhypergraphs in a way that preserves all properties of hypertree decompositions. 

\begin{definition}[Exhaustive Hinges]
  Let $H$ be a hypergraph, $E' \subseteq E(H)$ and $C_1,\dots, C_n$ the connected exhaustive $E'$-subhypergraphs of $H$.
  For an $e \in E(H)$ we say that $E'$ is an \emph{exhaustive $e$-hinge} if for every $i \in [n]$
  we have that $V(C_i) \cap \bigcup E' \subseteq e$.
\end{definition}

\begin{lemma}
  \label{lem:hinge}
  Let $s$ be a $k$-nest-set of hypergraph $H$ and let $e_g$ be the guard of $s$. Then there
  exists an exhaustive $e_g$-hinge $E' \subseteq E(H)$ with the following properties:
  \begin{enumerate}

  \item $\bigcup \neigh(s, H) \subseteq \bigcup E'$
  \item $|E'| \leq k$
  \end{enumerate}
\end{lemma}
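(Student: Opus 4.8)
The plan is to build $E'$ directly from $s$ and its guard $e_g$, and then to check the three requirements separately, with the exhaustive $e_g$-hinge property being the only nontrivial one. The organizing observation is that $e_g$ is a $\subseteq_s$-maximal edge of $\neigh(s,H)$, so $e \setminus s \subseteq e_g \setminus s$ for every $e \in \neigh(s,H)$. This immediately yields $\bigcup_{e \in \neigh(s,H)}(e \setminus s) = e_g \setminus s$, and hence $\bigcup \neigh(s,H) \subseteq s \cup e_g$; conversely both $s$ and $e_g$ are covered by $\neigh(s,H)$, so $\bigcup \neigh(s,H)$ equals $s \cup e_g$ up to isolated vertices of $s$. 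This tells us exactly what $\bigcup E'$ must reach.

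For the construction I would set $E' := \{e_g\} \cup \{e_v \mid v \in s \setminus e_g\}$, where for each non-isolated $v \in s \setminus e_g$ I fix some edge $e_v \in \neigh(v,H) \subseteq \neigh(s,H)$ containing $v$ (isolated vertices of $s$ need no edge, as they do not occur in $\bigcup \neigh(s,H)$ and never appear in any subhypergraph). Property~1 then holds because $e_g$ covers $e_g$ together with $s \cap e_g$, while the edges $e_v$ cover the remaining vertices of $s$. Property~2 is a counting bound: since $e_g \in \neigh(s,H)$ we have $e_g \cap s \neq \emptyset$, so $|s \setminus e_g| \leq |s|-1 \leq k-1$, giving $|E'| \leq 1 + (k-1) = k$.

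The heart of the argument is the hinge property, which I would derive from two facts. First, every edge of $E'$ lies in $\neigh(s,H)$, so by the guard inequality each such edge satisfies $e \subseteq s \cup (e_g \setminus s) \subseteq s \cup e_g$; hence $\bigcup E' \subseteq s \cup e_g$. Second, I claim $\neigh(s,H) \subseteq E^*$: any $e \in \neigh(s,H)$ has $e \cap s \subseteq s \subseteq \bigcup E'$ and $e \setminus s \subseteq e_g \setminus s \subseteq \bigcup E'$, so $e \subseteq \bigcup E'$ and $e$ is therefore covered and removed in the exhaustive $E'$-subhypergraph $H'$. Consequently every edge incident to $s$ disappears in $H'$, so no vertex of $s$ survives there, i.e.\ $V(H') \cap s = \emptyset$.

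Assembling these finishes the proof: for any connected exhaustive $E'$-subhypergraph $C_i$ we have $V(C_i) \subseteq V(H')$, so $V(C_i)\cap s = \emptyset$, and therefore $V(C_i) \cap \bigcup E' \subseteq V(C_i) \cap (s \cup e_g) = V(C_i) \cap e_g \subseteq e_g$, which is exactly the defining condition of an exhaustive $e_g$-hinge. I do not anticipate a deep obstacle; the two points requiring care are (i) confirming $\neigh(s,H) \subseteq E^*$, since everything hinges on the guard absorbing all non-$s$ neighbours of $s$ into $e_g$, and (ii) the harmless bookkeeping around isolated vertices of $s$, which are irrelevant precisely because they lie outside $\bigcup \neigh(s,H)$ and outside every subhypergraph.
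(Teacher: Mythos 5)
Your proof is correct and follows essentially the same route as the paper's: take $E'$ to be the guard $e_g$ together with an edge cover of $s \setminus e_g$ (at most $k-1$ edges since $e_g \cap s \neq \emptyset$), use the $\subseteq_s$-maximality of $e_g$ to show every edge of $\neigh(s,H)$ is absorbed into $\bigcup E' = s \cup e_g$ and hence no vertex of $s$ survives in the exhaustive $E'$-subhypergraph, and conclude that each component meets $\bigcup E'$ only inside $e_g$. The only differences are cosmetic (one edge per vertex versus a minimal edge cover, and your explicit handling of isolated vertices, which the paper glosses over).
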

\begin{proof}
  Let $s$ and $e_g$ be as in the statement. Let $\lambda$ be a minimal
  edge cover of $s \setminus e_g$.  Observe that
  $|s \setminus e_g| < k$ as $e_g$ is incident to $s$ and therefore
  $|\lambda|<k$. We now claim that $E'=\lambda \cup \{e_g\}$ is the
  required hinge. Clearly we have $|E'|\leq k$. For the first property,
  recall that for every $e \in \neigh(s,H)$ we have
  $e \setminus s \subseteq e_g$ and thus also
  $e \subseteq e_g \cup s$. It is then easy to see from the definition
  of $E'$ that $e_g \cup s \subseteq \bigcup E'$ and the property
  follows.

  What is left to show is that that $E'$ is in fact an exhaustive $e_g$-hinge.
  Let $C$ be one of the connected exhaustive $E'$-subhypergraphs of $H$ and
  partition the set $V(C) \cap \bigcup E'$ in two parts:
  $I_1 := V(C) \cap s$ and
  $I_2 := V(C) \cap \left(\left(\bigcup E'\right) \setminus s\right)$.
  
  First we argue that $I_1 = \emptyset$. It was already established that $\bigcup \neigh(s, H) \subseteq \bigcup E'$,
  thus every edge incident to $s$ is removed in the exhaustive $E'$-subhypergraph. It is therefore impossible
  for a vertex of $s$ to be in $V(C)$.

  Second, observe that by construction every edge in $E'$ is incident to $s$ and
  by definition of the guard of $s$ we thus have
  $\left(\left(\bigcup E'\right) \setminus s\right) \subseteq e_g$. It
  follows immediately that $I_1 \cup I_2 \subseteq e_g$ and the
  statement holds. 
\end{proof}

Lemma~\ref{lem:hinge} is the key lemma for our construction procedure. It tells us that we can always find a \emph{small} exhaustive hinge $E'$ in a hypergraph $H$ if it has a $k$-NEO. By the first property from the lemma, the exhaustive $E'$ subhypergraph no longer contains the vertices $s$. From the connected exhaustive $E'$-subhypergraphs we can construct subhypergraphs of $H$ that connect to $E'$ via a single edge and have shorter $k$-NEOs than $H$. Since the subhypergraphs are connected to $E'$ via a single edge, it is straightforward to combine individual hypertree decompositions for every subhypergraph into a new decomposition for $H$.
This step can then be applied inductively on the length of the $k$-NEO to construct a hypertree decomposition of width $k$ for $H$.

\begin{lemma}
  \label{lem:hw.nsw}
  For any hypergraph $H$ it holds that $\ghw(H) \leq nsw(H)$.
\end{lemma}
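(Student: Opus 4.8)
The plan is to prove $\ghw(H) \leq nsw(H)$ by induction on the length of a $k$-NEO, where $k = nsw(H)$, using Lemma~\ref{lem:hinge} as the inductive engine. The base case is the empty hypergraph, whose generalized hypertree width is trivially bounded by $k$ (or zero). For the inductive step, I would take a $k$-NEO $\calO = (s_1, \dots, s_\ell)$ of $H$ and focus on the first nest-set $s_1$ with guard $e_g$. By Lemma~\ref{lem:hinge}, there is an exhaustive $e_g$-hinge $E'$ with $|E'| \leq k$ and $\bigcup \neigh(s_1, H) \subseteq \bigcup E'$. Let $C_1, \dots, C_n$ be the connected exhaustive $E'$-subhypergraphs of $H$. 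The crucial structural fact from the hinge definition is that each $C_i$ meets $\bigcup E'$ only inside the guard edge $e_g$, i.e., $V(C_i) \cap \bigcup E' \subseteq e_g$.

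The construction I would carry out is as follows. I would build the root node $u_0$ of the decomposition with cover $\lambda_{u_0} = E'$ and bag $B_{u_0} = \bigcup E'$. Since $|E'| \leq k$, this node respects the width bound. The key point is that each $C_i$ is a subhypergraph of $H$ that does not contain any vertex of $s_1$ (by property~1 of Lemma~\ref{lem:hinge}, all edges incident to $s_1$ are removed when forming the exhaustive subhypergraph). Therefore, by Lemma~\ref{lem:nsw.hereditary}, restricting $\calO$ to $C_i$ — that is, removing the vertices $\Delta_i = V(H) \setminus V(C_i)$ from each nest-set — yields a valid $k$-NEO for $C_i$. Moreover, because $s_1$ has been entirely eliminated from each $C_i$, the restricted NEO for $C_i$ is strictly shorter than $\ell$, so the induction hypothesis applies and gives a generalized hypertree decomposition $\mathcal{D}_i$ of $C_i$ of width at most $k$. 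I would then attach the root of each $\mathcal{D}_i$ as a child of $u_0$.

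The main obstacle will be verifying that this attachment actually satisfies the \emph{connectedness condition} (property~1) and the \emph{edge cover condition} (property~2) of generalized hypertree decompositions globally, rather than just within each piece. Connectedness is where the hinge property does the work: a vertex $v$ appearing in some $C_i$ and also in $B_{u_0}$ must lie in $\bigcup E' \cap V(C_i) \subseteq e_g \subseteq \bigcup E' = B_{u_0}$; since the decompositions $\mathcal{D}_i$ live on vertex-disjoint sets except for their shared vertices in $e_g$, and those shared vertices all sit in the root bag, the subtree $T_v$ for each vertex remains connected through $u_0$. The edge-cover condition needs that every edge $e \in E(H)$ is covered by some bag: edges in $E^*$ (those with $e \subseteq \bigcup E'$) are covered by $B_{u_0}$, and all remaining edges belong to some $C_i$ and are covered within $\mathcal{D}_i$ by the induction hypothesis. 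I expect the delicate bookkeeping to be precisely the connectedness argument — confirming that a vertex shared between the root and a child component, or between two different components, can only interact through the guard edge $e_g$, which is exactly what the exhaustive $e_g$-hinge guarantees.

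Once the decomposition is assembled and all properties are checked, the width bound $\max_u |\lambda_u| \leq k$ follows immediately since the root has cover of size at most $k$ and every other node inherits the bound from some $\mathcal{D}_i$, completing the induction and establishing $\ghw(H) \leq nsw(H)$.
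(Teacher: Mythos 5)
Your overall strategy is the paper's: induct on the length of a $k$-NEO, use Lemma~\ref{lem:hinge} to get an exhaustive $e_g$-hinge $E'$ of size at most $k$, make $\bigl(\bigcup E', E'\bigr)$ the root node, and recurse on the connected exhaustive $E'$-subhypergraphs via Lemma~\ref{lem:nsw.hereditary}. But there is a genuine gap in your connectedness argument, and it is located exactly where you predicted the ``delicate bookkeeping'' would be. You recurse on $C_i$ as-is and attach ``the root of $\mathcal{D}_i$'' to $u_0$. The hinge property gives you $V(C_i)\cap\bigcup E'\subseteq e_g$, so every interface vertex $v$ lies in $B_{u_0}$ --- but that is not sufficient. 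For $T_v$ to be connected, $v$ must also appear in the bag of the node of $\mathcal{D}_i$ that is made adjacent to $u_0$; otherwise $u_0$ and the $\mathcal{D}_i$-subtree containing $v$ are separated by bags not containing $v$. Nothing forces the root of $\mathcal{D}_i$ to contain all of $V(C_i)\cap e_g$: this set need not be contained in any single edge of $C_i$ (picture $e_g=\{a,b,x\}$ with $x\in s_1$ and a path of small edges joining $a$ to $b$ inside $C_i$), so a width-$k$ decomposition of $C_i$ may have no bag at all covering the whole interface, and your assembled tree then violates the connectedness condition for at least one interface vertex.

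The paper closes this hole with a device you omitted: before recursing, it adds the guard edge $e_g$ to each component $C'_i$ to form $C_i$, applies the induction hypothesis to $C_i - s_{1,i}$ (where $s_{1,i}=s_1\cap V(C_i)$, nonempty precisely because $e_g$ was added), and uses the fact that the edge $e_g\setminus s_{1,i}$ must then be \emph{covered by some node} $u_{g,i}$ of the sub-decomposition. Re-rooting $T_i$ at $u_{g,i}$ and attaching that node to $u_0$ guarantees $B_{u_0}\cap B(T_i)\subseteq e_g\setminus s_{1,i}\subseteq B_{u_{g,i}}$, which is what connectedness actually needs. This modification has a small price the paper also pays and you would inherit: the covers of the sub-decomposition may use the artificial truncated edge $e_g\setminus s_{1,i}\notin E(H)$ and must be repaired by substituting original edges of $H$. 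Your induction measure and edge-coverage argument are otherwise fine, but without forcing a single bag to contain the whole interface with $e_g$, the construction does not go through.
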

\begin{proof}[Proof of Lemma~\ref{lem:hw.nsw}]
    We show by induction on $\ell \geq 1$ that if a hypergraph $H$ has a $k$-NEO of length $\ell$ then it has
  a generalized hypertree decomposition of width at most $k$.
  For the base case, $\ell=1$, the NEO
  consists of a single nest-set $s = V(H)$ with $|s|\leq
  k$. The base case then follows from the straightforward observation that
  $\ghw(H) \leq |V(H)|$.

  Suppose the statement holds for $\ell' <\ell$. We show that it also holds for every $k$-NEO of length
  $\ell$. Let $\calO=(s_1, \dots, s_\ell)$ be a $k$-NEO of
  $H$.  Let $e_g$ be the guard of $s_1$ and let $E'$ be the exhaustive $e_g$-hinge
  from Lemma~\ref{lem:hinge} and let $C'_1,\dots,C'_n$ be the connected exhaustive $E'$-subhypergraphs.
  Finally, for each $i \in [n]$, we add $e_g$ to $C'_i$ to obtain the hypergraph $C_i$.

  By Lemma~\ref{lem:nsw.hereditary} we see that for each $i \in[n]$,
  $C_i$ has a $k$-NEO $\calO_i = (s_{1,i}, \dots, s_{\ell,i})$ since
  it is a subhypergraph of $H$. Furthermore, according to Lemma~\ref{lem:nsw.hereditary},
  we can assume an $\calO_i$ such that $s_{1,i}\subseteq s_i$ and, since $e_g$ in $C_i$, also $s_{1,i} \neq \emptyset$.
  
  Therefore, $C_i-s_{1,i}$ has a $k$-NEO of length at most $\ell-1$ and
  we can apply the induction hypothesis to get a generalized hypertree decomposition
  $\left< T_i, (B_{u,i})_{u\in T_i}, (\lambda_{u,i})_{u\in T_i}\right>$ with $\ghw \leq k$ of $C_i-s_{1,i}$. Observe that the hypergraph
  has an edge $e_g \setminus s_{1,i}$ which has to be covered
  completely by some node $u_{g,i}$ in $T_i$.

  Let $u$ be a fresh node with $B_u = \bigcup E'$ and $\lambda_u = E'$.
  For each $i \in [n]$ we now change the root of $T_i$ to be $u_{g,i}$ and attach the tree as a child of $u$.
  A cover $\lambda_{w,i}$ of a node $w$ in $T_i$ can contain an edge $e'$ that are not in $H$ because the vertices $s_{1,i}$ are removed. As no new edges are ever added, the only possibility for $e'$ to not be in $E(H)$ is that $e' = e \setminus s_{1,i}$ for some $e \in E(H)$. We can therefore replace any such $e'$ by and edge $e\in E(H)$ in $\lambda_{w,i}$ in such a way that $B_{w,i}$ remains covered by $\lambda_{w,i}$. 

 We claim that this newly built decomposition is a generalized hypertree decomposition of $H$ with $\ghw \leq k$.
  It is not difficult to verify that this new structure indeed
  satisfies all proprieties of a generalized hypertree decomposition.

  \paragraph{Connectivity} Each subtree below the root already satisfies
    connectivity. The tree structure and the bags in the subtree remains unchanged. Furthermore, by construction of the hypergraphs $C_i$, the sets $B(T_i)$ of vertices occuring in bags of the tree $T_i$ are pairwise
    disjoint except for the vertices in $e_g$. Since $e_g$ is
    fully in $B_u$ the only issue for connectivity can arise if there
    is a vertex in $B_u \cap B(T_i)$ but not in $B_{u_{g,i}}$. We argue
    that this is impossible.
    
    Since $E'$ is an exhaustive $e_g$-hinge and $e_g$ was added back into each component
    it is easy to see that

    $$B_u \cap B(T_i) = \bigcup E' \cap V(C_i - s_{1,i}) \subseteq e_g \setminus s_{1,i}$$
    
    The rightmost term is exactly the edge that informed our choice of $u_{g,i}$, i.e., we have $B_u \cap B(T_i) = e_g \setminus s_{1,i} \subseteq B_{u_{g,i}}$ by construction.

    \paragraph{Every edge of $H$ is covered} For every edge
    $e \in E(H)$ we consider two cases. Either $e \in \neigh(s_1,H)$
    or not.  In the first case, by Lemma~\ref{lem:hinge} we have
    $e \subseteq \bigcup E'$ and therefore it is covered in the root
    node $u$. In the second case, $e \not \in \neigh(s_1,H)$, $e$ will
    occur unchanged in one of the hypergraphs $C_i - s_{1,i}$ since
    the removal of $s_{1,i}$ does not affect it (recall
    $s_{1,i}\subseteq s_i$).
    Since the tree decomposition of
    $C_i-s_{1,i}$ remain the same, except for changing which node is the root, $e$ must be
    covered in the respective subtree corresponding to component
    $C_i - s_{i,1}$.

\end{proof}

\begin{proof}[Proof of Theorem~\ref{thm:betahw} (\ref{thm:betahw.1})]
  By Lemma~\ref{lem:hw.nsw} we have that $\ghw(H) \leq \nsw(H)$. As mentioned above, we always have $\hw(H) \leq 3\,\ghw(H)+1$ for every hypergraph and therefore also $\hw(H)\leq 3\,\nsw(H)+1$. In combination with Lemma~\ref{lem:nsw.hereditary}
  we see that for every subhypergraph $H'$ of $H$ we have $hw(H') \leq 3\,nsw(H')+1 \leq 3\,nsw(H)+1$.
\end{proof}

The results of this section are summarized in Figure~\ref{fig:hierarchy}. The diagram extends the hierarchy given in~\cite{DBLP:conf/icalp/GottlobP01} by bounded $\nsw$.

\begin{figure}[t]
  \centering
  \includegraphics[width=0.8\columnwidth]{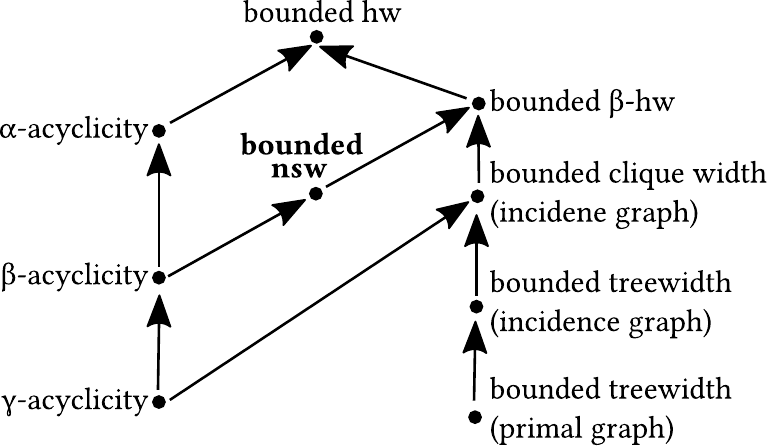}
  \Description{
    A hierarchy diagram for structural parameters summarising the results of this section. Importantly, nest-set width is strictly more general than $\beta$-acyclicity and less general than \bhw. At the same time, bounded \nsw is incomparable to bounded incidence clique-width.
  }
  \caption{Expressive power of various hypergraph properties from~\cite{DBLP:conf/icalp/GottlobP01}, extended by bounded $\nsw$. (Arcs are directed from less general to more general. Properties with no directed connection are incomparable.)}
  \label{fig:hierarchy}
\end{figure}

\section{The Complexity of Checking Nest-Set Width}
\label{sec:complexity}

For the existing generalizations of $\beta$-acyclicity -- \bhw and
$pw$ -- it is not known whether one can decide in polynomial time if
a structure has width $\leq k$, even when $k$ is a constant.  This
then also means that no efficient algorithm is known to compute the respective witnessing structures. In these situations, tractability results are
inherently limited. One must either assume that the witnesses are given
as an input or that a tractable algorithm does not use the witness at all.  In
comparison, deciding treewidth $\leq k$ is fixed-parameter tractable
when parameterized by $k$~\cite{DBLP:conf/cocoon/BodlaenderF96} and
checking hypertree width is tractable when $k$ is
constant~\cite{DBLP:journals/jcss/GottlobLS02}.

When $k$ is considered constant, it is straightforward to find a $k$-NEO in polynomial time, if one exists.
We can simply check for all combinations of up to $k$
vertices whether they represent a nest-set. If so, eliminate the nest-set and repeat from the beginning on the new hypergraph until it becomes empty. By Lemma~\ref{lem:nsw.induced.sub}, this greedy approach of always using the first found
$k$-nest-set will result in a sound and complete procedure.

However, we can improve on this straightforward case by analyzing the following decision problem where $k$ is part of the input.
\begin{problem}[framed]{\nswprob}
   Instance: & A hypergraph $H$, integer $k$ \\
   Question: & $nsw(H) \leq k$?
 \end{problem}
 We first observe that \nswprob is \np-complete in
 Section~\ref{sec:hardness}. In more positive news, we are able to
 show that \nswprob is fixed-parameter tractable when parameterized by
 $k$ in Section~\ref{sec:fpt}.  Importantly, the fixed-parameter
 algorithm explicitly constructs a $k$-NEO as a witness, if one
 exists, and can therefore serve as a basis for the algorithmic
 results in the following sections.

\subsection{\np-Hardness}
\label{sec:hardness}

In this section the following complexity result is shown.
\begin{theorem}
  \label{thm:npc}
  \nswprob is \textsc{NP}-complete.
\end{theorem}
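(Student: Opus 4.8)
The plan is to prove both directions of the \np-completeness claim. For membership in \np, the natural certificate is a $k$-NEO itself: by Definition~\ref{def:neo} it consists of at most $|V(H)|$ nest-sets, each of size at most $k$, so it has polynomial size. To verify it I would simulate the elimination $H_0 = H$, $H_i = H_{i-1} - s_i$, and at each step check that $s_i$ is a nest-set of $H_{i-1}$ and that the final $H_q$ is empty. The key to an efficient test is a reformulation of the nest-set condition that I would establish first: $s$ is a nest-set of $H$ if and only if for every pair $e, f \in \neigh(s, H)$ we have $e \setminus f \subseteq s$ or $f \setminus e \subseteq s$. This holds because $e \setminus s \subseteq f \setminus s$ is equivalent to $e \setminus f \subseteq s$, and a finite family is linearly ordered by $\subseteq$ exactly when it is pairwise comparable. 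Each test ranges over all pairs of incident edges, so the whole verification is polynomial and \nswprob $\in \np$.

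For \np-hardness I would reduce from a standard \np-complete problem; \textsc{Vertex Cover} is the most convenient fit. The pairwise characterization shows that a nest-set must, for every incomparable pair of incident edges, contain one full side of their symmetric difference. The aim is to build a hypergraph $H_G$ from a graph $G$ and bound $k$ so that the first eliminated nest-set is forced to encode a vertex cover. Concretely, I would introduce one \emph{universal} vertex $w^\ast$ placed in every edge---so that every edge is incident to any nest-set containing $w^\ast$ and thus all pairs are genuinely compared---together with gadget edges whose only unavoidable incomparabilities correspond to the edges of $G$ and are resolved precisely by putting an endpoint into $s$. If this is achieved, then $H_G$ has a nest-set of size $\le k+1$ if and only if $G$ has a vertex cover of size $\le k$. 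The other half of the construction is cheap: once such a nest-set is removed, by the nest-set property the surviving residual edges form a $\subseteq$-chain; a hypergraph whose edges are nested is $\beta$-acyclic (every vertex is a nest point), so by Proposition~\ref{prop:acyc.elim} it admits a $1$-NEO. Stacking this after the initial step yields a $(k+1)$-NEO, so $\nsw(H_G) \le k+1$ if and only if $G$ has a vertex cover of size $\le k$.

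The main obstacle is that being a nest-set is a \emph{global} chain condition on all incident edges simultaneously, so the gadget edges interact: an incomparable pair arising from two \emph{different} graph edges would impose a spurious constraint and could force unintended vertices into $s$. The crux of the proof is therefore to neutralise these cross-gadget interactions, for instance by arranging the gadget edges along a rigid ``scaffold'' that makes every pair \emph{except} the intended endpoint-choices automatically comparable regardless of $s$, so that the global chain requirement localises exactly to the vertex-cover constraints. Establishing \textbf{completeness}---that there is no cheaper nest-set, and no alternative NEO of width $\le k$, when $G$ has no small cover---is where this rigidity is used: since every edge is incident to $w^\ast$, any nest-set must resolve every endpoint-choice and hence contain a cover, while heredity of $k$-NEOs (Lemma~\ref{lem:nsw.hereditary}) together with the $\beta$-cycle lower bound of Lemma~\ref{lem:cycle} rule out circumventing the first step by eliminating in a different order. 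Verifying that the scaffold admits no spurious resolution of comparabilities by a small $s$ is the delicate, construction-specific part of the argument.
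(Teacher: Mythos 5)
Your membership argument is fine and matches the paper's: the certificate is the $k$-NEO itself, and your pairwise reformulation ($e\setminus s\subseteq f\setminus s$ iff $e\setminus f\subseteq s$, plus the fact that a finite family is a $\subseteq$-chain iff it is pairwise comparable) gives a polynomial verifier. The wrap-up of the hardness direction is also exactly the paper's move: once the first nest-set is removed all surviving incident edges form a $\subseteq$-chain, such a hypergraph is $\beta$-acyclic, so a $1$-NEO can be appended, and conversely the first element of any NEO is itself a nest-set, so hardness of \nswprob reduces to hardness of finding a single bounded-size nest-set.

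The genuine gap is that the gadget itself is never constructed, and that gadget is the entire content of the hardness proof. You correctly identify the two obstacles --- (i) cross-gadget pairs of edges must be comparable \emph{automatically}, so that the only unavoidable incomparabilities are the per-graph-edge endpoint choices, and (ii) completeness requires ruling out small nest-sets that resolve only some of the choices or cheat in some other way --- but the one concrete device you offer, a universal vertex $w^\ast$, does not address either. Nothing forces a nest-set to contain $w^\ast$, so a small $s$ avoiding it need only order the few edges it touches; and even if $w^\ast\in s$, you still need a mechanism that makes the budget count vertex-cover vertices rather than arbitrary vertices. The paper's construction supplies both mechanisms explicitly: the choice edges are the \emph{nested} sets $f_{i,1}=V^{\leq i}\setminus\{a^{(i)}\}$ and $f_{i,2}=V^{\leq i}\setminus\{b^{(i)}\}$ over $m+1$ levels, so that $f_{i,*}\supseteq f_{h,*}$ for $h<i$ and all cross-gadget pairs are comparable for free, while only the two edges at the same level force an endpoint into $s$; and \emph{linking cliques} over the $m+1$ copies of each graph vertex force any selected vertex to be selected on at least $m$ levels, which calibrates the budget to $km+k$ and guarantees $|\alpha|\leq k$. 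An additional argument (the duplicated encoding of $e_1$ at level $m+1$) ensures that any nonempty nest-set activates \emph{all} choice gadgets, which is your completeness worry. Your appeal to Lemma~\ref{lem:nsw.hereditary} and Lemma~\ref{lem:cycle} to ``rule out eliminating in a different order'' is not needed and does not substitute for this: the reduction only needs that the first set in any NEO is a nest-set of $H$. As it stands, the proposal is a correct plan with the decisive construction missing.
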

\np-hardness is demonstrated by reduction from \vcov, a classical \np-complete
problem~\cite{DBLP:conf/coco/Karp72}. 
In the \vcov problem we have as input a
graph $G=(V,E)$ and an integer $k\geq 1$. The problem is to
decide whether there exists a set $\alpha \subseteq V$ with
$|\alpha|\leq k$ such that every edge of $G$ is incident to at least
on vertex in $\alpha$. Such a set $\alpha$ is called a \emph{vertex cover} of $G$. 
To simplify the following argument we make two additional assumptions on
the instances of \vcov.  We assume that the input graph has at least 2
edges and that $k$ is strictly less than the number of edges in
$G$. If either assumption is violated the problem is trivial.

We first prove that it is \textsc{NP}-complete to decide whether a hypergraph has a $k$-nest-set.
The hardness of \nswprob then follows from the argument below, that shows that the hypergraph $H$ constructed in the reduction has a $(km+k)$-nest-set if and only if it has nest-set width at most $km+k$.

\begin{lemma}
  Deciding whether a hypergraph $H$ has a $k$-nest-set is \textsc{NP}-complete.
\end{lemma}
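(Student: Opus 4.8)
The plan is to prove both membership in \textsf{NP} and \textsf{NP}-hardness.

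\textbf{Membership in \textsf{NP}.} First I would argue that the problem is in \textsf{NP}. Given a candidate set $s \subseteq V(H)$ with $|s| \leq k$, we can verify in polynomial time that $s$ is a $k$-nest-set: compute $\neigh(s, H)$ by scanning all edges, form the collection $\neight(s, H) = \{ e \setminus s \mid e \in \neigh(s,H)\}$, and check that it is linearly ordered by $\subseteq$ by comparing all pairs. Since there are at most $|E(H)|$ sets in $\neight(s,H)$ and each comparison is linear in $|V(H)|$, the whole check runs in polynomial time. Thus $s$ itself serves as a polynomial-size certificate, establishing membership in \textsf{NP}.

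\textbf{\textsf{NP}-hardness via \vcov.} The main work is the reduction from \vcov. Given a \vcov instance $(G,k)$ with $G = (V,E)$, I would construct a hypergraph $H$ whose existence of a $k'$-nest-set (for an appropriate $k'$ computed from $k$ and the size of $G$) encodes the existence of a vertex cover of size $k$. The guiding intuition is that being a nest-set forces the edges incident to $s$, after removing $s$, to line up in a chain under $\subseteq$; to break the incomparability between two edges that overlap only partially, the set $s$ must \emph{swallow} the offending vertices. This mirrors the covering requirement of \vcov: every edge of $G$ must be ``hit'' by the chosen set. The design task is to encode each vertex of $G$ and each edge of $G$ into gadget vertices and hyperedges of $H$ so that a nest-set corresponds exactly to a vertex cover (together with some mandatory ``padding'' vertices that force the size parameter to come out as $k' = km + k$, where $m$ is the number of edges of $G$, as foreshadowed in the text preceding the lemma).

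\textbf{Key steps of the reduction analysis.} After giving the construction, I would prove the two directions of correctness. For soundness, I would show that if $\alpha$ is a vertex cover of $G$ with $|\alpha| \leq k$, then combining the gadget vertices associated with $\alpha$ (and the mandatory padding vertices) yields a set $s$ of size $km+k$ whose $\neight(s,H)$ is linearly ordered, i.e., a $(km+k)$-nest-set. For completeness, I would show conversely that any $(km+k)$-nest-set $s$ of $H$ must ``select'' a set of $G$-vertices that covers every edge of $G$: the argument would extract a vertex cover from $s$ by showing that if some edge of $G$ were left uncovered, then two corresponding hyperedges of $H$ would be incomparable under $\subseteq_s$, contradicting that $s$ is a nest-set. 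The size bound on $s$ then forces the extracted cover to have at most $k$ vertices.

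\textbf{Main obstacle.} The hardest part will be designing the gadget so that the size parameter behaves cleanly, that is, so that \emph{every} nest-set of the required size is forced to have exactly the intended structure (mandatory padding plus one gadget-block per selected vertex) and cannot ``cheat'' by using padding vertices to spuriously resolve incomparabilities or by mixing gadget vertices in unintended ways. Ensuring that the only way to reach size $km+k$ while forming a nest-set is through a genuine vertex cover of size $k$ -- and ruling out degenerate nest-sets (such as the trivial full vertex set, which is avoided by the assumption $k < |E(G)|$ and by having enough edges to keep $H$ strictly larger than any candidate) -- is the delicate combinatorial core of the proof. The extra structural assumptions on the \vcov instance (at least two edges, $k$ strictly less than the number of edges) are there precisely to exclude these trivial cases, so I would lean on them at exactly this point.
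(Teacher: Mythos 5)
Your high-level strategy matches the paper's: membership by guess-and-check, hardness by reduction from \vcov{} with target size $km+k$. But the proposal stops exactly where the proof begins --- the gadget itself is never constructed, and it is the entire content of the argument. The paper builds $H$ from $m+1$ ``levels,'' each holding a copy $v_j^{(i)}$ of every vertex of $G$; for each edge $e_i=\{a,b\}$ it creates two \emph{choice edges} $f_{i,1}=V^{\leq i}\setminus\{a^{(i)}\}$ and $f_{i,2}=V^{\leq i}\setminus\{b^{(i)}\}$ (cumulative over all lower levels), plus \emph{linking cliques} $K^{(j)}$ over the $m+1$ copies of each $v_j$. The linking cliques, combined with the paper's earlier lemma on $\beta$-cycles and nest-sets, force any nest-set to contain either $0$ or at least $m$ copies of each $v_j$; this is what makes the size bound $km+k$ translate into $|\alpha|\leq k$. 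Your ``mandatory padding plus one gadget-block per selected vertex'' picture does not match this: there is no separate padding, and without the $0$-or-$\geq m$ dichotomy the size accounting does not go through.

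More importantly, your completeness argument contains a step that would fail as stated. You claim that if some edge of $G$ were uncovered, ``two corresponding hyperedges of $H$ would be incomparable under $\subseteq_s$, contradicting that $s$ is a nest-set.'' But the nest-set condition only constrains edges in $\neigh(s,H)$, i.e., edges actually incident to $s$; a pair of incomparable hyperedges that $s$ does not touch yields no contradiction. This is precisely the difficulty the paper's cumulative level structure is engineered to overcome: because $f_{i,1},f_{i,2}\supseteq V^{\leq i-1}$, once any vertex enters $s$ at some level, the choice edges of all higher levels become incident to $s$, and the duplicated encoding of $e_1$ at level $m+1$ guarantees that every edge of $G$ ends up with its choice pair in $\neigh(s)$. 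Without an analogous mechanism, your reduction would admit nest-sets that cover only some of the edges of $G$. So the proposal is a plausible plan, but the key construction and the idea that forces \emph{global} coverage from a \emph{local} orderability condition are both missing.
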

\begin{proof}
  Membership is straightforward. Guess up to $k$ vertices $s$ and
  verify the orderability of $\neight(s,H)$. Hardness is by many-one
  reduction from \vcov.  Hence, let $G,k$ be an instance of \vcov and
  let $n$ and $m$ refer to the number of vertices and edges in $G$,
  respectively. In the following we construct a hypergraph $H$ such that
  $H$ has a $(km+k)$-nest-set if and only if $G$ has a vertex
  cover of size at most $k$.

  Let $\{e_1, \dots, e_m\}$ be the edges of $G$ and let $\{v_1,\dots, v_n\}$ be the vertices of $G$.
  Our $H$ will have as vertices $V(H) = \bigcup \{v_j^{(i)} \mid j \in [n], i \in [m+1]\}$ a copy of every vertex
  $v_j$ associated to edge $e_i$. We will refer to the superscript $(i)$ also as the $i$th \emph{level} of $H$.
  We will write $V^{\leq i}$ for $\{v_j^\ell \in V(H) \mid \ell \leq i\}$, i.e., all the vertices at level $i$ or lower.

  For each edge $e_i=\{a,b\}$ of $G$, we create two edges $f_{i,1}$ and $f_{i,2}$ in $H$ as follows.
  \[
    \begin{aligned}
      & f_{i,1} = V^{\leq i} \setminus \{a^{(i)}\} \qquad  & f_{i,2} = V^{\leq i} \setminus \{b^{(i)}\} %
    \end{aligned}
  \]
  Furthermore, we also add two edges  $f_{m+1,1} = V^{\leq m+1}\setminus \{a^{(m+1)}\}$ and $f_{m+1,2} = V^{\leq m+1}\setminus \{b^{(m+1)}\}$ at the final level for $e_1 = \{a,b\}$.  
  Intuitively, these $f$ edges at level $i$ represents the choice between $a$ and $b$ for edge $e_i$, as one
  needs to be deleted for the two edges to be comparable by
  $\subseteq$. We will therefore refer to them as the \emph{choice edges}.
  Encoding the choice for $e_1$ twice, at levels $1$ and $m+1$, is done for technical reasons that will become apparent later.
  $H$ also contains the complete graph $K^{(j)}$ over the vertices $\{v_j^{(i)}\mid i\in [m+1]\}$ for every vertex $v_j$ of $G$.
  Intuitively, they link the choices at every level to each other and we therefore refer to them as the \emph{linking cliques}. Thus we have
  \[
    E(H) = \{f_{i,1},f_{i,2} \mid i \in [m+1] \} \cup \bigcup_{j \in [n]} E(K^{(j)})
  \]

  We first show that if $\alpha$ is a vertex cover of $G$ and $\ell \leq k$, then
  $s_\alpha = \bigcup_{v_j \in \alpha} \{ v_j^{(i)}\mid i\in[m+1]\}$ is a $(km+k)$-nest-set of
  $H$. Note that $|s_\alpha|\leq k(m+1)$ follows immediately from the construction.
  \begin{claim}
    \label{claim:npa}
    For each $i \in [m+1]$ we have
    $f_{i,1} \subseteq_{s_\alpha} f_{i,2}$ or vice versa.
  \end{claim}
  \begin{claimproof}
    Let $e_i = \{a,b\}$ and, w.l.o.g., assume $a \in \alpha$ and thus
    also $a^{(i)} \in s_\alpha$.  Clearly,
    $f_{i,2} \setminus\{a^{(i)}\} = V^{\leq i}\setminus
    \{a^{(i)},b^{(i)}\} \subseteq f_{i,1}$. Removing further vertices
    from both can not change the order anymore and thus $f_{i,2}\setminus s_\alpha \subseteq f_{i,1} \setminus s_\alpha$.
    If $b \in \alpha$ the analogous argument yields the opposite order. The same argument also applies for $f_{m+1,1}$ and $f_{m+1,2}$ and $e_1$.
  \end{claimproof}
 
  By construction we always have $f_{i,1},f_{i,2} \supseteq f_{h,1},f_{h,2}$ for $h < i$. Thus, in combination with Claim~\ref{claim:npa} we have that all choice edges are linearly orderable
  by $\subseteq_{s_\alpha}$. The only other edges in $\neigh(s_\alpha)$ are those of the linking cliques $K^{(j)}$ where $v_j \in \alpha$.
  Clearly, all edges of the clique become empty, as $V(K^{(j)})\subseteq s_\alpha$,  and thus $\neight(s_\alpha)$ is linearly orderable by $\subseteq$.

  For the other direction, suppose $s$ is a $(km+k)$-nest-set of $H$.
  We now define $\alpha$ as containing exactly those vertices $v_j$ such that $v_j^{(i)} \in s$ for at least $m$ distinct $i$. Note that because of the linking
  cliques and Lemma~\ref{lem:cycle}, a vertex $v_j^{(i)}$ occurs
  either for 0 or at least $m$ distinct $i$ in $s$. Using the assumptions that $2 \leq m$ and  $1 \leq k < m$ from above it is straightforward to verify that $\frac{km+k}{m}< k+1$ and therefore also $|\alpha|\leq k$.

  What is left is to show that every edge $e_i$ in $G$ is incident to
  a vertex in $\alpha$.  For any
  $e_i = \{a,b\}$ the choice edges $f_{i,1},f_{i,2}$ are only comparable by $\subseteq_s$ if
  either $a^{(i)} \in s$ or $b^{(i)} \in s$ (or both). Then,
  because of the linking cliques, $a$ or $b$ (or both) will be in
  $\alpha$.

  It follows that if $f_{i,1},f_{i,2} \in \neigh(s,H)$,
  then $e_i$ will be covered by $\alpha$. Then, since $s$ is not empty, there is some $v_j$ such
  that $v_j^{(i)}$ occurs for $m$ distinct $i$ in $s$. In particular, then either $v_j^{(1)}$ or $v_j^{(2)}$
  are in $s$ and thus for every $2 < i \leq m+1$ we have $f_{i,1},f_{i,2} \in\neigh(s)$. Thus, for every edge in $G$ there is a pair of choice edges in $\neigh(s)$. By the argument above every edge of $G$ is therefore incident to a vertex in $\alpha$. 
\end{proof}

We now build on this reduction to prove Theorem~\ref{thm:npc}.
Suppose the same situation as in the above proof, i.e., a vertex cover
instance $G, k$ and the hypergraph $H$ from the reduction above. If
$G$ has a vertex cover $\alpha$, then we can eliminate all the
vertices $s_\alpha$ in $H$ that encode the graph vertices from
$\alpha$. By the argument above we have that $E(H) = \neigh(s_\alpha)$
is linearly ordered by $\subseteq_{s_\alpha}$. It is not difficult to see that when all edges of a hypergraph are linearly ordered by $\subseteq$ then that hypergraph is $\beta$-acyclic: if a vertex $v$ is included in every edge then $\{v\}$ is a nest-set. 
Thus, $H-s_\alpha$ has a $1$-NEO $\calO'$ and thus prepending $s_\alpha$ to $\calO'$ gives us a $(km+k)$-NEO of $H$.

On the other hand, suppose $H$ has a $(km+k)$-NEO
$\calO = (s_1,\dots)$.  As $s_1$ is a $(km+k)$-nest-set of $H$, the
arguments from the proof apply and we have that $G$ has a vertex cover
of size at most $k$. This now also completes the proof of Theorem~\ref{thm:npc}.

On a final note, one may notice similarities between finding nest-sets and an important work by Yannakakis~\cite{DBLP:journals/siamcomp/Yannakakis81a} on vertex-deletion problems in bipartite graphs. Yannakakis gives a complexity characterization for vertex-deletion problems on bipartite graphs that extends to hypergraphs via their incidence graph. Furthermore, the specific problem of finding a vertex-deletion such that the edges of the hypergraph become linearly ordered by $\subseteq$ is stated to be polynomial. While this strongly resembles the nest-set problem, the results of Yannakakis are not applicable here since we are not interested in a global property of the hypergraph but only in the orderability of the edges that are incident to the deleted vertices.

\subsection{Fixed-parameter Tractability}
\label{sec:fpt}

Recall that every nest-set $s$ has a maximal edge with respect to
$\subseteq_s$; the guard of $s$. The main idea behind the algorithm
presented in this section is to always fix an edge $e_g$ and check if
there exists a nest-set that specifically has $e_g$ as its guard.
This will allow us to incrementally build a nest-set $s$ relative to
the guard $e_g$. We first demonstrate this principle in the following
example.

\begin{example}
  \label{ex:fpt}
  We consider a hypergraph $H$ with three edges $e_1=\{a,b,c,d\}$, $e_2=\{a,b,c,g\}$, and $e_3=\{c,d,g,f\}$. We want to find a nest-set with guard $e_1$. The hypergraph with $e_1$ highlighted is shown in Figure~\ref{ex:fpt}. To start, if $s$ is a nest-set with guard $e_1$, then at least one vertex of $e_1$ must be in $s$. For this example let $a \in s$.

  Since $a \in s$ we also have that $e_2 \in \neigh(s)$. For $s$ to be a nest-set with guard $e_1$ it must then hold that $e_2 \setminus s \subseteq e_1 \setminus s$. Since $g$ is in $e_2$ but not in $e_1$ we can deduce that also $g \in s$. More generally, any vertex that occurs in an edge from $\neigh(s)$ but not in $e_1$ must be part of the nest-set $s$. Now, since $g \in s$ it follows that $e_3 \in \neigh(s)$ and therefore, by the previous observation, also $f \in s$.

  At this point we have deduced that if $a$ is in $s$, then so are $g$ and $f$. We now have the situation that for every edge $e \in \neigh(s)$ we have $e \setminus \{a,g,f\} \subseteq e_1 \setminus \{a,g,f\}$. However, as illustrated in Figure~\ref{ex:fpt},
  $\neight(\{a,g,f\})$ is not linearly ordered by $\subseteq$. A nest-set must therefore contain further vertices. In this case it is easy to see that either removing $b$ or $d$ is enough. In conclusion we have shown that if $a \in s$, then there are two $4$-nest-sets $\{a,b,e,f\}$ and $\{a,d,e,f\}$ that have guard $e_1$.
\end{example}

\begin{figure}[t]
  \centering
  \includegraphics[width=0.65\columnwidth]{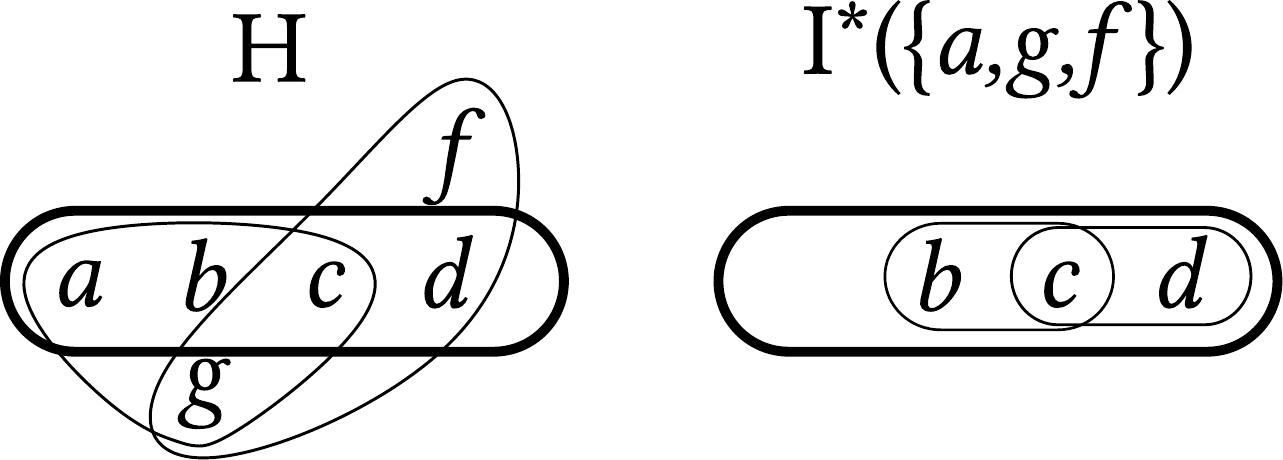}
  \Description{
    A visual representation of hypergraph $H$ and the sets of $\neight(\{a,g,f\})$ from Example~\ref{ex:fpt}.
  }
  \caption{Illustration of Example~\ref{ex:fpt}}
  \label{fig:fpt}
    \Description{A visual representation of the hypergraphs described in Example~\ref{ex:fpt}}
\end{figure}

What makes the problem difficult is that there can be many possible ways of making edges linearly ordered by vertex deletion. In Example~\ref{ex:fpt} both choices, removing either $b$ or $d$, lead to a $4$-nest-set. However, suppose there were an additional edge $e_4=\{b,x\}$. Then, choosing $b$ would also imply $e_4 \in \neigh(s)$ and $x \in s$. Choosing $d$ would lead to a smaller nest-set.

In general, this type of complication can occur repeatedly and it is
therefore necessary to continue this expansion procedure for all
possible (minimal) ways of ordering the known incident edges of $s$.
We will therefore first establish an upper bound on these possible expansions.

Intuitively, when we have edges $\{a,b\}$ and
$\{b, c\}$, the only way they become comparable by $\subseteq$ is if
either $a$ or $c$ is removed. The existence of a linear order over all
the edges thus requires resolving all such conflicts. By encoding
these conflicts in a kind of conflict graph we can see that the
problem is equivalent to finding a vertex cover in the conflict graph.

\begin{definition}[$\subseteq$-conflict graph]
  Let $H$ be a hypergraph, we define the \emph{$\subseteq$-conflict graph} of $H$ as the graph obtained by the following construction (with $V(G) = \bigcup E(G)$):
  For every two distinct edges $e_1, e_2 \in E(H)$, if $v \in e_1 \setminus e_2$ and $u \in e_2 \setminus e_1$, then add an
  edge $\{v,u\}$ to $G$. We say that $u$ and $v$ have a \emph{$\subseteq$-conflict} in $H$.
\end{definition}

\begin{lemma}
  \label{lem:enum.conflicts}
  Let $H$ be a hypergraph and let $s \subseteq V(H)$.
  Then $E(H-s)$ is linearly ordered by $\subseteq$ if and only if $s$ is a vertex cover of the $\subseteq$-conflict graph of $H$.
\end{lemma}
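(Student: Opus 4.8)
The plan is to prove both directions by contraposition, pivoting on a single observation: two reduced edges $e_1 \setminus s$ and $e_2 \setminus s$ fail to be $\subseteq$-comparable \emph{precisely} when there is an edge of the $\subseteq$-conflict graph between them that $s$ leaves uncovered. First I would unfold the definition of the vertex-induced subhypergraph to record that $E(H-s) = \{ e \setminus s \mid e \in E(H)\} \setminus \{\emptyset\}$, and recall from the preliminaries that $\subseteq$ is automatically transitive and antisymmetric, so that ``linearly ordered by $\subseteq$'' reduces to checking \emph{connexity}, i.e., that any two members of $E(H-s)$ are $\subseteq$-comparable.

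For the forward direction (linear order $\Rightarrow$ vertex cover), I would take an arbitrary conflict edge $\{v,u\}$ of the conflict graph $G$, witnessed by distinct $e_1, e_2 \in E(H)$ with $v \in e_1 \setminus e_2$ and $u \in e_2 \setminus e_1$, and argue by contradiction assuming $v \notin s$ and $u \notin s$. Then $v \in e_1 \setminus s$ while $v \notin e_2 \setminus s$ (using $v \notin e_2$ and $v \notin s$), and symmetrically $u \in (e_2 \setminus s) \setminus (e_1 \setminus s)$. In particular both reduced edges are nonempty, hence lie in $E(H-s)$, yet neither contains the other, contradicting connexity. Thus every edge of $G$ is covered by $s$.

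For the converse (vertex cover $\Rightarrow$ linear order), I would take two members $e_1 \setminus s$ and $e_2 \setminus s$ of $E(H-s)$ and, assuming they are incomparable, extract $v \in (e_1 \setminus s) \setminus (e_2 \setminus s)$ and $u \in (e_2 \setminus s) \setminus (e_1 \setminus s)$. Since $v, u \notin s$, these upgrade to $v \in e_1 \setminus e_2$ and $u \in e_2 \setminus e_1$, and the originating edges must be distinct (otherwise the reduced sets would coincide), so $\{v,u\}$ is an edge of $G$ uncovered by $s$ --- contradicting the vertex-cover hypothesis. Hence any two members are comparable, and connexity together with the automatic transitivity and antisymmetry yields the linear order.

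The argument is essentially symmetric and short, so there is no deep obstacle; the only place demanding care --- and the main source of potential slips --- is the bookkeeping that translates between membership in the \emph{reduced} edges $e_i \setminus s$ and membership in the \emph{original} edges $e_i$. Concretely, one must repeatedly invoke $v \notin s$ to pass from ``$v \notin e_2 \setminus s$'' to ``$v \notin e_2$'', and must confirm that the two originating edges are genuinely distinct so that the $\subseteq$-conflict-graph definition actually applies.
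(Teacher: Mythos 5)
Your proposal is correct and follows essentially the same argument as the paper: both directions reduce to the observation that incomparability of $e_1\setminus s$ and $e_2\setminus s$ corresponds exactly to an uncovered conflict edge $\{v,u\}$, argued by contradiction in each direction. Your version is slightly more explicit about the bookkeeping (nonemptiness of the reduced edges, distinctness of the originating edges), but the substance is identical to the paper's proof.
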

\begin{proof}
  Let $G$ be the $\subseteq$-conflict graph of $H$.
  We first show the implication from right to left. Let $s$ be a
  vertex cover for $G$ and suppose that $E(H-s)$ is not
  linearly ordered by $\subseteq$.  Hence,
  there are two edges $e_1, e_2 \in E(H-s)$ that are incomparable,
  i.e., there exist vertices $v\in e_1 \setminus e_2$ and
  $u \in e_2 \setminus e_1$ and neither $v$ nor $u$ is in the vertex
  cover $s$.  A conflict can not be introduced by removing the
  vertices of $s$ and therefore it was already present in $H$. Therefore,
  there must be an edge $\{v,u\}$ in $G$ that is not covered by
  $s$, contradicting that $s$ is a vertex cover.

  For the other direction let $s \subseteq V(H)$  such that $E(H-s)$ is linearly ordered by
  $\subseteq$. Then for every $\subseteq$-conflict, i.e., every pair of
  vertices $u,v$ where there are $e_1, e_2 \in E(H)$ with
  $v \in e_1 \setminus e_2$ and $u \in e_2 \setminus e_1$, at least
  one of $u,v$ must be in $s$.  All edges of $G$ are exactly between such pairs of vertices,
  hence $s$ contains at least one vertex of each edge in $G$. Therefore $s$ is also a vertex cover of $G$.
\end{proof}

This correspondence allows us to make use of the following
classical result by Fernau~\cite{DBLP:conf/cocoon/Fernau02} on the enumeration of all \emph{minimal vertex covers}. A vertex cover is called a minimal vertex cover if
none of its subsets is a vertex cover.

\begin{proposition}[\cite{DBLP:conf/cocoon/Fernau02}]
  \label{prop:vcov.enum}
  Let $G$ be a graph with $n$ vertices. There exist at most $2^k$
  minimal vertex covers with size $\leq k$ and they can be
  fully enumerated in $O(2^k k^2 + kn)$ time.
\end{proposition}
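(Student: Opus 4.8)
The plan is to prove both assertions at once through a bounded search tree that branches on the two endpoints of an uncovered edge, and to postpone the running-time bound to a kernelization argument. Throughout, let me write $f(G,k)$ for the number of minimal vertex covers of $G$ of size at most $k$.

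First I would establish the counting bound $f(G,k) \le 2^k$ by induction on $k$. If $G$ has no edges then $\emptyset$ is the unique minimal vertex cover and $f(G,k) \le 1 \le 2^k$. Otherwise fix any edge $\{u,v\}$; every vertex cover, and in particular every minimal one, contains $u$ or $v$. The key observation is that if $S$ is a minimal vertex cover of $G$ with $u \in S$, then $S \setminus \{u\}$ is a minimal vertex cover of $G-u$ of size at most $k-1$: it is clearly a cover of $G-u$, and any private edge of a vertex $w \in S\setminus\{u\}$ cannot be incident to $u$ (otherwise $u$ would also cover it), so that edge survives in $G-u$ and keeps $w$ necessary. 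Since $S \mapsto S\setminus\{u\}$ is injective on covers containing $u$, the number of minimal covers containing $u$ is at most $f(G-u,k-1)$, and symmetrically for $v$. As every minimal cover contains $u$ or $v$, I obtain $f(G,k) \le f(G-u,k-1) + f(G-v,k-1) \le 2\cdot 2^{k-1} = 2^k$ by the induction hypothesis.

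The same recursion yields the enumeration procedure: branch on $\{u,v\}$, recurse on $G-u$ and on $G-v$ with budget $k-1$, prepending the chosen vertex. The recursion tree has at most $2^k$ leaves, each producing a candidate set of size at most $k$; I would then discard any candidate that is not a cover or not minimal, testing minimality of a set $S$ by checking that every $v \in S$ has a private edge (an incident edge whose other endpoint lies outside $S$). To meet the time bound I would first kernelize: any vertex of degree greater than $k$ must belong to every vertex cover of size at most $k$, hence to every minimal one, so I force all such vertices into the solution (reporting failure if there are more than $k$ of them), detecting them by truncated incidence scans in $O(kn)$ time. The residual graph then has maximum degree at most $k$, so a cover within the remaining budget can exist only if at most $k^2$ edges remain; isolated vertices lie in no minimal cover and are dropped. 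This leaves a kernel with $O(k^2)$ vertices and edges, on which the branching costs $O(k^2)$ per leaf and $O(2^k)$ leaves overall, giving the $O(2^k k^2)$ term and, with preprocessing, the stated $O(2^k k^2 + kn)$.

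The main obstacle is the correctness of the kernelization for \emph{enumeration} rather than mere decision: I must verify that the minimal vertex covers of size at most $k$ of $G$ correspond exactly to the forced set together with the minimal vertex covers of the kernel. The delicate points are that forcing a high-degree vertex $w$ never spoils minimality, which holds because $w$ always retains a private edge (its more than $k$ neighbours cannot all lie in a size-$\le k$ cover), and that a kernel vertex's private edge never runs to a forced vertex, since such an edge is already covered by that forced vertex and so stays within the kernel. Pinning down this correspondence precisely, while also arranging the bookkeeping that achieves $O(kn)$ preprocessing and $O(k^2)$ work per leaf, is the technically demanding part of the argument.
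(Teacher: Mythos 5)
The paper does not prove this proposition at all --- it is imported verbatim from Fernau's work as a black box, so there is no in-paper argument to compare against. Your reconstruction is correct and is essentially the standard proof underlying the cited result: branching on the two endpoints of an uncovered edge yields the $2^k$ bound via the injection $S \mapsto S \setminus \{u\}$ into the minimal covers of $G-u$ (with the private-edge argument correctly showing minimality is preserved), and the Buss-style high-degree kernelization supplies the $O(2^k k^2 + kn)$ running time.
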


In combination with Lemma~\ref{lem:enum.conflicts} we therefore also have an upper bound on computing all minimal vertex deletions that resolve all $\subseteq$-conflicts. With this we are now ready to state Algorithm~\ref{alg:nsw.expand} which implements the intuition described at the beginning of this section. The algorithm is given a hypergraph and an edge $e_g$ to use as guard and tries to find a $k$-nest-set with guard $e_g$ by exhaustively following the steps described in Example~\ref{ex:fpt}.
We are able to show that this indeed leads a correct procedure for finding $k$-nest-sets with a specific guard. 

\newcommand{\funcname}{NestExpand}

\begin{algorithm}[t]
  \SetKwInput{Output}{output}
\SetKwInput{Input}{input}

\Input{Hypergraph $H$, edge $e_g$, and an integer $k \geq 1$.}
\Output{``Accept'', if there exists a nest-set $s$ with guard $e$ and $|s|\leq k$ \linebreak
        ``Reject'', otherwise.}

  \SetKw{Halt}{Halt}\SetKw{Reject}{Reject}\SetKw{Accept}{Accept}
  \SetKwProg{Fn}{Function}{}{}
  \SetKwFunction{greed}{\funcname}

  \Fn{\greed($s:$ set of vertices)}{
    \If{$|s|>k$}{
      \KwRet \Reject
    }
    $\Delta \leftarrow \bigcup \neigh(s, H) \setminus (s \cup e_g)$\;
    \If{$\Delta \neq \emptyset$}{
      \KwRet \greed($s \cup \Delta$)\; %
    }
    $H_g \leftarrow $ The hypergraph $\neight(s, H)$\; %
    \If{$H_g$ has no $\subseteq$-conflicts}{
      \KwRet \Accept\; %
    }
    $\mathbf{A} \leftarrow $ all minimal vertex covers of the $\subseteq$-conflict graph of $H_g$ with size at most $k-|s|$\;
    \ForEach{$\alpha \in \mathbf{A}$}{
      \If{\greed($s \cup  \alpha$) accepts}{
        \KwRet \Accept\; %
      }
    }
    \KwRet \Reject\;
  }

  \Begin(\tcc*[f]{\bfseries Main}){
    \ForEach{$v \in e_g$}{
      \If{\greed($\{v\}$)}{
        \KwRet \Accept\;
      }
    }
    \KwRet \Reject\;
  }
  
\caption{Find nest-set with fixed guard.}
\label{alg:nsw.expand}
\end{algorithm}                                                                                          

\begin{lemma}
  \label{lem:fpt.correct}
  Algorithm~\ref{alg:nsw.expand} is sound and complete.
\end{lemma}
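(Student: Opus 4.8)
The plan is to prove the two directions of correctness separately, anchoring both on the single place where the algorithm actually returns \textbf{Accept}: the conflict-free test reached after the forced-expansion loop. Recall the specification being verified: \funcname\ should accept iff some nest-set $s$ with guard $e_g$ and $|s|\le k$ exists.

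For \emph{soundness}, I would observe that every accepting run bottoms out at the line where $H_g$ has no $\subseteq$-conflicts, and argue that at that moment the current set $s$ is genuinely a $k$-nest-set with guard $e_g$. Three facts combine. First, $|s|\le k$, since the size check at the top of \funcname\ would otherwise have rejected. Second, $\neight(s,H)=E(H_g)$ is linearly ordered by $\subseteq$: ``no $\subseteq$-conflicts'' is exactly this statement via Lemma~\ref{lem:enum.conflicts} applied to $H_g$ with empty deletion set, so $s$ is a nest-set. Third, $e_g$ is a guard of $s$, because this line is reached only with $\Delta=\emptyset$, i.e. $\bigcup\neigh(s,H)\subseteq s\cup e_g$, so every $e\in\neigh(s,H)$ satisfies $e\setminus s\subseteq e_g\setminus s$; and $e_g$ is itself incident to $s$ since the Main loop seeds $s$ with a vertex of $e_g$ and $s$ only ever grows. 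As \textbf{Accept} propagates unchanged up the recursion, any accepting run witnesses such an $s$.

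For \emph{completeness}, assume a $k$-nest-set $s^*$ with guard $e_g$ exists. Since $e_g\in\neigh(s^*)$, there is a vertex $v\in e_g\cap s^*$, so the Main loop eventually calls \funcname$(\{v\})$ with $\{v\}\subseteq s^*$. I would then prove the invariant that whenever \funcname$(s)$ is called with $s\subseteq s^*$ it accepts, by induction on the measure $|s^*\setminus s|$. The forced-expansion step is the first key point: every $u\in\Delta$ lies in some $e\in\neigh(s)\subseteq\neigh(s^*)$ with $u\notin e_g$, so guardedness of $e_g$ for $s^*$ forces $u\in s^*$; hence $\Delta\subseteq s^*$, the recursive call stays inside $s^*$, and the measure strictly drops. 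When $\Delta=\emptyset$ with no conflicts, the call accepts (this also covers the base $s=s^*$, where $s^*$ being a guarded nest-set forces $\Delta=\emptyset$ and linear orderability of $\neight(s^*,H)$). Because all calls we follow stay inside $s^*$, the rejection branch $|s|>k$ is never triggered.

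The crux, and the step I expect to be the main obstacle, is the branching (conflict) case. Here I would show that $t:=s^*\setminus s$ is a vertex cover of the $\subseteq$-conflict graph of $H_g$: since $s\subseteq s^*$, deleting $t$ from $H_g$ yields the edge set $\{e\setminus s^*\mid e\in\neigh(s)\}$, which is a subset of the linearly ordered set $\neight(s^*,H)$ and hence itself linearly ordered, so Lemma~\ref{lem:enum.conflicts} certifies $t$ as a cover of size $|s^*|-|s|\le k-|s|$. Discarding from $t$ any vertices outside the conflict graph and then greedily shrinking yields a \emph{minimal} subcover $\alpha\subseteq t$ with $|\alpha|\le k-|s|$, so $\alpha$ is among the covers enumerated in $\mathbf{A}$ by Proposition~\ref{prop:vcov.enum}; moreover $\alpha\neq\emptyset$ since conflicts are present, whence $\alpha$ is disjoint from $s$ and $s\cup\alpha\subseteq s^*$ strictly shrinks the measure. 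By induction the recursive call on $s\cup\alpha$ accepts, and so does \funcname$(s)$. The delicate part is precisely this vertex-cover correspondence --- verifying that the target $s^*$ \emph{projects} to an enumerated \emph{minimal} cover rather than to some arbitrary cover --- together with the bookkeeping ensuring every recursive call remains within $s^*$.
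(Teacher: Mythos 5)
Your proof is correct and follows essentially the same route as the paper's: soundness is read off the unique accepting line (size check, conflict-freeness via Lemma~\ref{lem:enum.conflicts}, and $\Delta=\emptyset$ giving guardedness), and completeness is an invariant that every call stays inside the target nest-set $s^*$, with the forced-expansion step absorbed by guardedness and the branching step resolved by the fact that $s^*\setminus s$ covers the conflict graph and hence contains a minimal cover enumerated in $\mathbf{A}$. Your direct construction of the minimal subcover $\alpha\subseteq s^*\setminus s$ is in fact slightly more explicit than the paper's contradiction-style phrasing of the same point, but it is the same argument.
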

\begin{proof}
  The algorithm has one base-case for acceptance, when
  $\Delta = \emptyset$ and $\neight(s,H)$ is linearly orderable by $\subseteq$. Clearly then $s$ is a nest-set.
  As $\Delta=\emptyset$, for every (sub)edge $e \in \neight(s,H)$ we have $e \subseteq e_g$, i.e., $e_g$ is a guard of $s$.
  From the check at the beginning of the \textsf{\funcname}
  we have $|s|\leq k$. Hence, if the algorithm accepts then the current $s$ is a $k$-nest-set with guard $e_g$.

  To establish completeness we show that if a $k$-nest-set $s$ with guard $e_g$ exists, then the algorithm will accept.
  In particular we claim that for every call \textsf{\funcname}$(s')$, if there exists a $k$-nest-set $s$ with guard $e_g$ and $s' \subseteq s$,
  then either $s'$ is a $k$-nest set or $s \cup X \subseteq s$ where $s \cup X$ is the parameter of one of the calls made directly by \textsf{\funcname}$(s')$.

  We distinguish two cases. First suppose there are edges $e \in \neigh(s')$ such that $e\setminus s' \not \subseteq e_g \setminus s'$. Since $e_g$ is the guard of $s$, and $\neigh(s')\subseteq \neigh(s)$, every element of  $e \setminus e_g$ must necessarily also be in $s$. This corresponds directly to the set $\Delta$ in the algorithm. Hence, $s' \cup \Delta \subseteq s$ when $\Delta \not \emptyset$, which is clearly the only parameter of a child call.

  In the other case, there are no such edges. The claim then states that either $s'$ is a $k$-nest-set and the algorithm accepts, or that $s' \cup \alpha  \subseteq s$ for some $\alpha \in \mathbf{A}$.
  Now suppose the claim is false, i.e., there exists a $k$-nest-set $s \supseteq s'$ with guard $e_g$ such that $s'$ is not a $k$-nest-set and $\forall \alpha \in \mathbf{A}. s' \cup \alpha \not\subseteq s$. By Lemma~\ref{lem:enum.conflicts}, $\mathbf{A}$ contains all possible minimum deletions with size at most $k-|s'|$ that make $\neight(s')$ linearly orderable by $\subseteq$. It follows that some $\subseteq$-conflict from $H_g$ must remain in $\neight(s)$ as otherwise some $\alpha \in \mathbf{A}$ would be a subset of $s$. This contradicts the fact that that assumption $s$ is a nest-set and thereby proves the claim.

  With the claim established, completeness then follows from the fact if $e_g$ is a guard of nest-set $s$, then $e_g \cap s \neq \emptyset$. Hence, if there exists a $k$-nest-set $s$ with guard $e_g$ it must contain at least one vertex $v$ of $e_g$. Inductive application of the claim then proves that a $k$-nest-set will be found by the algorithm (and accepted) when starting from
  \textsf{\funcname}$(\{v\})$.
\end{proof}

For the sake of simplicity, Algorithm~\ref{alg:nsw.expand} is stated as a decision procedure. Even so, it is easy to see that a $k$-nest-set with the appropriate guard has been constructed at any accepting state. It is then straightforward to use Algorithm~\ref{alg:nsw.expand} to decide in fixed-parameter polynomial time if a hypergraph has any $k$-nest-set, and if so output one. In the following we use $\norm{H} = |V(H)|+|E(H)|$ for the \emph{size} of hypergraph $H$.

\begin{theorem}
  \label{thm:fpt.nestset}
  There exists a $2^{O(k^2)}poly(\norm{H})$ time algorithm that takes
  as input hypergraph $H$ and integer $k\geq 1$ and returns a
  $k$-nest-set $s$ of $H$ if one exists, or rejects
  otherwise.
\end{theorem}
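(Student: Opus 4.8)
The plan is to reduce the task of finding \emph{some} $k$-nest-set to the task already solved by Algorithm~\ref{alg:nsw.expand}, which searches for a $k$-nest-set with a \emph{prescribed} guard. The key observation is that every nest-set $s$ has at least one guard, and that guard is by definition an edge of $H$ lying in $\neigh(s,H) \subseteq E(H)$. Hence $H$ has a $k$-nest-set if and only if it has a $k$-nest-set with guard $e_g$ for some $e_g \in E(H)$. The top-level algorithm therefore loops over every edge $e_g \in E(H)$ and calls Algorithm~\ref{alg:nsw.expand} on $(H, e_g, k)$; by the soundness and completeness established in Lemma~\ref{lem:fpt.correct}, this accepts for some $e_g$ precisely when a $k$-nest-set exists. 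This outer loop contributes only a factor of $|E(H)| \leq \norm{H}$, which is absorbed into $poly(\norm{H})$.

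It remains to bound the running time of a single invocation of Algorithm~\ref{alg:nsw.expand}, which I would do by analyzing its recursion tree. Every recursive call strictly enlarges the current set $s$: the $\Delta$-expansion branch is taken only when $\Delta \neq \emptyset$, and a call branches over the minimal vertex covers $\mathbf{A}$ only when $H_g$ has a $\subseteq$-conflict, in which case every $\alpha \in \mathbf{A}$ is non-empty and disjoint from $s$ (the conflict graph of $H_g = \neight(s,H)$ has no vertices of $s$). Since the call immediately rejects once $|s| > k$, the depth of the recursion tree is $O(k)$. For the fan-out, note that a branching call with $|s| = j$ enumerates the minimal vertex covers of size at most $k - j$ of the $\subseteq$-conflict graph of $H_g$; by Proposition~\ref{prop:vcov.enum} there are at most $2^{k-j}$ of these. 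Multiplying the fan-outs along any root-to-leaf path, where $|s|$ increases from $1$ towards $k$, bounds the number of leaves by $\prod_{j=1}^{k} 2^{k-j} = 2^{k(k-1)/2} = 2^{O(k^2)}$, and hence bounds the total number of nodes by $2^{O(k^2)}$.

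Finally I would account for the work done at each node. Computing $\Delta$, forming the hypergraph $H_g = \neight(s,H)$, and testing for $\subseteq$-conflicts are all polynomial in $\norm{H}$, and by the correspondence of Lemma~\ref{lem:enum.conflicts} together with Proposition~\ref{prop:vcov.enum} the set $\mathbf{A}$ of minimal vertex covers can be enumerated in $2^{O(k)}poly(\norm{H})$ time. Thus each node costs $2^{O(k)}poly(\norm{H})$, and multiplying by the $2^{O(k^2)}$ node bound and the $|E(H)|$ choices of guard yields the claimed $2^{O(k^2)}poly(\norm{H})$ overall. To return a witness rather than merely decide, observe that at any accepting leaf the current set $s$ is exactly a $k$-nest-set with guard $e_g$, so the algorithm can output it directly. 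The main obstacle I expect is the recursion-tree bound: one must argue carefully that each recursive call strictly grows $s$ and that the per-level fan-out is controlled by the residual budget $k - |s|$, so that the product of fan-outs telescopes to $2^{O(k^2)}$ rather than blowing up; the non-branching $\Delta$-expansions are harmless here since they only accelerate the growth of $s$.
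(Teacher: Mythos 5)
Your proposal is correct and follows essentially the same route as the paper: an outer loop over all edges as candidate guards, soundness and completeness from Lemma~\ref{lem:fpt.correct}, a recursion tree of depth at most $k$ with fan-out bounded via Proposition~\ref{prop:vcov.enum}, and polynomial work per node. Your telescoping product $\prod_{j=1}^{k} 2^{k-j}$ using the residual budget $k-|s|$ is a marginally sharper count than the paper's flat bound of $2^k$ children per node, but both land on the same $2^{O(k^2)}poly(\norm{H})$ bound.
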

\begin{proof}
  We simply call Algorithm~\ref{alg:nsw.expand} once for each edge of
  $H$ as the guard. Since every nest-set has a guard
  Lemma~\ref{lem:fpt.correct} implies that this will find an
  appropriate nest-set if one exists. If all calls reject, then there
  can be no nest-set with at most $k$ elements as it is not guarded by
  any edge of $H$.

  \sloppy
  What is left to show is that Algorithm~\ref{alg:nsw.expand}
  terminates in $2^{O(k^2)}poly(k \norm{H})$ time. Calling the procedure $|E(H)|$
  times clearly preserves this bound. 
  First, observe that every recursive call of \textsf{\funcname} increases the cardinality of $s$ by at least one.
  The call tree of the recursion therefore has maximum depth $k$. Furthermore, by Proposition~\ref{prop:vcov.enum} every node in the call tree has at most $2^k$ children if $\Delta = \emptyset$, or exactly one when $\Delta\neq \emptyset$. Hence, at most $2^{(k^2)}|e_g|$ calls to \textsf{\funcname} are made in one execution of Algorithm~\ref{alg:nsw.expand}.

  In each call, the computation of $\Delta$ and $H_g$ as well as all
  the checks are feasible in $O(poly(\norm{H}))$ time. Since $\norm{H_g} \leq \norm{H}$, the set $\mathbf{A}$
  can be computed in $O(2^kk^2 + poly(k\norm{H}))$ time according to
  Proposition~\ref{prop:vcov.enum}. Recall that we assume $k \leq V(H)$ since the problem is trivial otherwise.
  The overall execution time of Algorithm~\ref{alg:nsw.expand}
  is therefore in $2^{O(k^2)}poly(\norm{H})$.
\end{proof}

Once we can find individual $k$-nest-sets, finding $k$-NEOs becomes simple. Recall from Lemma~\ref{lem:nsw.induced.sub}
that vertex removal preserves $k$-NEOs. Thus straightforward greedy removal of $k$-nest-sets is a sound and complete algorithm for finding $k$-NEOs. Since at most $|V(H)|$ nest-set removals are required to reach the empty hypergraph, using the procedure from Theorem~\ref{thm:fpt.nestset} to find the $k$-nest-sets yields a $2^{O(k^2)}poly(\norm{H})$ time algorithm for \nswprob.

\begin{corollary}
  \label{cor:fpt}
  \nswprob parameterized by $k$ is fixed-parameter tractable. 
\end{corollary}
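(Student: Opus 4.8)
The plan is to give a simple greedy algorithm and to establish its correctness through the hereditary property of $k$-NEOs. Starting from $H_0 = H$, at each step $i$ I would invoke the procedure from Theorem~\ref{thm:fpt.nestset} on the current hypergraph $H_{i-1}$ to search for a $k$-nest-set. If it returns some $k$-nest-set $s_i$, set $H_i := H_{i-1} - s_i$ and repeat; if it rejects, reject the whole instance. The algorithm accepts as soon as the current hypergraph is empty. Soundness is then immediate: an accepting run produces a sequence $(s_1, \dots, s_q)$ in which each $s_i$ is, by construction, a $k$-nest-set of $H_{i-1}$ and $H_q$ is empty, which is exactly a $k$-NEO and hence witnesses $\nsw(H) \leq k$.

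The heart of the argument is completeness, and the key observation is that greedily deleting \emph{any} $k$-nest-set can never destroy the existence of a $k$-NEO. This is precisely what Lemma~\ref{lem:nsw.induced.sub} provides. Suppose $\nsw(H) \leq k$, so that $H$ has a $k$-NEO. Then $H$ has at least one $k$-nest-set, namely the first entry of that NEO, so the search in the first step succeeds and returns some $k$-nest-set $s_1$. Applying Lemma~\ref{lem:nsw.induced.sub} with $r = s_1$ shows that $H_1 = H - s_1$ again admits a $k$-NEO. Iterating this reasoning---at each step the current hypergraph still has a $k$-NEO, so a $k$-nest-set is found and its removal preserves the property---an induction on the number of remaining vertices shows that the algorithm reaches the empty hypergraph. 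Thus it accepts exactly when $\nsw(H) \leq k$.

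For the running time, every deletion removes at least one vertex, so at most $|V(H)|$ iterations occur; by Theorem~\ref{thm:fpt.nestset} each costs $2^{O(k^2)}poly(\norm{H})$. The total is therefore $|V(H)| \cdot 2^{O(k^2)}poly(\norm{H}) = 2^{O(k^2)}poly(\norm{H})$, which has the form $f(k)\cdot poly(\norm{H})$ and establishes fixed-parameter tractability. I expect no genuine obstacle here; the only subtlety---and the one point that truly needs Lemma~\ref{lem:nsw.induced.sub}---is that the greedily chosen nest-set is \emph{arbitrary} and need not be the first set of any prescribed NEO, so one must confirm that removing an arbitrary $k$-nest-set, rather than following a fixed elimination order, still preserves the existence of a $k$-NEO.
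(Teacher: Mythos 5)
Your proposal is correct and follows essentially the same route as the paper: greedy removal of $k$-nest-sets found via Theorem~\ref{thm:fpt.nestset}, with completeness secured by Lemma~\ref{lem:nsw.induced.sub} (removing an arbitrary $k$-nest-set preserves the existence of a $k$-NEO) and the $|V(H)| \cdot 2^{O(k^2)}poly(\norm{H})$ time bound. You have simply spelled out the induction that the paper leaves implicit.
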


\section{Nest-Set Width \& Conjunctive Queries with Negation}
\label{sec:scq}
\label{sec:safeneg}

We move on to prove our main algorithmic result. Recall that a query $q$ has an associated hypergraph $H(q)$.
We define the nest-set width of the query $q$ as $\nsw(q) = \nsw(H(q))$. We say that a class $\mathcal{Q}$ of \cqnegprob instances has \emph{bounded $\nsw$} if there exists a constant $c$, such that every query $q$ in $\mathcal{Q}$ has $\nsw(q) \leq c$.

\begin{theorem}
  \label{thm:safe.scq}
  For every class $\mathcal{Q}$ of \cqnegprob instances with bounded $\nsw$,
  \cqnegprob is decidable in polynomial time.
\end{theorem}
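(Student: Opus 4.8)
The plan is to give a dynamic-programming algorithm that processes the query $q$ according to a $k$-NEO of its hypergraph $H(q)$, eliminating one nest-set at a time and maintaining, at each stage, a compact table of "partial witnesses" that records which assignments to the relevant boundary variables can be extended to a satisfying assignment of the already-eliminated part of the query. The key structural fact I would exploit is the one emphasized right before Lemma~\ref{lem:hinge}: a nest-set $s$ is connected to the rest of the hypergraph only through the variables of its guard $e_g$. This means that once I decide on an assignment to the guard variables, the choices inside $s$ decouple from the rest of the query, which is exactly what makes a bottom-up elimination feasible in polynomial time for constant $k$.

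\medskip

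First I would fix, using the fixed-parameter algorithm of Corollary~\ref{cor:fpt} (or simply the polynomial-time greedy search for constant $k$), a $k$-NEO $\calO = (s_1,\dots,s_q)$ of $H(q)$. Writing $H_i$ for the hypergraph after eliminating $s_1,\dots,s_i$, I process the nest-sets in reverse, from $s_q$ down to $s_1$. The crucial subtlety — and where I expect the main obstacle to lie — is the treatment of \emph{negative} literals. For a positive literal $R(\vec v)$, consistency is a local lookup into $R^D$, but a negative literal $\neg R(\vec v)$ forbids a tuple, and when I eliminate a nest-set $s$ I must guarantee that \emph{every} assignment to the eliminated variables that I certify as "good" avoids all the forbidden tuples of all negative literals incident to $s$. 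Handling this is precisely the content of Brault-Baron's technique for the $\beta$-acyclic case \cite{DBLP:conf/csl/Brault-Baron12}, which I would lift to nest-sets: the guard $e_g$ of $s$ dominates (w.r.t.\ $\subseteq_s$) every edge incident to $s$, so every literal touching $s$ has its non-$s$ variables contained in $e_g$. Thus an assignment to the guard variables, together with a choice of values for the at most $k$ variables of $s$, determines whether each incident literal (positive or negative) is satisfied.

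\medskip

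Concretely, when eliminating the nest-set $s_i$ with guard $e_g$, I would compute a relation over the guard variables $e_g \setminus s_i$ (together with any variables of previously processed components that reach into $e_g$) consisting of exactly those partial assignments $a$ for which there exists an extension to $s_i$ that (i) satisfies every positive literal incident to $s_i$, (ii) violates no negative literal incident to $s_i$, and (iii) is consistent with the tables already computed for the components hanging off $s_i$. Because $|s_i| \le k$, the number of value-combinations for the $s_i$-variables, given a fixed guard assignment, is at most $|Dom|^{k} \le |R_{max}|^{O(k)}$; the guard assignments themselves range over a positive guard relation (or, if the guard carries no positive literal, one must argue as in the $\beta$-acyclic case that the universe of relevant assignments is still polynomially bounded via the incident positive literals of $s_i$ — this is the delicate bookkeeping step). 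The size of each table and the cost of each join/projection is therefore bounded by $|R_{max}|^{O(k)} \cdot \mathrm{poly}(\norm{q})$, which is polynomial for fixed $k$. The nest-set property guarantees the table's schema is confined to $e_g$, so the tables never grow in arity beyond $k$ plus a constant, and no exponential blow-up in the number of free variables occurs.

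\medskip

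Finally I would argue correctness by an induction along the NEO, mirroring the inductive proofs of Lemmas~\ref{lem:nsw.induced.sub}, \ref{lem:nsw.hereditary} and \ref{lem:hw.nsw}: after processing $s_i,\dots,s_q$, the computed table over the guard of $s_{i-1}$ faithfully represents the projection onto the boundary of the set of satisfying assignments of the subquery on the eliminated variables. The base case is the innermost nest-set, whose table is computed directly from its incident literals, and the final step — eliminating $s_1$, whose guard is trivial since it is the last remaining piece — yields a nonempty table if and only if $q(D)\neq\emptyset$. The whole procedure runs in $|R_{max}|^{O(k)}\,\mathrm{poly}(\norm{q}+\norm{D})$ time, establishing polynomial-time solvability for any class of bounded $\nsw$. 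The single hardest point, to reiterate, is making the negative-literal handling rigorous: one must ensure that projecting away the $s_i$-variables never "forgets" a forbidden combination that could later be illegally completed, and the guard-domination property of nest-sets is exactly the tool that makes this projection sound.
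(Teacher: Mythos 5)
Your skeleton is right --- eliminate nest-sets along a $k$-NEO, exploit the fact that the guard $e_g$ dominates every edge incident to $s$ --- and you correctly flag negative-literal handling as the hard point. But the mechanism you propose for it does not work, and the gap is exactly at the step you defer as ``delicate bookkeeping.'' Your plan materializes, for each eliminated nest-set, a table over the guard variables $e_g \setminus s_i$ listing the \emph{good} assignments (those extendable into $s_i$). The guard is an arbitrary hyperedge of the query, so $|e_g \setminus s_i|$ is unbounded; your claim that ``the tables never grow in arity beyond $k$ plus a constant'' is false. Worse, when the literals incident to $s_i$ are negative, the set of good assignments is (roughly) the complement of a small forbidden set and can have size $|Dom|^{|e_g\setminus s_i|}$: already for a single literal $\neg R(\vec{x},y)$ with $s=\{y\}$, almost every tuple in $Dom^{|\vec{x}|}$ is good, so the table cannot be written down in polynomial time. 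Safety only guarantees each variable appears in \emph{some} positive literal; covering $e_g\setminus s_i$ may require many positive literals whose join is again exponential. So the object your induction maintains cannot be computed, and no amount of guard-domination fixes that.

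The paper's proof avoids this by never enumerating satisfying assignments of negative literals. It splits the elimination of $s$ into two phases: positive occurrences are absorbed into a join $J$ of at most $k$ positive relations (so $|J|\le|R_{max}|^k$), projected to $P$, with the lost information recorded as a \emph{new negative literal} $\neg C$ (Lemma~\ref{lem:poselim}); the remaining negative literals incident to $s$ are then $s$-extended so that they form a $\beta$-acyclic subquery in which every $v\in s$ is a nest point, reduced to the Boolean domain by binary encoding (which preserves $1$-nest-sets, Lemma~\ref{lem:acyc}), and the variables of $s$ are eliminated one by one via Davis--Putnam resolution (Proposition~\ref{prop:bool}, Lemma~\ref{lem:var}), whose key property is that resolving on a nest point does not increase the number of forbidden tuples. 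In other words, the state carried between elimination steps is a set of \emph{forbidden}-tuple relations of non-increasing size, not a set of allowed boundary assignments. This resolution-based representation is the essential idea your proposal is missing, and without it (or an equivalent compact representation of the good-assignment sets) the claimed $|R_{max}|^{O(k)}\,\mathrm{poly}(\norm{q}+\norm{D})$ bound does not follow. (A secondary, repairable issue: you process the NEO from $s_q$ down to $s_1$, but $s_q$ is only a nest-set of $H_{q-1}$, not of $H$; the elimination must proceed in the order of the NEO.)
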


As usual, the result can be extended to unions of conjunctive
queries with negation (U\cqneg) when the $\nsw$ of a U\cqneg is
defined to be the maximum $\nsw$ of its \cqneg parts.

While the complexity of
CQs without negation has been extensively studied and is well
understood, few results extend to the case where negation is
permitted. When there are only positive literals, then the satisfying
assignments for each literal are explicitly present in the
database. Finding a solution for the whole query thus becomes a question of finding a
consistent combination of these explicitly listed partial
assignments. However, with negative literals it is possible to implicitly
express a large number of satisfying assignments. Recovering an
explicit list of satisfying assignments for a negative literal may
require exponential (in the arity) time and space because there can be up to $|Dom|^a$ such assignments, where $a$ is the arity of the literal.

This additional expressiveness of negative literals has
important implications for the study of structural parameters. While
evaluation of CQs is \np-complete with and without negation, permitting negation
allows for expressing problems as queries with a \emph{simpler} hypergraph structure.
Such a change in expressiveness relative to structural complexity must also
 be reflected in structural parameters that capture tractable classes
 of the problem.

 The following example and theorem illustrate this change in expressiveness relative to hypergraph structure.
 Consider \sat for propositional formulas in conjunctive normal form (CNF).
 Recall, that for a formula $F$ in CNF, the corresponding hypergraph $H(F)$ has as its vertices the variables of the formula and every edge is the set of variables of some clause in the formula.
 A clause $C=l_1 \lor \cdots \lor l_q$ has $2^q-1$ satisfying assignments to the variables of the clause. Thus, a corresponding positive literal in a CQ, that contains all the satisfying assignments, will be of exponential size (unless the size of clauses is considered bounded). On the other hand, there is a single assignment to $vars(C)$ that does not satisfy $C$. It is therefore possible to compactly encode \sat by having a negative literal for each clause that excludes the respective non-satisfying assignment. Since this reduction preserves the hypergraph structure of the \sat formula it follows that structural restrictions can only describe a tractable fragment of \cqnegprob if they also make \sat tractable. For example, \sat is \np-hard when restricted to $\alpha$-acyclic formulas~\cite{DBLP:journals/tcs/OrdyniakPS13}, and thus so is \cqneg evaluation. In contrast, evaluation of CQs without negation is tractable for $\alpha$-acyclic queries~\cite{DBLP:conf/vldb/Yannakakis81}.

\begin{theorem}[Implicit in~\cite{DBLP:journals/tcs/OrdyniakPS13}]
  \label{thm:scq.alpha}
  \cqnegprob is \np-hard even when restricted to $\alpha$-acyclic queries.
\end{theorem}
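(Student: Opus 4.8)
The plan is to establish \np-hardness by reducing from \sat itself, using the observation already developed in the paragraph immediately preceding the theorem: a \sat instance can be encoded as a \cqneg instance in a way that preserves the hypergraph structure. Since the theorem is attributed as ``implicit in'' the cited work of Ordyniak, Popa, and Szeider, the key external fact I would invoke is that \sat remains \np-hard when restricted to formulas whose hypergraph $H(F)$ is $\alpha$-acyclic. The whole argument then reduces to combining that fact with a structure-preserving reduction from \sat to \cqnegprob.

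First I would make the reduction explicit. Given a CNF formula $F$ with clauses $C_1,\dots,C_m$ over variables $x_1,\dots,x_n$, I construct a \cqneg $q$ and a database $D$ as follows. The domain is $Dom = \{0,1\}$, representing truth values. For each clause $C_i = \ell_1 \lor \cdots \lor \ell_q$ over variables $\vec{v_i}$, there is a unique assignment to $vars(C_i)$ that falsifies $C_i$ (namely, each literal set to false). I introduce a relation symbol $R_i$ of arity $|vars(C_i)|$ and set $R_i^D$ to contain exactly that single falsifying tuple. The query $q$ then contains the negative literal $\neg R_i(\vec{v_i})$ for each clause. An assignment $a\colon vars(q)\to\{0,1\}$ satisfies $\neg R_i(\vec{v_i})$ precisely when $a[\vec{v_i}]$ is not the falsifying tuple, i.e.\ when $a$ satisfies $C_i$ as a Boolean assignment. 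Hence $q(D)\neq\emptyset$ if and only if $F$ is satisfiable.

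Next I would verify the structural claim: by construction the edge $\{v_1,\dots\}$ of $H(q)$ contributed by literal $\neg R_i$ is exactly $vars(C_i)$, so $H(q)=H(F)$. Therefore, if $F$ is drawn from the class of formulas with $\alpha$-acyclic hypergraph, then $q$ is an $\alpha$-acyclic query. Combined with the \np-hardness of \sat on $\alpha$-acyclic instances, this yields \np-hardness of \cqnegprob restricted to $\alpha$-acyclic queries. The reduction is clearly computable in polynomial time, since each relation $R_i^D$ has a single tuple and the query has one literal per clause.

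The only genuine subtlety — and the step I expect to require the most care — is pinning down exactly which \np-hardness result about \sat the phrase ``implicit in~\cite{DBLP:journals/tcs/OrdyniakPS13}'' refers to, since the theorem as stated depends entirely on \sat being hard on $\alpha$-acyclic (as opposed to merely general, or $\beta$-acyclic) formulas. I would therefore be explicit that the cited work establishes tractability of \sat for $\beta$-acyclic formulas while the same structural restriction weakened to $\alpha$-acyclicity leaves \sat \np-hard; this contrast is precisely what makes the reduction meaningful. Everything else is routine: the correspondence between falsifying tuples and clause satisfaction, and the identity $H(q)=H(F)$, are immediate from the definitions of $R_i^D$ and of $H(q)$.
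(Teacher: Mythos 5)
Your proposal is correct and matches the paper's own argument: the paper establishes this theorem via exactly the same structure-preserving reduction (one negative literal per clause whose relation holds the unique falsifying tuple over domain $\{0,1\}$, so that $H(q)=H(F)$), combined with the \np-hardness of \sat on $\alpha$-acyclic formulas from~\cite{DBLP:journals/tcs/OrdyniakPS13}. You have simply spelled out in full what the paper states in the paragraph preceding the theorem.
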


\paragraph{Simplifying Assumptions}
To simplify the presentation we make the following assumptions on the instances of \cqnegprob.
First we assume that queries in \nswprob instances are always \emph{safe}, i.e., no variable occurs only in negative literals.
An unsafe query can always be made safe: If a variable $v$ occurs only in negative literals, we simply
add a new literal $R(v)$ with $R^D = \{(d)\mid d \in Dom\}$ to the query. The resulting query is clearly equivalent to the unsafe one on the given domain. Importantly, the additional unary literals does not change the nest-set width of the query.
At some points in the algorithm we operate on (sub)queries that are not safe. The assumption of safety is made for the starting point of the procedure.

Our second assumption is that the size of the domain is exactly a power of 2, i.e., $|Dom|=2^d$ for some integer $d$. Since we already assume safe queries, increasing the size of the domain has no effect on the solutions since the newly introduced constants cannot be part of any solution. Furthermore, this assumption increases the size of the domain at most by a constant factor less than $2$.

\subsection{Relation to Previous Work}
\label{sec:prev}

The algorithm presented here builds on the work of
Brault-Baron~\cite{DBLP:conf/csl/Brault-Baron12} for the
$\beta$-acyclic case. While we can reuse some of the main ideas, the
overall approach used there does not generalize to our setting. There the tractability is first shown for boolean domains, i.e., the domain is restricted to only two values. \cqnegprob over arbitrary domains is reduced to the problem
over the boolean domain by blowing up each variable in such a way as to encode the full domain using boolean variables.
This naturally requires every variable in the original query to be replaced by $\log_2|Dom|$ many new variables.
While this operation preserves $\beta$-acyclicity, it can increase $\nsw$ by a factor $\log_2|Dom|$.

\begin{example}
Consider the following query and a domain with $8$ elements.
\[
  q = \neg R(a, b) \land \neg S(b, c) \land \neg T(a, c)
\]
The reduction to a query over the boolean domain will then replace every variable $v$ by three variables $v_1,v_2,v_3$, resulting in the equivalent query $q_b$ over the boolean domain
\[
  \begin{aligned}
    q_b = &\neg R(a_1, a_2, a_3, b_1, b_2, b_3) \land \neg S(b_1, b_2, b_3, c_1, c_2,c_3) \land \\
    & \neg T(a_1,a_2,a_3, c_1,c_2,c_3)
\end{aligned}
\]
It is easy to see that $q$ has $\nsw(q)=2$ because any combination of two
variables is a nest-set of $q$. However, while $\{a,b\}$ is a nest-set of $q$,
this does not translate to the existence of a $2$-nest-set in $q_b$. It is easy to verify that any $\{a_i,b_j\}$ for $i,j\in[3]$
is not a nest-set. Indeed, applying the ideas from Section~\ref{sec:fpt} it is easy to see that in general, for such a triangle query, $\nsw(q_b) = 2\,\log_2|Dom|$.
\end{example}

A subtle but key observation must be made here. While the previous
example shows that the variable blowup from the binary encoding
affects the nest-set width in general, this does not happen when
$\nsw(q)=1$. Consider a nest-set $\{v\}$ of some hypergraph $H$. The edges incident to $v$ are linearly ordered by $\subseteq$. If we
 add a new vertex $v'$ in all the edges that contain $v$, then
clearly the edges incident to $v'$ are the same as those of $v$ and
therefore also linearly ordered by $\subseteq$. 

\begin{lemma}
  \label{lem:acyc}
  Let $H$ be a hypergraph with a nest-set $\{v\}$. Let $H'$ be a hypergraph obtained by adding a new variable $v'$
  to $H$ that occurs exactly in the same edges as $v$. Then $\{v\}$ and $\{v'\}$ are both nest-sets of $H'$.
\end{lemma}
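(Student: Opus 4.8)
The plan is to verify the nest-set definition directly for both $\{v\}$ and $\{v'\}$ in $H'$ by carefully tracking how the edges of $H'$ differ from those of $H$. First I would fix notation: let $\neigh(v, H) = \{e_1,\dots,e_m\}$ be the edges incident to $v$ in $H$. By construction of $H'$, each such $e_i$ becomes $e_i \cup \{v'\}$, while every edge not incident to $v$ is left unchanged. Since $v'$ was placed in exactly the edges that contain $v$, this also records the crucial fact that $\neigh(v', H') = \neigh(v, H') = \{\, e_i \cup \{v'\} \mid i \in [m]\,\}$.

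For $\{v\}$, I would compute $\neight(\{v\}, H')$. Because $v' \neq v$, we have $(e_i \cup \{v'\}) \setminus \{v\} = (e_i \setminus \{v\}) \cup \{v'\}$, and hence
\[
\neight(\{v\}, H') = \{\, A \cup \{v'\} \mid A \in \neight(\{v\}, H)\,\}.
\]
The key observation is that adding one fixed element $v'$ to every member of a $\subseteq$-chain preserves the order: if $A \subseteq B$ then $A \cup \{v'\} \subseteq B \cup \{v'\}$. Since $\neight(\{v\}, H)$ is linearly ordered by $\subseteq$ by hypothesis, so is $\neight(\{v\}, H')$, and therefore $\{v\}$ is a nest-set of $H'$.

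For $\{v'\}$, I would use $\neigh(v', H') = \neigh(v, H')$ from above and compute, using that $v'$ is a fresh variable so $v' \notin e_i$,
\[
\neight(\{v'\}, H') = \{\, (e_i \cup \{v'\}) \setminus \{v'\} \mid i \in [m]\,\} = \{\, e_i \mid i \in [m]\,\} = \neigh(v, H).
\]
It then remains to see that these original edges are linearly ordered by $\subseteq$. As every $e_i$ contains $v$, for any pair we have $e_i \subseteq e_j$ if and only if $e_i \setminus \{v\} \subseteq e_j \setminus \{v\}$, so the chain order on $\neight(\{v\}, H)$ transfers verbatim to $\neigh(v, H)$. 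Thus $\neight(\{v'\}, H')$ is linearly ordered and $\{v'\}$ is a nest-set of $H'$.

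There is no substantive obstacle here; the statement follows immediately from the definitions. The only point requiring care is the set-difference bookkeeping — namely that $v \neq v'$ so that deleting one does not delete the other, and that $v'$ is genuinely new so $v' \notin e_i$ — which I flag explicitly in each computation. (Both sets are non-empty since $V(H') = \bigcup E(H')$ guarantees $v$, and hence $v'$, lies in at least one edge.)
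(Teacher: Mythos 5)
Your proof is correct and follows essentially the same route as the paper, which gives only the informal observation that $v'$ is incident to exactly the same edges as $v$, so the incident-edge chains coincide and linear orderability transfers; you simply spell out the set-difference bookkeeping explicitly. No issues.
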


This subtle difference between $1$-nest-sets and larger nest-sets will be principal to the following section.

\subsection{Eliminating Variables}
\label{sec:varelim}

The \cqnegprob algorithm in the following Section~\ref{sec:scq2} will be
based around successive elimination of variables from the query. This
elimination will be guided by a nest-set elimination ordering where we
eliminate all variables of a nest-set at once. This elimination of a
nest-set $s$ is performed in three steps.

\begin{enumerate}
\item Eliminate all occurrences of variables from $s$ in positive literals.
\item Extend the negative literals incident to $s$ in such a way that they form a $\beta$-acyclic subquery.
\item Eliminate the variables of $s$ from the $\beta$-acyclic subquery.
\end{enumerate}

In this section we introduce the mechanisms used for these steps. For steps 1 and 2 we need to extend literals in such a way that their variables include all variables from some set $s$. We do this in a straightforward way by simply extending the relation by all possible tuples for the new variables. It is then easy to see that such extensions are equivalent with respect to their set of satisfying solutions.

\begin{definition}
  Consider a literal $L(v_1,\dots,v_n)$ where $L$ is either $R$ or $\neg R$ and the respective relation $R^D$. Let $s$ be a set of variables and let $s' = s \setminus \{v_1,\dots,v_n\}$ be the variables in $s$ that are not used in the literal. We call the literal $L(v_1\dots,v_n,\vec{s'})$ with the new relation $R^D \times Dom^{|s'|}$ the \emph{$s$-extension of $R$} (where $Dom^m$ represents the $m$-ary Cartesian power of the set $Dom$ and we use the relational algebra semantics of the product $\times$ ).
\end{definition}

\begin{lemma}
  \label{lem:extension}
  Let $L'(\vec{v},\vec{s'})$ be the $s$-extension of $L(\vec{v})$. Then an assignment $a \colon vars(L) \to Dom$ satisfies $L(\vec{v})$ if and only if every extension of $a$ to $vars(L')$ satisfies $L'(\vec{v},\vec{s'})$.
\end{lemma}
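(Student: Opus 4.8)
The plan is to unfold the definition of the $s$-extension and reduce the claim to a single observation about the Cartesian product $R^D \times Dom^{|s'|}$. Write $\vec{v} = (v_1,\dots,v_n)$ and set $m = |s'|$. Recall that by construction the variables of $\vec{s'}$ are disjoint from those of $\vec{v}$, that $vars(L') = \{v_1,\dots,v_n\} \cup s'$, and that the relation associated with $L'$ is $R^D \times Dom^{m}$.

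First I would establish the key observation that membership in $R^D \times Dom^{m}$ does not depend on the coordinates corresponding to $\vec{s'}$. Concretely, for any tuple of domain values $(t_1,\dots,t_n,u_1,\dots,u_m)$ we have $(t_1,\dots,t_n,u_1,\dots,u_m) \in R^D \times Dom^{m}$ if and only if $(t_1,\dots,t_n) \in R^D$, since the factor $Dom^{m}$ imposes no constraint: every tuple $(u_1,\dots,u_m)$ over $Dom$ already lies in $Dom^{m}$.

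Next I would connect this to an arbitrary extension $a'$ of $a$ to $vars(L')$. Since $a'$ agrees with $a$ on $vars(L)$, we have $a'[\vec{v}] = a[\vec{v}]$, while the values $a'[\vec{s'}]$ are arbitrary elements of $Dom$. Applying the observation above with $t_i = a(v_i)$ yields that $a'[\vec{v},\vec{s'}] \in R^D \times Dom^{m}$ if and only if $a[\vec{v}] \in R^D$, and crucially this equivalence holds \emph{regardless} of the values $a'$ assigns to $\vec{s'}$.

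Finally I would discharge the two literal cases in parallel. If $L = R$ is positive, then $a$ satisfies $L(\vec{v})$ iff $a[\vec{v}] \in R^D$ iff $a'[\vec{v},\vec{s'}] \in R^D \times Dom^{m}$ iff $a'$ satisfies $L'(\vec{v},\vec{s'})$; if $L = \neg R$ is negative the same chain holds with both memberships replaced by non-memberships. In either case the truth value of $L'$ under $a'$ coincides with that of $L$ under $a$ and is independent of the chosen extension, which is exactly what licenses the universal quantifier and gives the stated biconditional (indeed it also shows equivalence with ``some extension satisfies $L'$''). I do not expect a genuine obstacle here: the lemma is a direct consequence of the fact that padding a relation by a full Cartesian power over $Dom$ leaves its satisfaction behaviour on the original variables untouched. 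The only points requiring care are keeping the positive and negative cases symmetric and being explicit that the equivalence is independent of the extension values on $\vec{s'}$.
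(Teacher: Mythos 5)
Your proposal is correct and follows essentially the same route as the paper's proof: both rest on the observation that padding $R^D$ with the full Cartesian power $Dom^{|s'|}$ imposes no constraint on the new coordinates, so satisfaction of $L'$ by any extension reduces to membership (or non-membership) of $a[\vec{v}]$ in $R^D$. Your version merely packages the positive/negative cases and both directions into a single biconditional rather than the paper's four-way case analysis, which is a presentational difference, not a mathematical one.
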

\begin{proof}
  Let $L(\vec{v})$ where $L$ is either $R$ or $\neg R$ and let $L'(\vec{v},\vec{s'})$ be the $s$-extension.
  Let $a \colon vars(L) \to Dom$ be an assignment that satisfies $L$. If $L$ is positive, we have $a[\vec{v}] \in R^D$ and then every extension of the tuple to $vars(L')$ exists by the semantics of the relational product. If $L$ is negative, then $a[\vec{v}] \not\in R^D$. The relational product for creating the relation of the $s$-extension will therefore also not create any tuples where $a[\vec{v}]$ occurs in the projection to $\vec{v}$.

  On the other hand, let $a \colon vars(L) \to Dom$ such that every $a' \colon vars(L') \to Dom$ that extends $a$ satisfies $L'$.
  If $L$ is positive, then any such $a'$ also satisfies $\pi_{vars(L)}(L')$ and therefore $a$ satisfies $L$ as the relational product does not change the tuples of $L$.
  If $L$ is negative, consider the tuple $t = a[\vec{v}]$. Suppose $t \in R^D$, i.e.,  $a$ does not satisfy $L$. But then any extension of $a$ would also be in the relation of the $s$-extension, contradicting our assumption that every extension satisfies $L'$. 
\end{proof}

The process for positive elimination is simple. Straightforward
projection is used to create a positive literal without the variables
from $s$.  A new negative literal then restricts the extensions of
satisfying assignments for the new positive literal to exactly those
that satisfy the old positive literal. A slightly simpler form of this
method was already used in~\cite{DBLP:conf/csl/Brault-Baron12}.

\begin{lemma}
  \label{lem:poselim}
  Let $R(\vec{x}, \vec{s})$ be a positive literal. Define new literals $P(\vec{x})$ with $P^D = \pi_{\vec{x}}(R^D)$ and
  $\neg C(\vec{x}, \vec{s})$ with $C^D = P^D_s \setminus R^D$ where $P_s$ is the $s$-extension of $P$. Then an assignment $a$ satisfies $R(\vec{x}, \vec{s})$ if and only if $a$ satisfies $P(\vec{x}) \land \neg C(\vec{x}, \vec{s})$.
\end{lemma}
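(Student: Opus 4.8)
The plan is to prove the biconditional directly by unwinding the semantics of literal satisfaction on both sides, reducing everything to elementary set membership. The one structural fact I would record first is an explicit description of the relation $P^D_s$: since $P$ has variable tuple $\vec{x}$ and its $s$-extension adds exactly the variables $\vec{s}$ (which are disjoint from $\vec{x}$), the definition of $s$-extension gives $P^D_s = \pi_{\vec{x}}(R^D) \times Dom^{|\vec{s}|}$, a relation over the variable tuple $(\vec{x}, \vec{s})$. This identity is the workhorse of both directions, so I would isolate it at the outset and appeal to Lemma~\ref{lem:extension} for the correctness of the extension itself.

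For the forward direction, I would assume $a$ satisfies $R(\vec{x}, \vec{s})$, i.e.\ $t := a[\vec{x}, \vec{s}] \in R^D$. Projecting gives $a[\vec{x}] = \pi_{\vec{x}}(t) \in \pi_{\vec{x}}(R^D) = P^D$, so $P(\vec{x})$ is satisfied. For the negative literal, since $C^D = P^D_s \setminus R^D$ and $t \in R^D$, the set difference removes $t$; hence $t \notin C^D$ and $\neg C(\vec{x}, \vec{s})$ is satisfied, so $a$ satisfies the conjunction.

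For the backward direction, I would assume $a$ satisfies $P(\vec{x}) \land \neg C(\vec{x}, \vec{s})$. From $P(\vec{x})$ we obtain $a[\vec{x}] \in \pi_{\vec{x}}(R^D)$, and since the $\vec{s}$-component $a[\vec{s}]$ is an arbitrary tuple over $Dom$, the identity above yields $t := a[\vec{x}, \vec{s}] \in \pi_{\vec{x}}(R^D) \times Dom^{|\vec{s}|} = P^D_s$. From $\neg C(\vec{x}, \vec{s})$ we have $t \notin C^D = P^D_s \setminus R^D$. Combining $t \in P^D_s$ with $t \notin P^D_s \setminus R^D$ forces $t \in R^D$, i.e.\ $a$ satisfies $R(\vec{x}, \vec{s})$.

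The argument is almost entirely definitional, so there is no genuinely hard step; the only point that needs care is the backward direction, where one must observe that membership of $t$ in $P^D_s$ is guaranteed \emph{solely} by $a[\vec{x}] \in \pi_{\vec{x}}(R^D)$ --- the value of $a[\vec{s}]$ is irrelevant because the $s$-extension pads with the full Cartesian power $Dom^{|\vec{s}|}$. Making this explicit is what lets the set difference $P^D_s \setminus R^D$ do its job of excluding exactly the ``spurious'' extensions introduced by the projection onto $\vec{x}$, which is precisely the intended role of the correction literal $\neg C$.
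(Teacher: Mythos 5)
Your proof is correct and follows essentially the same route as the paper's: both directions unwind the definitions of $P$, $P_s$, and $C$ and use the fact that the $s$-extension pads $P^D$ with the full Cartesian power $Dom^{|\vec{s}|}$. Your write-up is somewhat more explicit than the paper's (which leans on Lemma~\ref{lem:extension} and a ``by construction it is clear'' step for the backward direction), but the underlying argument is identical.
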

\begin{proof}
  Let $a$ be a satisfying assignment for $R(\vec{x},\vec{s})$. Clearly, $a$ also satisfies $P(\vec{x})$ and by Lemma~\ref{lem:extension} it also satisfies $P_s$. Furthermore, by construction $a[\vec{x},\vec{s}]$ is explicitly not in $C^D$ and hence $a$ also satisfies $\neg C(\vec{x},\vec{s})$.

  On the other hand. Let $a$ be a satisfying assignment for $P(\vec{x}) \land \neg C(\vec{x},\vec{s})$. By construction it is then clear that $a$ satisfies only those extensions of $\vec{x}$ that correspond to a tuple in $R^D$. At the same time $a[\vec{x}]$ is in $\pi_x(R^D)$. Hence, $a$ always corresponds to a tuple in $R^D$ and we see that $a$  satisfies $R(\vec{x},\vec{s})$.
\end{proof}

For the elimination of variables that occur only negatively we build upon a key idea from~\cite{DBLP:conf/csl/Brault-Baron12}.
There, a method for variable elimination is given for the case where the domain is specifically $\{0,1\}$. We repeat parts of the argument here to highlight some important details. Consider a query $q=\{\neg R_1, \dots, \neg R_n\}$. The main observation is that the satisfiability of the negative literals
$\neg R_1, \dots, \neg R_n$ with variables $x_1,\dots, x_m$ is equivalent to satisfiability of the formula
\[
  \bigwedge_{i=1}^n \bigwedge_{(a_{1},\dots,a_{q})\in R_i}
  \left(a_{1} \neq x_{i_1}  \lor \cdots \lor a_{q} \neq x_{i_q}\right)
\]
Since we are in the domain $\{0,1\}$ we have only two cases for the inequalities. Either $0\neq x$ or $1 \neq x$, which
are equivalent to $x=1$ and $x=0$, respectively. We can therefore equivalently rewrite the clauses in the formula above as the propositional formula
\[
  \sigma(a_1) x_{i_1}  \lor \cdots \lor \sigma(a_{q}) x_{i_q}
\]
where $\sigma(1)=\neg$ and $\sigma(0)=\epsilon$, i.e., the empty string.

Recall, if we have two clauses $x \lor \ell_1 \lor \cdots \lor \ell_\alpha$ and $\overline{x} \lor \ell'_1 \lor \ell'_\beta$ the \emph{$x$-resolvent} of the two clauses is $\ell_1 \lor \cdots \lor \ell_\alpha \lor \ell'_1 \lor \cdots \lor \ell'_\beta$.
Removing all clauses containing variable $x$ and adding all $x$-resolvents as new clauses to a given formula in CNF yields an equi-satisfiable formula without the variable $x$. This process is generally referred to as Davis-Putnam resolution~\cite{DBLP:journals/jacm/DavisP60} and its formal definition is recalled in Appendix~\ref{sec:sat}.  If we then reverse the initial transformation from query to propositional formula, we obtain a new query $q'$ (and corresponding database) that no longer contains the variable $x$. The new $q'$ has a solution if and only if $q$ has a solution.

It was already shown in~\cite{DBLP:journals/tcs/OrdyniakPS13} that resolution on a nest point will never increase the number of clauses. After conversion back to the \cqneg setting this means that every relation will contain at most as many tuples as it did before the variable elimination. Note that this conversion can be done by simply reversing the encoding as a propositional formula. Further details can be found in~\cite{DBLP:conf/csl/Brault-Baron12}. In combination with other standard properties of resolution one then arrives at the following statement.

\begin{proposition}[Implicit in Lemma~16 in~\cite{DBLP:conf/csl/Brault-Baron12}]
  \label{prop:bool}
  Let $q$ be the query $\{\neg R_1(\vec{x_1},y),\neg R_2(\vec{x_2},y),\dots, \neg R_n(\vec{x_n},y)\}$ on database $D$
  with domain $\{0,1\}$ and let $\{y\}$ be a nest-set of $q$.  There exists a query $q'$ of the form $\neg R_1(\vec{x_1}),\neg R_2(\vec{x_2}),\dots, \neg R_n(\vec{x_n})$ and a database $D'$ with the following properties:
  \begin{enumerate}
  \item If $a \in q(D)$, then $a[vars(q')] \in q'(D')$.
  \item If $a'\in q'(D')$ then there exists $c \in Dom$ such that $a' \cup \{y \mapsto c\} \in q(D)$.
  \item $q'$ and $D'$ can be computed in $O(\norm{D})$ time.%
  \item For every $i\in [n]$ we have $|R^{D'}_i| \leq |R^D_i|$
  \end{enumerate}
\end{proposition}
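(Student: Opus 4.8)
The plan is to follow the route sketched in the preceding discussion: translate the query into an equisatisfiable CNF formula, perform Davis--Putnam resolution on the single variable $y$, and translate the resolvent clauses back into relations. Concretely, I would first encode each tuple $(a_1,\dots,a_q)\in R_i^D$ as the clause $\sigma(a_1)x_{i_1}\lor\cdots\lor\sigma(a_q)x_{i_q}$ forbidding exactly that assignment, so that a partial assignment avoids all the forbidden tuples of $q$ precisely when it satisfies the resulting formula $F$. Since every literal of $q$ contains $y$, every clause of $F$ carries a $y$-literal; hence eliminating $y$ by resolution removes every original clause and leaves only the $y$-resolvents, which correspond to a query $q'$ over the variables $\bigcup_i \vec{x_i}$, as required. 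Properties (1) and (2) are then exactly soundness and completeness of Davis--Putnam resolution: resolvents are logical consequences, so any $a\in q(D)$ restricts to a model of $q'$ (giving $a[vars(q')]\in q'(D')$); and equisatisfiability guarantees that any $a'\in q'(D')$ extends, by a suitable choice of value for $y$, to a model of $q$.

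The core of the argument is the size bound (4), and here I would exploit that $\{y\}$ is a nest-set. Since every literal contains $y$, the nest-set condition says exactly that the variable sets $\vec{x_1},\dots,\vec{x_n}$ are linearly ordered by $\subseteq$; after reindexing assume $\vec{x_1}\subseteq\cdots\subseteq\vec{x_n}$. A non-tautological $y$-resolvent arises only from a clause forbidding some $(\vec u,0)\in R_i$ and a clause forbidding some $(\vec w,1)\in R_j$ whose restrictions to the smaller set $\vec{x_{\min(i,j)}}$ agree (otherwise some variable occurs with both signs and the resolvent is a dropped tautology). When they agree, the literals on the smaller side are subsumed and the resolvent is literally the clause of the larger side, i.e.\ it forbids the tuple in $R_{\max(i,j)}$. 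I would therefore assign each surviving resolvent to $R_{\max(i,j)}'$, which lives on the correct variable set $\vec{x_{\max(i,j)}}$. Under this assignment the forbidden tuples collected for $R_k'$ form a subset of $\{\vec u : (\vec u,0)\in R_k\}\cup\{\vec w:(\vec w,1)\in R_k\}$, and since these two parts partition $R_k^D$ by the value of $y$, we obtain $|R_k^{D'}|\le |R_k^D|$, which is (4). This is essentially the clause-non-increase property for nest points of Ordyniak, Paulusma, and Szeider, refined to hold per relation.

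For the running time (3), observe that the total number of surviving resolvents is bounded by $\sum_k |R_k^D| = O(\norm{D})$ by the bound just established. To actually produce them within $O(\norm{D})$, I would process the relations in the order of the chain $\vec{x_1}\subseteq\cdots\subseteq\vec{x_n}$ and, for each $R_k$, decide for each of its tuples whether the complementary $y$-value is already forbidden by some smaller relation, looking up the relevant restriction via the inclusion structure; this is the construction of Brault-Baron and can be carried out with linear-time indexing over the database.

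The step I expect to be the main obstacle is (4) together with the linear-time claim: the naive number of resolvents is the product of the $0$-clauses and $1$-clauses and is therefore quadratic, so the whole argument rests on showing that the chain structure forces every surviving resolvent to coincide with a single original tuple of one designated relation. Getting the attribution exactly right --- in particular handling ties $\vec{x_i}=\vec{x_j}$ in the chain and the boundary case $i=j=k$, where a tuple becomes forbidden for both values of $y$ --- is where the care is needed, but no genuinely new idea beyond the nest-point resolution bound should be required.
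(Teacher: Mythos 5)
Your proposal is correct and follows exactly the route the paper itself takes: the paper does not prove this proposition but cites it as implicit in Brault-Baron's Lemma~16, and the surrounding prose describes precisely your construction (encode each forbidden tuple as a clause via $\sigma$, apply Davis--Putnam resolution on $y$, and use the fact that for a nest point every surviving resolvent coincides with the clause of the $\subseteq$-larger relation, which is the clause-non-increase observation from Ordyniak et al.\ refined per relation). The only part you leave slightly underdeveloped is the strict $O(\norm{D})$ bound for actually enumerating the surviving resolvents, but that is an implementation detail the paper likewise defers entirely to the cited work.
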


It was discussed in the previous section that we can not, in general, reduce a query to an equivalent query with 2 element domain without increasing the nest-set width by a $\log|Dom|$ factor. However, as mentioned in the outline above, our plan is to temporarily transform certain subqueries in such a way that they become $\beta$-acyclic and that for any variable $v$ that we want to eliminate, $\{v\}$ is a nest-set of the transformed subquery.

By the observation from Lemma~\ref{lem:acyc}, the reduction to a 2 element domain by binary encoding preserves $\beta$-acyclicity and allows us to eliminate the encoding variables of $v$ one-by-one using Proposition~\ref{prop:bool}. Afterwards, we can revert the binary encoding by mapping everything back into the original domain. This strategy allows us to lift Proposition~\ref{prop:bool} to a much more general form, allowing for variable elimination in \emph{arbitrarily large} domains. This will ultimately allow us to circumvent the obstacles described in Section~\ref{sec:prev}.

\begin{lemma}
  \label{lem:var}
  Let $q$ be the query $\{\neg R_1(\vec{x_1},y),\neg R_2(\vec{x_2},y),\dots, \neg R_n(\vec{x_n},y)\}$ on database $D$
  with $|Dom|=2^k$  and let $\{y\}$ be a nest-set of $q$.  There exists query $q'$ of the form $\neg R_1(\vec{x_1}),\neg R_2(\vec{x_2}),\dots, \neg R_n(\vec{x_n})$ and a database $D'$ with the following properties:
  \begin{enumerate}
  \item If $a \in q(D)$, then $a[vars(q')] \in q'(D')$.
  \item If $a'\in q'(D')$ then there exists $c \in Dom$ such that $a' \cup \{y \mapsto c\} \in q(D)$.
  \item $q'$ and $D'$ can be computed in $O(\norm{D}\log^2 |Dom|)$ time given $q$ and $D$ as input.
  \item For every $i\in [n]$ we have $|R^{D'}_i| \leq |R^D_i|$. %
  \end{enumerate}
\end{lemma}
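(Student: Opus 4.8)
The plan is to reduce to the boolean case of Proposition~\ref{prop:bool} via a binary encoding, eliminate the encoding bits of $y$ one at a time, and decode. Since $|Dom|=2^k$, I would fix a bijection $\mathrm{bin}\colon Dom \to \{0,1\}^k$ and replace every variable $v$ by $k$ fresh boolean variables $v_1,\dots,v_k$; each relation $R_i^D$ of arity $a_i$ becomes a boolean relation $R_i^b$ of arity $a_i k$ obtained by expanding every entry of every tuple into its $k$ bits. This induces a bijection between assignments $a\colon vars(q)\to Dom$ and boolean assignments $a^b$ under which $a$ satisfies $\neg R_i(\vec{x_i},y)$ exactly when $a^b$ satisfies $\neg R_i^b(\vec{x_i^b},y_1,\dots,y_k)$; hence $a\in q(D)$ if and only if $a^b\in q_b(D_b)$. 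The requirement $|Dom|=2^k$ is essential: it guarantees that every boolean $k$-tuple decodes to a genuine domain value, so no auxiliary constraints are needed and the encoded instance stays structurally clean.

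The crux is the structural claim that each $\{y_j\}$ is available as a nest-set at the moment we eliminate it. Because $\{y\}$ is a nest-set of $q$ and every literal contains $y$, the sets $\vec{x_1},\dots,\vec{x_n}$ are linearly ordered by $\subseteq$; since the binary encoding is monotone and injective on variable sets, the encoded sets $\vec{x_1^b},\dots,\vec{x_n^b}$ are linearly ordered as well. As every encoded literal contains all of $y_1,\dots,y_k$, the edges of $H(q_b)$ with any fixed $y_j$ removed differ only in their $\vec{x_i^b}$ parts and are therefore still linearly ordered, so each $\{y_j\}$ is a nest-set of $q_b$. This is precisely the phenomenon isolated in Lemma~\ref{lem:acyc}: splitting $y$ into copies that occur in the same edges preserves the nest-set property. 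Crucially, the property survives elimination, since Proposition~\ref{prop:bool} returns a query whose literals have variable sets equal to the input sets minus the eliminated variable. Thus after eliminating $y_1$ the literals have variable sets $\vec{x_i^b}\cup\{y_2,\dots,y_k\}$ --- again all sharing $y_2,\dots,y_k$ with linearly ordered $\vec{x_i^b}$ parts --- so $\{y_2\}$ is again a nest-set, and so on. I would therefore set $q_b^{(0)}=q_b$ and obtain $q_b^{(j)}$ by applying Proposition~\ref{prop:bool} to eliminate $y_j$ from $q_b^{(j-1)}$, for $j=1,\dots,k$, then decode $q_b^{(k)}$ back to the original domain (grouping each block of $k$ boolean columns and mapping via $\mathrm{bin}^{-1}$) to obtain $q'$ and $D'$; the literals of $q'$ are exactly $\neg R_i'(\vec{x_i})$, as required.

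Properties 1 and 2 then follow by composition. For property 1, $a\in q(D)$ gives $a^b\in q_b(D_b)$, and $k$ applications of part~1 of Proposition~\ref{prop:bool} push this to a solution of $q_b^{(k)}$, which decodes to $a[vars(q')]\in q'(D')$. For property 2, $a'\in q'(D')$ encodes to a solution of $q_b^{(k)}$; applying part~2 of Proposition~\ref{prop:bool} in reverse order recovers boolean values $b_k,\dots,b_1$ extending it to a solution of $q_b(D_b)$, which decodes to $a'\cup\{y\mapsto \mathrm{bin}^{-1}(b_1\cdots b_k)\}\in q(D)$, supplying the witness $c$. Property 4 is immediate: part~4 of Proposition~\ref{prop:bool} gives $|R_i^{b,(j)}|\le |R_i^{b,(j-1)}|$ at every step, and both encoding and decoding are bijections on tuples, so $|R_i^{D'}|\le |R_i^D|$.

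For property 3, the encoded database $D_b$ has size $O(\norm{D}\log|Dom|)$, and we perform exactly $k=\log_2|Dom|$ eliminations; by part~3 of Proposition~\ref{prop:bool} each runs in time linear in the current encoded database, whose cardinalities never grow by property~4, hence $O(\norm{D}\log|Dom|)$ per step. The decoding is no more expensive, giving a total of $O(\norm{D}\log^2|Dom|)$. The only genuinely delicate point in the argument is the maintenance of the nest-set property across the successive resolutions; everything else is bookkeeping over the encode/eliminate/decode pipeline.
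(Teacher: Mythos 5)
Your proposal is correct and follows essentially the same route as the paper: encode the instance over $\{0,1\}^k$ via a bijection $Dom\to\{0,1\}^k$, observe that each bit-variable $y_j$ of $y$ is a nest-point of the encoded query (and stays one after each elimination), apply Proposition~\ref{prop:bool} $k$ times, and decode, with properties 1--4 following by composition. Your justification of why the nest-set property persists across the successive eliminations is somewhat more explicit than the paper's (which appeals to Lemma~\ref{lem:nsw.induced.sub}), but the argument is the same in substance.
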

\begin{proof}[Proof of Lemma~\ref{lem:var}]
  Since we have $|Dom|=2^k$ there exists a bijection
  $f : Dom \to \{0,1\}^k$ that can be efficiently computed. We then
  consider the binary version $q_b$ of $q$ where every variable $x$ is
  substituted by variables $x_1,x_2,\dots,x_k$ and the respective database $D_b$ where
  every tuple $(a_1, \dots, a_{ar(R)}) \in R^D$ becomes a $(f(a_1), \dots, f(a_{ar(R)}) \in R^{D_b}$.
  We thus clearly have that $a \in q(D)$ if and only if $f(a) \in q_b(D_b)$.

  Observe that for every $i \in [k]$ we have that $\{y_i\}$ is a nest-set of $q_b$.
  Using Proposition~\ref{prop:bool} we can then successively remove all the $\{y_i\}$ successively, each elimination requiring $O(\norm{D_b})$ time. Recall from Lemma~\ref{lem:nsw.induced.sub} that nest-sets are preserved when vertices are deleted from the hypergraph.
  Let $q'_b$ and $D'_b$ be the result of eliminating $y_i$ for every $i \in [k]$ in this fashion.

  Since exactly the substitution of $y$ was deleted we can then clearly reverse the transformation from before and
  create a $q'$ of the form $\neg R_1(\vec{x_1}),\neg R_2(\vec{x_2}),\dots, \neg R_n(\vec{x_n})$ from $q_b$ as well as the corresponding database $D'$ from $D'_b$. Again, clearly  $a \in q'(D')$ if and only if $f(a) \in q'_b(D_b)$.

  Now, if $a \in q(D)$, then $f(a) \in q_b(D_b)$. By Proposition~\ref{prop:bool} then also $f(a)[vars(q'_b)] \in q'_b(D'_b)$
  and in turn also $f^{-1}(f(a)[vars(q'_b)]) = a[vars(q')] \in q'(D')$. For the other direction,
  we proceed similarly, if $a' \in q'(D')$ then also $f(a') \in q'_b(D'_b)$. Again, by Proposition~\ref{prop:bool}
  this can be extended to an assignment $a_b \in q_b(D_b)$ and thus also implicitly to a $f^{-1}(a_b) \in q(D)$. Since 
  $f(a')$ is extended by assignments for $y_1$ through $y_k$ it follows that $f^{-1}(a_b)$ extends $a'$ by some assignment for $y$.

  Finally, the transformations to binary form and back are simple rewritings and can be done in linear time. The elimination of the $y_i$ variables requires $O(k \norm{D})$ time and we have $k = \log_2(|Dom|)$.
\end{proof}

\subsection{The Elimination Procedure}
\label{sec:scq2}

We are now ready to define our algorithm for eliminating nest-sets from CQs with negation.
Our procedure for eliminating the variables of a nest-set $s$ from a
\cqnegprob instance $q$, $D$ is described in
Algorithm~\ref{alg:selim}. Updates to the database are implicit in the algorithm. This is to be understood as adding the corresponding relation for every literal that is added to the query and removing the relations for the deleted literals.
We refer to the new instance $q'$, $D'$ returned by the algorithm as the \emph{$s$-elimination} of $q$, $D$.

\begin{algorithm}[t]
  \SetKwInput{Output}{output}
\SetKwInput{Input}{input}

  \SetKw{Halt}{Halt}\SetKw{Reject}{Reject}\SetKw{Accept}{Accept}
  \SetKwProg{Fn}{Function}{}{}
  \SetKwFunction{elimpos}{Elim-$s$-Positive}
  \SetKwFunction{elimneg}{Elim-$s$-Negative}

  \Fn{\elimpos($q$)}{
    Let $P_1, \dots, P_n$ be the positive literals incident to $s$ in $q$\;
    Let $\neg R_1, \dots, \neg R_m$ be the negative literals incident to $s$ where $vars(R_j) \subseteq \bigcup_{i=1}^n vars(P_i)$\;
    $q_J \leftarrow \{P_1, \dots, P_n, \neg R_1, \dots, \neg R_m\}$\;
    $J \leftarrow $ all solutions of $q_J(D)$\;
    $P  \leftarrow \pi_{vars(J)\setminus s}(J)$\;
    $P_s \leftarrow $ the $s$-extension of $P$\;
    $C  \leftarrow P_s \setminus J$\;

    $q_1 \leftarrow (q \setminus q_J) \cup \{P, \neg C\}$\;
    \KwRet $q_1$
  }

  \Fn{\elimneg($q_1$)}{
    Let $\neg C, \neg N_1, \dots, \neg N_\ell$ be the literals incident to $s$ in $q_1$\;
    \ForEach{$i \in \ell$}{
      $N'_i \leftarrow $ the $s$-extension of $N_i$\;
    }
    $q^\neg \leftarrow \{\neg C, \neg N'_1, \dots, \neg N'_\ell\}$\;     \label{algline:qneg}
    Let $q^* = \{\neg C^*, \neg N^*_1,\dots, \neg N^*_\ell\}$ be the query obtained by successively eliminating every $v \in s$ from $q^\neg$\ using Lemma~\ref{lem:var}\;
    $q_{-s} \leftarrow (q_1 \setminus q^\neg) \cup \{\neg N^*_1, \dots, \neg N^*_\ell\}$\;
    Update relation $P$ to $P - C^*$ in $D^{-s}$\;\label{algline:update}
    \KwRet $q_{-s}$
  }

  \Begin(\tcc*[f]{\bfseries Main}){
    $q_1 \leftarrow $ \elimpos($q$)\;
    \KwRet \elimneg($q_1$)\;
  }
\caption{Eliminate nest-set $s$ from $q, D$.}
\label{alg:selim}
\end{algorithm}

The procedure begins by eliminating all positive occurrences of $s$ via the function \texttt{Elim-$s$-Positive}.
To do so, it considers the subquery $q_J$, which contains all the positive literals incident to $s$ as well as those negative literals that are fully covered (w.r.t. their variable scopes) by these positive literals. It is straightforward to compute all the solutions of $q_J$ by first joining all the positive literals and then incorporating the negative literals via anti-joins. This can be done efficiently since the variables of $q_J$ can be covered by at most $k$ positive literals by a similar argument as in the proof of Lemma~\ref{lem:hinge}. The set of solutions of $q_J$ is taken as a new relation $J$, to which we apply the mechanism from Lemma~\ref{lem:poselim}. The resulting literals $P$ and $\neg C$ replace the subquery $q_J$ in $q$ to form $q_1$.

The resulting query $q_1$ thus is equivalent to $q$ and has no variable in $s$ occurring in a positive literal. The only literals incident to $s$ that are left are those negative literals that contain variables beyond those in $q_J$, and the new literal $\neg C$. The second subprocedure, \texttt{Elim-$s$-Negative}, eliminates the variables in $s$ from these negative literals using Lemma~\ref{lem:var}.
To eliminate a variable $v \in s$ with Lemma~\ref{lem:var} we need a $\beta$-acyclic subquery where $v$ is a nest-point. We therefore do not consider the literals $\neg N_i$ directly but instead operate on their $s$-extensions.
Observe that the literals $\neg N_i$ are all incident to $s$ and therefore their variables are linearly
ordered under $\subseteq_s$. Furthermore, for every $i \in [\ell]$ we have $vars(N_i) \setminus s \supseteq vars(C) \setminus s$. Thus, for the $s$-extensions $N'_i$ of $N_i$ we have
\[
  vars(C) \subseteq vars(N'_{i_1})  \subseteq vars(N'_{i_2}) \subseteq \cdots \subseteq vars(N'_{i_\ell})
\]
Therefore, $q^\neg$ on line~\ref{algline:qneg} in the algorithm is clearly $\beta$-acyclic and all variables in $s$ are present in every literal. Thus also every variable of $s$ is a nest-point of $q^\neg$ and Lemma~\ref{lem:var} can be used to eliminate all of them. After the elimination we get the new set of literals $q^*$ which we replace $q^\neg$ with in $q_1$. The literal $\neg C^*$ always has the same set of variables as the $P$ that was introduced in \texttt{Elim-$s$-Positive}. We thus can simply account for $\neg C^*$ by subtracting the relation of $C^*$ from the relation of $P$ instead of adding $\neg C^*$ as a literal. This way we avoid the possibility of increasing the number of new literals in the resulting final query $q_{-s}$.

\begin{example}
  \label{ex:selim}
  Consider a query $q$ with nest-set $s = \{a,b,c\}$. The query has literals $P_1(a,b,c)$, $P_2(b,d)$, $\neg N_1(a,d,e,f,g)$, and $\neg N_2(c,d,e)$ incident to $s$. This setting is illustrated on hypergraph level in Figure~\ref{fig:cnegex} where the components $C_1$ and $C_2$ abstractly represent the rest of the query.

  Algorithm~\ref{alg:selim} first computes the intermediate relation $J$ containing all the solutions for $P_1(a,b,c) \land P_2(b,d)$. We remove the variables in $s$ from $J$ by projection to obtain the new literal $P(d)$. We furthermore add $\neg C(a,b,c,d)$ as in Lemma~\ref{lem:poselim} to make $q_1$ equivalent to $q$. Note that variables from $s$ now occur only in negative literals of $q_1$.

  The procedure then moves on to eliminating $s$ from the negative
  literals. First all the negative literals are expanded to cover all
  variables of $s$. The expanded negative literals
  $\neg N'_1(a,d,e,f,g,b,c)$, $\neg N'_2(c,d,e,a,b)$, $\neg C(a,b,c,d)$
  make up the subquery $q^\neg$. As discussed above, this expansion modifies the hypergraph structure in such a way that all the variables of $s$ now correspond to $1$-nest-sets of $q^\neg$ (see also Figure~\ref{fig:cnegex}).
  They can therefore be eliminated using Lemma~\ref{lem:var} to obtain $q^*$.

  Finally, replacing $q^\neg$ in $q_1$ by $q^*$ (and simplifying $\neg C^*(d) \land P(d)$) will produce the final query without variables from $s$, the $s$-elimination of $q$.
\end{example}

\begin{figure}[t]
  \centering
  \includegraphics[width=0.95\columnwidth]{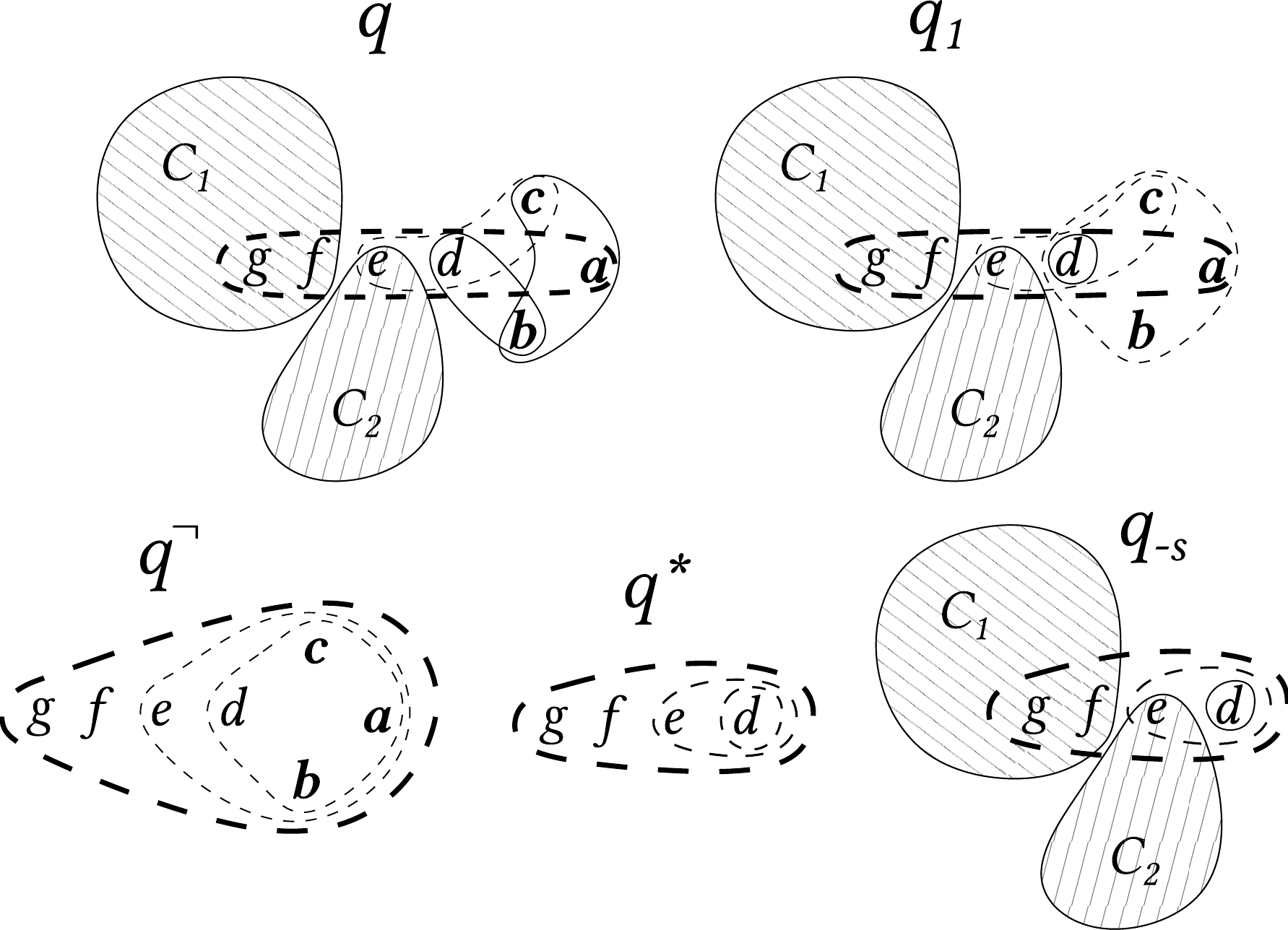}
  \Description{A visual representation of the queries described in Example~\ref{ex:selim}}
  \caption{Example of an $s$-elimination on hypergraph level. (Dashed edges correspond to negative literals.)}
  \label{fig:cnegex}
\end{figure}

What is left is to prove that an $s$-elimination has a solution if and only if the original query has a solution. This follows from the combination of the observations in Section~\ref{sec:varelim}. Moreover, the $s$-elimination is always smaller than the original query and it can be computed in polynomial time when the size of $s$ is bounded. From these three properties it will then be straightforward to establish our main result, the tractability of \cqnegprob under bounded $\nsw$. %

\begin{lemma}
  \label{lem:elim.good}
  Let $q', D'$ be the $s$-elimination of some \cqnegprob instance $q,D$. 
  Then $q'(D') \neq \emptyset$ if and only if $q(D) \neq \emptyset$.
\end{lemma}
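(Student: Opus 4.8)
The plan is to prove the equivalence by following satisfying assignments through the two subprocedures of Algorithm~\ref{alg:selim} in turn. Write $q_1$ (over the implicitly updated database $D_1$) for the output of the positive-elimination phase and $q_{-s} = q'$ for the final output. I would establish the chain $q(D) \neq \emptyset \iff q_1(D_1) \neq \emptyset \iff q_{-s}(D_{-s}) \neq \emptyset$, and in fact prove the slightly stronger statements that make the two phases compose: every solution of $q$ restricts (dropping only the variables of $s$) to a solution of $q_1$ and then of $q_{-s}$, while conversely every solution of the later query extends on $s$ to a solution of the earlier one. Here $vars(q_1) = vars(q)$ and $vars(q_{-s}) = vars(q) \setminus s$, since every non-$s$ variable of $q$ survives (those inside the absorbed subquery are kept by the projection into $P$, and those in the surviving negatives persist through elimination), so "restriction" always drops exactly $s$.

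For the first phase I would argue that $q_1$ is logically equivalent to $q$. The relation $J$ materializes precisely the solutions of the subquery $q_J$, so an assignment satisfies $q_J$ if and only if its restriction to $vars(q_J)$ lies in $J$; that is, $q_J$ is equivalent to the single positive literal given by $J$. Applying Lemma~\ref{lem:poselim} to this literal, with $\vec{x} = vars(J) \setminus s$ as the kept variables and the variables of $s$ as the projected-away ones, shows that it is in turn equivalent to $P(\vec{x}) \land \neg C$. As $q_1$ arises from $q$ only by replacing $q_J$ with $\{P, \neg C\}$ while leaving all other literals untouched, the two queries have the same solutions. I would record here that after this replacement the variables of $s$ occur only in negative literals of $q_1$ (in $\neg C$ and in the negatives $\neg N_i$ not absorbed into $q_J$), which is exactly the precondition for the second phase.

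For the second phase I would treat both directions using Lemma~\ref{lem:extension} and Lemma~\ref{lem:var}. Forward: given $a \in q_1(D_1)$, the literals incident to $s$ are $\neg C, \neg N_1, \dots, \neg N_\ell$; since $a$ satisfies each $\neg N_i$, Lemma~\ref{lem:extension} gives that $a$ also satisfies its $s$-extension $\neg N'_i$, so $a$ satisfies the $\beta$-acyclic subquery $q^\neg$. As argued just before the lemma, every variable of $s$ is a nest-point of $q^\neg$ and stays one after each elimination (Lemma~\ref{lem:nsw.induced.sub}), so iterating property~(1) of Lemma~\ref{lem:var} over all of $s$ yields that $a$ restricted to $vars(q^*)$ satisfies $q^* = \{\neg C^*, \neg N^*_1, \dots, \neg N^*_\ell\}$. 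For the merged literal, $a$ satisfies $P$ (it did in $q_1$) and $\neg C^*$, so the tuple $a[vars(P)]$ lies in the relation of $P$ but not in that of $C^*$, i.e.\ in the updated relation $P - C^*$; together with the untouched literals this gives $a[vars(q_{-s})] \in q_{-s}(D_{-s})$. Backward: given $a' \in q_{-s}(D_{-s})$, the updated $P$ forces $a'$ to satisfy $\neg C^*$, hence $a'$ satisfies $q^*$; iterating property~(2) of Lemma~\ref{lem:var} produces an assignment $b$ of $s$ with $a := a' \cup b$ satisfying $q^\neg$. Reading Lemma~\ref{lem:extension} contrapositively for negatives (if $a$ satisfies $\neg N'_i$ then $a[vars(N_i)]$ cannot be in the relation of $N_i$) shows $a$ satisfies each original $\neg N_i$ and $\neg C$; as the remaining literals depend only on $a'$, which satisfied them, we conclude $a \in q_1(D_1)$. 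Composing with the first phase closes the equivalence.

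The main obstacle is the careful bookkeeping around the $s$-extensions in the negative phase, and in particular using Lemma~\ref{lem:extension} in the correct direction: for a negative literal a \emph{single} surviving extension witnessing $\neg N'_i$ suffices to recover $\neg N_i$, whereas for positive literals one needs \emph{all} extensions, so the two directions of the equivalence rely on the two halves of that lemma. A second delicate point is that Lemma~\ref{lem:var} eliminates only one variable at a time, so I must justify that after extending the $\neg N_i$ the subquery $q^\neg$ is $\beta$-acyclic with \emph{all} of $s$ simultaneously nest-points and that this survives each successive elimination --- exactly the phenomenon isolated in Lemma~\ref{lem:acyc} together with Lemma~\ref{lem:nsw.induced.sub} --- so that the single-variable guarantees compose into the statement for the whole set $s$. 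Finally, the in-place treatment of $\neg C^*$ by the update $P \mapsto P - C^*$, rather than as a separate literal, must be checked to be satisfiability-preserving; this reduces to the observation that $vars(C^*) = vars(P)$, so that $P \land \neg C^*$ is represented faithfully by the difference relation.
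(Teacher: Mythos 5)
Your proposal is correct and follows essentially the same route as the paper's own proof: establish $q_1(D_1)=q(D)$ via Lemma~\ref{lem:poselim}, then handle the negative phase in both directions with Lemma~\ref{lem:extension} and Lemma~\ref{lem:var}, noting that the update $P \mapsto P - C^*$ is faithful because $vars(P)=vars(C^*)$. Your write-up is somewhat more explicit than the paper's (in particular about iterating Lemma~\ref{lem:var} over all of $s$ and about the asymmetric use of Lemma~\ref{lem:extension} for negative literals), but the decomposition and the key lemmas are identical.
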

\begin{proof}%
  From Lemma~\ref{lem:poselim} it follows that $P \land \neg C$ is equivalent to $q_J$. Therefore we have $q_1(D_1)=q(D)$.

  Suppose $a \in q_1(D_1)$. Then, for any $i\in [\ell]$ we have that $a$ satisfies $\neg N_i$. From Lemma~\ref{lem:extension} it follows that $a$ also satisfies $\neg N'_i$. Hence, $a[vars(q^\neg)]$ satisfies $q^\neg$.
  By Lemma~\ref{lem:var} it then follows that $a[vars(q^*)] \in q^*(D^*)$. Since the literals in $q^\neg$ were the only literals incident to $s$ we have that $a[vars(q)\setminus s]$ satisfies all literals in $q_{-s}$. The update of $P^{D_{-s}}$ in line~\ref{algline:update} of the algorithm is trivial since $vars(P) = vars(C^*)$.

  For the other side of the equivalence now assume that
  $a' \in q'(D')$. Then $a$ in particular satisfies all literals of
  $q^*$. By Lemma~\ref{lem:var} there exists an assignment $a_s$ to
  $s$ such that $a = (a' \cup a_s)$ satisfies $q^\neg$. Now, observe
  that for every $j \in [\ell]$, if $a[vars(N'_j)] \not \in N'_j$ then
  also $a[vars(N_j)] \not \in N_j$ by Lemma~\ref{lem:extension}.
  All other literals remain the same between $q'$ and $q_1$ and thus $a \in q_1(D) = q(D)$.
\end{proof}

\begin{lemma}
  \label{lem:elim.fast}
   Let $q,D$ be an instance of \cqnegprob and let $s$ be a $k$-nest-set of $q$. Then the $s$-elimination of $q,D$ can be computed in $O(|R_{max}|^k|Dom|^k poly(\norm{q}+\norm{D}))$
  time.
\end{lemma}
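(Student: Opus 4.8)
The plan is to bound the running time of the two phases of Algorithm~\ref{alg:selim}---\texttt{Elim-$s$-Positive} and \texttt{Elim-$s$-Negative}---separately, and to show that each of them stays within the claimed bound of $O(|R_{max}|^k|Dom|^k poly(\norm{q}+\norm{D}))$. Throughout, I would use that $|s|\leq k$ and that, since $s$ is a nest-set with guard $e_g$, every literal incident to $s$ has its non-$s$ variables contained in $e_g$ and the sets $vars(e)\setminus s$ are linearly ordered by $\subseteq$. Together with a Lemma~\ref{lem:hinge}-style covering argument, this is what ultimately controls the sizes of the intermediate relations.

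For \texttt{Elim-$s$-Positive} the central quantity is the size of the join relation $J$. First I would establish the claim (already indicated in the text) that the variables of $q_J$ can be covered by at most $k$ of the positive literals $P_1,\dots,P_n$: since the sets $vars(P_i)\setminus s$ are linearly ordered by $\subseteq$, a single maximal $P_g$ covers $\bigcup_i vars(P_i)\setminus s$, and since $P_g$ is incident to $s$ it already covers at least one variable of $s$; the remaining at most $k-1$ variables of $s\cap vars(q_J)$ are then covered greedily by at most $k-1$ further positive literals, mirroring the minimal-edge-cover bound $|\lambda|<k$ in the proof of Lemma~\ref{lem:hinge}. Consequently the join of these at most $k$ literals has at most $|R_{max}|^k$ tuples, and $J$ is obtained from it by semijoins with the remaining $P_i$ and antijoins with the $\neg R_j$; each such operation is linear in the size of the running relation plus $|R_{max}|$, so $J$ is computed in $O(|R_{max}|^k poly(\norm{q}+\norm{D}))$ time with $|J|\leq|R_{max}|^k$. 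Projecting to obtain $P$ keeps at most $|R_{max}|^k$ tuples, but forming the $s$-extension $P_s$ multiplies the tuple count by at most $|Dom|^{|s|}\leq|Dom|^k$, so $|P_s|$ and hence $|C|=|P_s\setminus J|$ are bounded by $|R_{max}|^k|Dom|^k$, and all of these are computed within the target bound.

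For \texttt{Elim-$s$-Negative} I would first note that each $s$-extension $N'_i$ has at most $|N_i|\cdot|Dom|^{|s|}\leq|R_{max}||Dom|^k$ tuples, so the database $D^\neg$ underlying $q^\neg=\{\neg C,\neg N'_1,\dots,\neg N'_\ell\}$ has size $O(|R_{max}|^k|Dom|^k poly(\norm{q}+\norm{D}))$, with the relation $C$ being the dominant one. As argued in the text, $q^\neg$ is $\beta$-acyclic and every variable of $s$ is a nest-point of it, and by Lemma~\ref{lem:nsw.induced.sub} this is preserved as variables are removed, so Lemma~\ref{lem:var} can be applied to eliminate the $|s|\leq k$ variables of $s$ one after another. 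Crucially, by property~(4) of Lemma~\ref{lem:var} no relation grows during these eliminations, so the size bound on $D^\neg$ is preserved throughout; each single elimination runs in $O(\norm{D^\neg}\log^2|Dom|)$ time by property~(3), and the at most $k$ eliminations therefore cost $O(k\cdot|R_{max}|^k|Dom|^k\log^2|Dom|\,poly(\norm{q}+\norm{D}))$ in total. The final rewrite steps (replacing $q^\neg$ by $q^*$ and updating $P$ to $P-C^*$) are single relational-difference operations on relations of size at most $|R_{max}|^k|Dom|^k$ and are thus also within budget. Since $k\leq|vars(q)|$ and $\log^2|Dom|$ are absorbed into $poly(\norm{q}+\norm{D})$, summing the two phases yields the claimed running time.

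The step I expect to be the main obstacle is pinning down the bound on $|J|$, i.e.\ making the ``at most $k$ positive literals suffice'' claim fully rigorous: one must argue carefully that the guard structure of the nest-set really does let $k$ positive literals cover all of $vars(q_J)$, and that the join together with the subsequent semijoins and antijoins can be scheduled so that no intermediate relation exceeds $|R_{max}|^k$ tuples. The remaining work---tracking the single $|Dom|^k$ blow-up caused by the $s$-extensions and invoking property~(4) of Lemma~\ref{lem:var} to keep the negative-elimination phase from growing the relations---is then a routine, if slightly tedious, accounting of sizes.
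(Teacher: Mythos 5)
Your proposal is correct and follows essentially the same route as the paper's own proof: the same guard-plus-$(k-1)$-literals covering argument (via safety) to bound $|J|$ by $O(|R_{max}|^k)$, the same accounting of the single $|Dom|^k$ blow-up from the $s$-extensions, and the same invocation of properties (3) and (4) of Lemma~\ref{lem:var} to bound the $k$ successive variable eliminations. The step you flag as the main obstacle is resolved in the paper exactly as you sketch it, so nothing further is needed.
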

\begin{proof}%
  We first argue that $vars(q_J)$ can be covered by at most $k$ positive literals. Since $s$ is a nest-set, one of the $P_i$ is maximal (among the positive literals) with regards to $\subseteq_s$. Just like in Lemma~\ref{lem:hinge} we are left with at most $k-1$ variables of $s$ that are not covered by this maximal positive literal. As we assume safety, every such variable requires at most one positive literal to cover it and the claim holds. Thus, $J$ will contain a subset of the tuples formed by a join of $k$ positive literals, hence $|J|\leq O(|R_{max}|^k)$.

  A join $A \join B$ is feasible in $O(|A||B|\max\{ar(A),ar(B)\}\log|Dom|)$ time by a straightforward nested loop join. As $ar(R)\leq \norm{Q}$ for any relation symbol $R$ we simplify to $O(|R_{max}|^2\norm{q}\log|Dom|)$. As usual, once $k$ positive literals that cover $vars(q_J)$ are joined, any further join is simply a semi-join and requires only linear time. It follows that the joins over all the positive literals of $q_J$ can be computed in $O(|R_{max}|^k\norm{q}\log|Dom|)$ time. The negative literals in $q_J$ can then all be removed by anti-joins, which also require linear time, i.e., $A \ajoin B$ can be computed in $O(\norm{A}+\norm{B})$ time. At most $|q|$ anti-joins need to be performed and we therefore see that we have an upper bound $O(|R_{max}|^k \norm{q}^2 \log|Dom|)$ for the time required to compute $J$.

  Computing $P$ is clearly linear in $\norm{J}$ while computing $C$ is feasible
  in $O(\norm{J}\cdot \norm{Dom^k})$ time. We use the fact that $ar(J) \leq \norm{q}$ to simplify the final bound.

  For every $i \in [\ell]$, the $s$-extension from $N_i$
  to $N'_i$ requires $O(\norm{R_{max}}\cdot\norm{Dom^k})$ time. Finally, for
  each variable $v\in s$, eliminating $v$ requires
  $O(\norm{D^{\neg}}\log^2|Dom|)$ time by Lemma~\ref{lem:var} and is
  performed $k$ times.
  As the $s$-extension can increase the relation size by a factor of at most $|Dom|^k$ we have $\norm{D^\neg}\leq \norm{D}|Dom|^k$. Note that the linear factor $k$ is simplified away in the final bound by observing $k \leq vars(q)$. 
\end{proof}

\begin{lemma}
  \label{lem:elim.small}
  Let $q'$, $D'$ be the $s$-elimination of some \cqnegprob instance $q$, $D$. 
  Then $\norm{q'}\leq \norm{q}$ and $\norm{D'} \leq \norm{D}$.
\end{lemma}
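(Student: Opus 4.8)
The plan is to compare $D'$ to $D$ and $q'$ to $q$ literal by literal. Since the $s$-elimination introduces no new domain elements, $Dom$ (and hence $\norm{Dom}$) is unchanged, so for the database it suffices to bound the total size of the relations. Every literal of $q$ not incident to $s$ survives untouched in both $q_1$ and $q_{-s}$, so the only new relations I must account for are those produced by Algorithm~\ref{alg:selim}: the positive literal $P$ (as finally updated to $P - C^*$) and the negative literals $\neg N_1^*, \dots, \neg N_\ell^*$. These replace exactly the literals incident to $s$, namely $P_1, \dots, P_n$, $\neg R_1, \dots, \neg R_m$, and $\neg N_1, \dots, \neg N_\ell$. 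It therefore suffices to show $\norm{P} \le \sum_j \norm{P_j}$ and $\norm{N_i^*} \le \norm{N_i}$ for each $i$, together with the analogous inequalities on arities for the query size.

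For the positive literal I would exploit that $s$ is a nest-set, so the positive literals incident to $s$ are linearly ordered by $\subseteq_s$; let $P_{max}$ be the maximal one. Then $vars(J) \setminus s = \left(\bigcup_j vars(P_j)\right) \setminus s = vars(P_{max}) \setminus s = vars(P)$. Crucially, although the intermediate join $J$ may be large, every solution in $J$ satisfies $P_{max}$ and hence restricts to a tuple of $P_{max}^D$; thus $\pi_{vars(P_{max})}(J) \subseteq P_{max}^D$, and since $P$ is a further projection we obtain $|P| \le |P_{max}| \le |R_{max}|$. Together with $ar(P) = |vars(P_{max}) \setminus s| \le ar(P_{max})$ this yields $\norm{P} \le \norm{P_{max}} \le \sum_j \norm{P_j}$. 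The final update $P \leftarrow P - C^*$ only deletes tuples and so cannot increase the size.

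The negative literals are the delicate part, and I expect this to be the main obstacle. The literal $\neg N_i^*$ is over $vars(N_i) \setminus s$, so its arity already drops, $ar(N_i^*) \le ar(N_i)$. The difficulty is the cardinality: the $s$-extension first inflates $N_i$ to $N_i' = N_i \times Dom^{|s \setminus vars(N_i)|}$, so a naive appeal to property~4 of Lemma~\ref{lem:var} only gives $|N_i^*| \le |N_i'|$, which is too weak. The key observation I would make is that the variables added by the extension occur in $N_i'$ as \emph{full} Cartesian blocks over $Dom$, and that eliminating them via the binary resolution of Lemma~\ref{lem:var} collapses these blocks (each eliminated bit of such a variable halves the block), so that once the extension variables are removed the relation has size at most $|N_i|$; property~4 of Lemma~\ref{lem:var} then guarantees that eliminating the remaining variables of $s$ never increases the count. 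This gives $|N_i^*| \le |N_i|$ and hence $\norm{N_i^*} \le \norm{N_i}$.

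Putting these bounds together proves $\norm{D'} \le \norm{D}$, since the dropped literals $\neg R_j$ and the auxiliary $\neg C$ only contribute to the right-hand side. The query bound $\norm{q'} \le \norm{q}$ follows from the same estimates: safety guarantees $n \ge 1$, so the number of literals does not grow when the $n+m+\ell$ literals incident to $s$ are replaced by the $1+\ell$ literals $P$ and $N_1^*, \dots, N_\ell^*$; the arity inequalities $ar(P) \le ar(P_{max})$ and $ar(N_i^*) \le ar(N_i)$ show the total arity does not grow; and since $vars(q') = vars(q) \setminus s$ the factor $\log|vars(q)|$ does not increase either.
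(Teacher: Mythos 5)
Your overall accounting is the same as the paper's: literal-by-literal replacement, the observation that safety gives $n\geq 1$ so the literal count cannot grow, the chain $vars(P_{i_1})\setminus s\subseteq\cdots\subseteq vars(P_{i_n})\setminus s$ to get $vars(P)=vars(P_{i_n})\setminus s$ and $|P^{D'}|\leq|P_{i_n}^D|$, and the remark that no arity increases. Where you genuinely depart from the paper is on the negative literals, and you are right to: the paper simply writes ``$|N^*_i|\leq|N_i|$ by Lemma~\ref{lem:var}'', but property~4 of that lemma is stated relative to its \emph{input}, which in Algorithm~\ref{alg:selim} is the $s$-extension $N_i'$ with $|N_i'|=|N_i|\cdot|Dom|^{|s\setminus vars(N_i)|}$. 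A naive application therefore only yields $|N_i^*|\leq|N_i'|$, which is exactly the bound you flag as too weak (and which would break the polynomial bound in Theorem~\ref{thm:safe.scq}, since the blowup would compound along the NEO). Identifying and addressing this is a real improvement over the paper's own justification.

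That said, your repair is not yet airtight. The ``each eliminated bit halves the Cartesian block'' picture is a per-relation argument that presumes you eliminate the extension variables of $N_i$ first; but the sets $s\setminus vars(N_i)$ need not be nested across $i$, so no single global elimination order is ``extension variables first'' for every literal simultaneously, and moreover resolvents can cross between the literals of $q^\neg$ (a resolvent of a clause from $N_i'$ and one from $N_j'$ lands in the larger relation), so the product structure $M_i\times Dom^{r}$ is not literally preserved step by step. The clean, order-independent way to close the gap is the paper's own Appendix Lemma~\ref{lem:dpres.minus}: after eliminating the entire nest-set, every surviving clause lies in $F-s$, whose clauses arising from $N_i'$ are exactly the (at most $|N_i|$) projections of the original tuples of $N_i$; this gives $\sum_i|N_i^*|\leq\sum_i|N_i|$, which together with your arity observations suffices for $\norm{D'}\leq\norm{D}$. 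With that substitution your proof is correct and, on this point, more honest than the paper's.
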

\begin{proof}%
  For the query, we remove $n$ positive literals. Observe that $n\geq 1$ because we assume $q$ is safe and $s \neq \emptyset$. We only add one new positive literal $P$. Hence, the number of positive literals can not increase through $s$-elimination. Every new negative literal in $q'$ corresponds one-to-one to a negative literal that was removed from $q$. We thus have less or equal literals and strictly less variables in $q'$ than in $q$.

  For the database observe that the variables of the literals $P_1, \dots, P_n$ can be ordered as follows
  \[
    vars(P_{i_1}) \setminus s \subseteq vars(P_{i_2}) \setminus s \subseteq \cdots  \subseteq vars(P_{i_n}) \setminus s
  \]
  By construction we have that $vars(P) = \left(\bigcup_{i=1}^{n} vars(P_i)\right) \setminus s$. As the union over a chain of subsets is simply the maximal element of the chain we have $vars(P) = vars(P_{i_n})\setminus s$.
  It is then easy to see from the construction that $|P^{D'}| \leq |P_{i_n}^D|$.
  For the new negative literals we have $|N^*_i | \leq |N_i|$ by Lemma~\ref{lem:var}. Since no arities can increase in the $s$-elimination we arrive at $\norm{D'}\leq \norm{D}$.
\end{proof}

Finally, note that the construction of the $s$-elimination preserves the simplifying assumptions made in the beginning of the section. The domain is never modified by the procedure and if $q$ was safe, then so is its $s$-elimination $q'$. Moreover, we also have $H(q') = H(q) - s$ and can therefore repeatedly apply this elimination along a $k$-NEO to decide whether $q(D) \neq \emptyset$.

\begin{proof}[Proof of Theorem~\ref{thm:safe.scq}]
  Let $\mathcal{Q}$ be a class of \cqnegprob instances and say there exists a constant $k$ such that the nest-set width of every query in $\mathcal{Q}$ is at most $k$.
  
  Let $q$, $D$ be an instance of \cqnegprob with $nsw(q) \leq k$. First, we compute a $k$-NEO $\calO = (s_1,\dots, s_\ell)$ which is feasible in polynomial time for constant $k$  by Corollary~\ref{cor:fpt}.
  Then perform the following procedure that starts with $q_0 := q, D_0 := D, i:=1$:
  \begin{enumerate}
  \item If $i > \ell$, accept the input. Otherwise continue with the next step.
  \item Let $q_{i}, D_{i}$ be the $s_i$-elimination of $q_{i-1},D_{i-1}$. Rename the new $P$ literal to $P_i$.
  \item If $P_i^{D_i} = \emptyset$, reject the input. Otherwise increment $i$ by $1$ and continue from step 1.
  \end{enumerate}
  In case of acceptance,
  the procedure has eliminated all variables and only the 0-ary literal $P_\ell$ is left in $q_\ell$. Since the procedure did not reject in the step before, we have $P_\ell^{D_\ell} \neq \emptyset$, i.e., it contains the empty tuple and thus $q_\ell(D_\ell)=\{()\} \neq \emptyset$.
  On the other hand, if the procedure rejects at step $i$, then $P_i^{D_i} = \emptyset$. The literal $P_i$ occurs positively in $q_i$  and it follows that $q_i(D_i) = \emptyset$.

  By Lemma~\ref{lem:elim.good} we have $q_i(D_i) \neq \emptyset$ if and only if $q(D) \neq \emptyset$ for all $i \in [\ell]$. The described procedure is therefore sound and complete. The computation of $q_i,D_i$ from $q_{i-1},D_{i-1}$ is performed at most $\ell \leq vars(q)$ times. By Lemmas~\ref{lem:elim.small} and~\ref{lem:elim.fast} the procedure requires only polynomial time in $\norm{q}$ and $\norm{D}$.
\end{proof}

\subsection{An Application: \sat Parameterized by Nest-Set Width}
\label{sec:satfpt}

Note that the nest-set width appears only in the exponent of $|R_{max}|$ and $|Dom|$ in the time bound from Lemma~\ref{lem:elim.fast}. A reduction to \cqnegprob where these two cardinalities can be constantly bounded thus shows fixed-parameter tractability of the original problem when parameterized by $\nsw$.

Recall the reduction from \sat to \cqnegprob given in the beginning of Section~\ref{sec:scq}.
For a formula $F$ in CNF consisting of clauses $C_1,\dots,C_n$. For every clause $C_i$ with variables $v_1, \dots, v_\ell$
we create a literal $\neg R_i(v_1,\dots,v_\ell)$  where the relation $R_i^D$ contains the single tuple corresponding to the only assignment that does not satisfy the clause. To satisfy our assumption of safety we also create a positive literal $V_j(v_j)$ for every variable $v_j$ with $V^D_j = \{(0),(1)\}$, i.e., the whole domain.

We see that this reduction produces a \cqnegprob instance with $|R_{max}|=2$, and $|Dom|=2$ while $\norm{q}$ and $\norm{D}$ are linear in the size of $F$.
Plugging these values into the bound from Lemma~\ref{lem:elim.fast} gives us a $2^{O(k)}poly(|F|)$ time bound for an $s$-elimination in this query, where $k = \nsw(H(F))$ is the \emph{nest-set width of the formula $F$}. Hence, repeated $s$-elimination along a $k$-NEO gives us a fixed-parameter tractable procedure for \sat.

\begin{theorem}
  \label{thm:sat}
  \sat for propositional CNF formulas is fixed\\-parameter tractable when parameterized by the nest-set width of the formula.
\end{theorem}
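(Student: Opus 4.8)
The plan is to exhibit a reduction from \sat to \cqnegprob that keeps both $|Dom|$ and $|R_{max}|$ constant, and then to run the nest-set elimination machinery of Theorem~\ref{thm:safe.scq}, reading off the $\nsw$-dependence only in the (now constant) exponential factors $|R_{max}|^k$ and $|Dom|^k$ of Lemma~\ref{lem:elim.fast}.

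First I would fix the reduction precisely. Given a CNF formula $F$ with clauses $C_1,\dots,C_n$, for each clause $C_i$ introduce a negative literal $\neg R_i$ whose scope is the variable set of $C_i$ and whose relation $R_i^D$ holds the single tuple encoding the unique falsifying assignment of $C_i$; to guarantee safety without altering the structure, add for every variable $v_j$ a positive unary literal $V_j(v_j)$ with $V_j^D = \{(0),(1)\}$. This transformation is linear in $|F|$, the resulting query $q$ is safe, and crucially $H(q) = H(F)$, so $\nsw(q) = \nsw(F) =: k$. Moreover $|Dom| = 2$ and, since every relation is either a single tuple or the full binary domain, $|R_{max}| = 2$, while $\norm{q}$ and $\norm{D}$ are linear in $|F|$.

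Next I would invoke the procedure underlying Theorem~\ref{thm:safe.scq}: compute a $k$-NEO $\calO = (s_1,\dots,s_\ell)$, which is feasible in $2^{O(k^2)}poly(|F|)$ time by Corollary~\ref{cor:fpt}, and then perform iterated $s_i$-elimination, rejecting the moment the surviving positive literal becomes empty and accepting once all variables are gone. Soundness and completeness are precisely those established for Theorem~\ref{thm:safe.scq}. By Lemma~\ref{lem:elim.fast}, a single $s_i$-elimination runs in $O(|R_{max}|^k|Dom|^k\,poly(\norm{q}+\norm{D}))$ time; substituting $|R_{max}| = |Dom| = 2$ yields $2^{O(k)}poly(|F|)$ per step, and there are at most $|vars(F)|$ steps.

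The one point that genuinely requires care — and the only place the argument goes beyond bookkeeping — is verifying that $|R_{max}|$ and $|Dom|$ remain constant across all $\ell$ iterations; otherwise the $|R_{max}|^k$ factor could compound from step to step, e.g.\ growing to $2^k$, then $(2^k)^k$, into a tower in $k$ and destroying fixed-parameter tractability. Here I would appeal to Lemma~\ref{lem:elim.small}: each $s$-elimination produces a new positive literal $P$ with $|P^{D'}| \leq \max_i |P_i^D|$ and new negative literals with $|N^*_i| \leq |N_i|$, so no individual relation's cardinality ever increases; together with the fact that the procedure never modifies the domain, this keeps $|R_{max}| \leq 2$ and $|Dom| = 2$ throughout. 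Consequently the total running time is $2^{O(k^2)}poly(|F|)$, dominated by the NEO computation, which is fixed-parameter tractable in $k = \nsw(F)$, as claimed.
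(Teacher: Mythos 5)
Your proposal is correct and follows essentially the same route as the paper: the identical reduction (one negative literal per clause holding the unique falsifying tuple, unary positive literals for safety, giving $|Dom|=|R_{max}|=2$ and $H(q)=H(F)$), followed by iterated $s$-elimination along a $k$-NEO with the bound of Lemma~\ref{lem:elim.fast}. Your explicit check via Lemma~\ref{lem:elim.small} that relation cardinalities do not compound across iterations is a point the paper leaves implicit, and it is a worthwhile addition rather than a deviation.
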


Interestingly, the standard Davis-Putnam resolution procedure for \sat is also fixed-parameter polynomial when parameterized by nest-set width if resolution is performed according to a NEO. The resulting algorithm is notably different from the one induced by the above reduction and requires some further auxiliary results. We recall the details of Davis-Putnam resolution and give the alternative fixed-parameter polynomial algorithm in Appendix~\ref{sec:sat}. This serves as further evidence that $\nsw$ is a natural generalization of $\beta$-acyclicity.

Building on the discussions from Section~\ref{sec:betahw} it is also
interesting to note that \sat is known to be \textsf{W[1]}-hard when
parameterized by incidence clique
width~\cite{DBLP:journals/tcs/OrdyniakPS13}.

\section{Conclusion \& Outlook}
\label{sec:conclusion}
In this paper, we have introduced nest-set width in an
effort to generalize tractability results for $\beta$-acyclicity.  We
have established the relationship between nest-set width and related
width measures. In particular, $\nsw$ is a specialization of \bhw. In
an improvement over \bhw, checking $\nsw(H) \leq k$ is shown to be fixed-parameter
tractable when parameterized by $\nsw$. Finally, we verify that $\nsw$ is useful for generalizing tractability from $\beta$-acyclicity by proving new tractability results for boolean \cqneg evaluation and \sat.

The possibilities for future work are plentiful. With any new island of tractability, there comes a question of whether the result can be generalized further or if this is the limit of tractability for the problem. Both kinds of answers would be of great interest for \cqneg evaluation as well as \sat.

An interesting question has been left open in this paper: the relationship between nest-set width and point-width.
Since the tractable computation of point-decompositions remains an open question, the applicability of point-width for algorithmic results in our setting is not clearly established.
Recall, proving the relationship of point-width to $\beta$-acyclicity in~\cite{DBLP:conf/lics/Carbonnel0Z19} already required considerable effort and it is therefore likely that showing the relationship to nest-set width will be even more challenging and requires individual study. Nonetheless, we believe this to be an important question for the overall program of $\beta$-acyclicity generalizations.

Finally, our results make us hopeful that $\nsw$ can find broader application beyond the problems tackled in this paper.
For example, the tractability of \#\sat for $\beta$-acyclic formulas from~\cite{DBLP:conf/stacs/Brault-BaronCM15} is based on nest point elimination and its generalization is thus a natural candidate for further investigation. Another promising avenue of research is the application to the worst-case analysis for $\beta$-acyclic queries presented in~\cite{DBLP:conf/pods/NgoNRR14}.

\begin{acks}
  The author is very grateful to Reinhard Pichler for his valuable comments.
This work was supported by the Austrian Science Fund (FWF) project P30930, the Royal Society “RAISON DATA” project  (Reference No. RP\textbackslash{}R1\textbackslash{}201074), and the VADA (Value Added Data Systems, EP/M025268/) extension project by the University of Oxford.
\end{acks}

\bibliographystyle{ACM-Reference-Format}
\bibliography{pods21}

\appendix

\section{Nest-Set Width \& Davis-Putnam Resolution}
\label{sec:sat}
A particularly important problem where the restriction to $\beta$-acyclic instances leads to tractability is the propositional satisfiability problem (\sat)~\cite{DBLP:journals/tcs/OrdyniakPS13}. In order for formulas to relate directly to hypergraphs we consider only propositional formulas in conjunctive normal form (CNF). The hypergraph $H(F)$ of a formula $F$ has as its vertices the variables of $F$ and  edges $E(H(F)) = \{vars(C) \mid C \mbox{ clause in } F\}$ .
We then alternatively refer to $\nsw(H(F))$ as the nest-set width $\nsw(F)$ of the propositional CNF formula $F$. In this section we show that \sat is fixed-parameter tractable when parameterized by the nest-set width of the formula.

Building on the approach used in~\cite{DBLP:journals/tcs/OrdyniakPS13}, we will show that Davis-Putnam (DP) resolution~\cite{DBLP:journals/jacm/DavisP60} on a nest-set
will always decrease the number of clauses in the formula. While the number of clauses can increase in the intermediate steps, before resolution on every variable from the nest-set has been performed, the intermediate blowup can be bounded in the size of the nest-set.

\subsection{Resolution}

For this section we consider a \emph{clause} $C$ as a set of literals $x$ or $\overline{x}$ where $x$ is a variable.
If a literal is of the form $x$ we say it is positive and otherwise it is negative. This is also referred to as the \emph{phase} of the literal. If $C$ is a clause, we write $\overline{C}$ for the clause $\{\overline{\ell}\mid \ell \in C\}$, i.e., every clause is switched (note that $\overline{\overline{x}} = x$).
A formula $F$ (in CNF) is a set of clauses.

For two clauses $C,D \in F$ with $C \cap \overline{D} = \{x\}$
we call $(C \cup D) \setminus \{x,\overline{x}\}$ the \emph{$x$-resolvent} of $C$ and $D$. Note that we don't need to care
about cases where $C \cap \overline{D} \supset \{x\}$. Say the intersection equals $\{x,y\}$. Resolving on $x$ would yield
a new clause containing both $y$ and $\overline{y}$. Such a new clause is therefore trivially satisfied and of no interest.

Let $DP_x(F)$ be the formula obtained by first adding all
$x$-resolvents to $F$ and then removing all clauses where $x$ occurs.
Such a step is commonly called a \emph{(Davis-Putnam) resolution
  step}~\cite{DBLP:journals/jacm/DavisP60}. It is well-known that $F$
and $DP_x(F)$ are equisatisfiable for all variables $x$. %

We write
$DP_{x_1,\dots,x_q}(F)$ for
$DP_{x_q}(DP_{x_{q-1}}(\cdots DP_{x_1}(F)\cdots))$.  We also write
$DP_s$ for a set $s \subseteq vars(F)$ if the particular order does
not matter.  The procedure $DP_{vars(F)}(F)$ will produce either an
empty clause at some step or end in an empty formula.  In the first
case, $F$ is unsatisfiable, and conversely, $F$ is satisfiable in the
second case.

The hypergraph $H(F)$ of a formula $F$ has vertex set $vars(F)$ and edges $\{vars(C) \mid C \in F\}$.
For a set of variables $s=\{v_1,\dots,v_q\}$ we define $\overline{s}:= \{\overline{v_1},\dots,\overline{v_q}\}$.
Let $C$ be a clause and $s$ a set of variables. We write $C-s$ for $C \setminus (s \cup \overline{s})$, the clause with all literals of variables from $s$ removed. We extend this notation to formulas as
\(
  F-s := \{C -s\mid C \in F\}
\)
for formula $F$.

\subsection{Davis-Putnam Resolution over Nest-Sets}

In general, $DP_x(F)$ can contain more clauses than $F$ and the whole
procedure can therefore require exponential time (and space). However,
as we will see, if we eliminate nest-sets then the increase in clauses is only temporary. We start by showing that resolution on a variable in a nest-set will only produce resolvents that remain, in a sense, local to the nest-set.

\begin{lemma}
  \label{lem:nestres}
  Let $s=\{v_1, \dots, v_q\}$ be a nest-set of $H(F)$. For every $s' \subseteq s$
  and any clauses  $C,D \in DP_{s'}(F)$ that contain a variable from $s$ we have that $vars(C) \subseteq_s vars(D)$,
  or vice versa.
\end{lemma}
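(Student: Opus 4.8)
The plan is to prove the statement by induction on the size of $s'$. For the base case $s' = \emptyset$ we have $DP_{s'}(F) = F$, and a clause $C \in F$ contains a variable from $s$ precisely when $vars(C) \in \neigh(s, H(F))$. Since $s$ is a nest-set of $H(F)$, the family $\neight(s, H(F))$ is linearly ordered by $\subseteq$, which is exactly the assertion that any two such clauses are comparable by $\subseteq_s$. Thus the base case is immediate from the definition of a nest-set.

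For the inductive step, fix $x \in s \setminus s'$ and write $DP_{s' \cup \{x\}}(F) = DP_x(G)$ where $G := DP_{s'}(F)$. By the induction hypothesis, any two clauses of $G$ that contain a variable from $s$ are comparable by $\subseteq_s$. The clauses of $DP_x(G)$ are of two kinds: the clauses of $G$ not containing $x$ (the \emph{survivors}), and the newly added $x$-resolvents. The crucial observation is that an $x$-resolvent behaves, modulo $s$, exactly like the larger of its two parents. Concretely, if $R$ is the $x$-resolvent of $C, D \in G$, then both $C$ and $D$ contain the variable $x \in s$, so by the induction hypothesis $vars(C)$ and $vars(D)$ are $\subseteq_s$-comparable; assuming w.l.o.g.\ $vars(C) \subseteq_s vars(D)$, and using $vars(R) = (vars(C) \cup vars(D)) \setminus \{x\}$ together with $x \in s$, I compute
\[
  vars(R) \setminus s = \big(vars(C) \cup vars(D)\big) \setminus s = vars(D) \setminus s,
\]
so $R$ and its larger parent $D$ lie in the same $\subseteq_s$-class.

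With this identity in hand, comparability of any two relevant clauses of $DP_x(G)$ follows by a short case distinction. Two survivors are comparable directly by the induction hypothesis. For a survivor $C_1$ (which necessarily contains a variable from $s \setminus \{x\}$) and a resolvent $R$ with larger parent $D$, both $C_1$ and $D$ contain a variable from $s$, hence are $\subseteq_s$-comparable by the hypothesis, and since $vars(R) \setminus s = vars(D) \setminus s$ the same comparison transfers to $C_1$ and $R$. Finally, two resolvents $R_1, R_2$ with larger parents $D_1, D_2$ reduce, via the identity, to the comparability of $D_1$ and $D_2$, both of which contain $x$ and are therefore comparable by the hypothesis. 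I expect the main obstacle to be precisely this inductive step, and in particular the realization that resolving on a variable of $s$ cannot destroy the chain: since the two parents of any resolvent share the resolution variable $x \in s$, they are already forced to be comparable by $\subseteq_s$, and the resolvent simply inherits the position of the larger one. Some care is also needed to verify that the resolvent's variable set is genuinely $(vars(C) \cup vars(D)) \setminus \{x\}$, which relies on the condition $C \cap \overline{D} = \{x\}$ ruling out any further variable occurring in both phases.
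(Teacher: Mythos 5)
Your proof is correct and follows essentially the same route as the paper: induction on $|s'|$, with the key observation that an $x$-resolvent $R$ of parents $C,D$ (which must be $\subseteq_s$-comparable since both contain $x\in s$) satisfies $vars(R)\setminus s = vars(D)\setminus s$ for the larger parent $D$, so it inherits $D$'s position in the chain. Your explicit survivor/resolvent case distinction at the end just spells out what the paper compresses into its final sentence.
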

\begin{proof}
  Proof is by induction on the cardinality of $s'$. If $s'=\emptyset$ no resolution takes place and the statement follows from the fact that $s$ is a nest-set.

  Say $s'= \{v_1, \dots, v_k\}$ and consider
  $C,D \in DP_{v_1,\dots,v_{k-1}}(F)$ such that both clauses contain a
  variable from $s$. According to the induction hypothesis, w.l.o.g.,
  we have $vars(C) \subseteq_s vars(D)$ (the other case is symmetric). To show that the statement also holds for $s'$
  we show that each $v_k$-resolvent of $DP_{v_1,\dots,v_{k-1}}(F)$ also satisfies this property.

  In particular, for such $C,D$ and $C \cap \overline{D} = \{v_k\}$ we show that $vars(R) \setminus s = vars(D)\setminus s$, where $R$ is the $v_k$-resolvent of $C$ and $D$. Since $vars(D) \setminus r$ is comparable to all other clauses that contain a variable of $s$, so is $vars(R)\setminus s$ and the statement also holds for $DP_{s'}(F)$.

  It is not hard to see that the two sets are in fact the same:
  Since $R = (C \cup D) \setminus \{v_k, \overline{v_k}\}$ and $v_k\in s$ we also have $var(R) \setminus s= (var(C) \cup var(D)) \setminus s$. Recall that we have $(var(C) \setminus s) \subseteq (var(D) \setminus s)$
  and therefore $(var(C) \cup var(D)) \setminus s = var(D) \setminus s$.
  Note that the argument doesn't change if $D \cap \overline{C} = \{v_k\}$.
\end{proof}

\begin{lemma}
  \label{lem:dpres.minus}
  Let $s=\{v_1, \dots, v_q\}$ be a nest-set of $H(F)$. For every $s' \subseteq s$
  and any clause  $C \in DP_{s'}(F)$  we have that $C-s \in F-s$.
\end{lemma}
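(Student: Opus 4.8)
The plan is to prove the statement by induction on $|s'|$, in the same style as the proof of Lemma~\ref{lem:nestres}. For the base case $s' = \emptyset$ there is nothing to resolve: $DP_\emptyset(F) = F$, so any clause $C \in F$ trivially satisfies $C - s \in F - s$.

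For the inductive step I would write $s' = \{v_1, \dots, v_k\}$ and set $G = DP_{v_1,\dots,v_{k-1}}(F)$, so that $DP_{s'}(F) = DP_{v_k}(G)$, and assume as induction hypothesis that $C' - s \in F - s$ for every $C' \in G$. By the definition of a resolution step, every clause $C \in DP_{v_k}(G)$ is either a clause of $G$ that does not contain $v_k$ --- in which case $C - s \in F - s$ is immediate from the induction hypothesis --- or it is a $v_k$-resolvent $R = (C_1 \cup C_2) \setminus \{v_k,\overline{v_k}\}$ of two clauses $C_1, C_2 \in G$ with $C_1 \cap \overline{C_2} = \{v_k\}$. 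The whole argument reduces to handling this resolvent case.

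The key observation is that since $v_k \in s$, the removal of the literals on variables of $s$ already discards $v_k$ and $\overline{v_k}$, so $R - s = (C_1 - s) \cup (C_2 - s)$. Both $C_1$ and $C_2$ contain the variable $v_k \in s$, so Lemma~\ref{lem:nestres} applies and gives, without loss of generality, $vars(C_1) \subseteq_s vars(C_2)$, i.e.\ $vars(C_1 - s) \subseteq vars(C_2 - s)$. I would then upgrade this containment of \emph{variable} sets to a containment of the actual \emph{literals}: take any $\ell \in C_1 - s$, say on variable $y \neq v_k$. Its variable lies in $vars(C_2 - s)$, so $C_2$ contains a literal on $y$; if that literal were $\overline{\ell}$, then $\ell$ would belong to $C_1 \cap \overline{C_2}$, contradicting $C_1 \cap \overline{C_2} = \{v_k\}$ since $y \neq v_k$. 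Hence $C_2$ contains $\ell$ itself, giving $\ell \in C_2 - s$. Thus $C_1 - s \subseteq C_2 - s$, so $R - s = C_2 - s$, which lies in $F - s$ by the induction hypothesis.

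I expect the last step --- passing from variable containment to literal containment --- to be the only real subtlety. Lemma~\ref{lem:nestres} controls only the variable sets under $\subseteq_s$, whereas the conclusion requires the signed literals of $C_1 - s$ to sit inside $C_2 - s$. The non-tautology condition $C_1 \cap \overline{C_2} = \{v_k\}$ built into the definition of a resolvent is exactly what rules out a phase clash on any $y \neq v_k$, and I would make sure to invoke it explicitly rather than treat it as folklore.
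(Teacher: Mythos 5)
Your proposal is correct and follows essentially the same route as the paper's proof: induction on $|s'|$, with the resolvent case handled by invoking Lemma~\ref{lem:nestres} to get $vars(C_1)\subseteq_s vars(C_2)$ and then using the non-tautology condition $C_1\cap\overline{C_2}=\{v_k\}$ to promote the variable containment to $C_1-s\subseteq C_2-s$, whence $R-s=C_2-s\in F-s$. The paper compresses the phase-clash argument into one sentence, but the content is identical to your "subtle" step.
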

\begin{proof}
  Proof is by induction on the cardinality of $s'$. For $s'=\emptyset$
  the statement is true by definition.

  Suppose the statement holds for $|s'|<k$ and let
  $s'= \{v_1, \dots, v_k\}$. Consider a
  $C \in DP_{s'}(F) \setminus DP_{s'\setminus \{v_k\}}(F)$, i.e., a
  new clause obtained by the resolution on $v_k$ after resolution on
  all the other variables of $s'$ was already performed.  Thus,
  $C = (C_1 \cup C_2) \setminus \{v_k, \overline{v_k}\}$ for some
  $C_1, C_2 \in DP_{s'\setminus \{v_k\}}(F)$ where $C_1 \cap \overline{C_2} = \{v_k\}$.  By
  Lemma~\ref{lem:nestres} we have, w.l.o.g.,
  $vars(C_1) \setminus s \subseteq vars(C_2) \setminus s$ (the other case is symmetric).

  Now, for every variable
  $v \in (vars(C_1) \cap vars(C_2)) \setminus s$, we know that $v$
  occurs in the same phase in both clauses since
  $C_1 \cap \overline{C_2} = \{v_k\}$. Thus, also $C_1-s \subseteq C_2 - s$
  and therefore also $C-s = C_2 -s$. By the induction
  hypothesis we have $C_2-s \in F-s$ and the proof is complete.
\end{proof}

While Lemma~\ref{lem:dpres.minus} has $DP_s(F)\leq |F|$ as a direct consequence, it also gives insight into the size of the intermediate formulas. In particular, for any non-empty $s' \subseteq s$ we have $|DP_{s'}(F)| \leq 3^{k-|s'|}|F|$.
This can be observed from noting that any clause $C \in  DP_{s'}(F)$ is an extension of a clause from $F-s$ by any combination of literals for the variables $s \setminus s'$. Specifically, there are three possibilities for every such variable, it either occurs positively, negatively or not at all in $C$. Thus, any clause in $F-s$ has only $3^{|s\setminus s'|} = 3^{k-|s'|}$ extensions.

From Section~\ref{sec:fpt} we know that we can compute a $k$-NEO of $H(F)$ in fixed-parameter polynomial time when parameterized by $k$. From the size bound above it is then easy to see that each resolution step in a nest-set can be performed in fixed-parameter polynomial time. Hence, repeating the resolution step along a $k$-NEO is an fixed-parameter tractable procedure for \sat parameterized by $k$.

\begin{theorem}
  \label{thm:sat.dpres}
  \sat for propositional CNF formulas is fixed-parameter tractable when parameterized by the nest-set width of the formula.
\end{theorem}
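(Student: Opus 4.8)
The plan is to turn the size bounds established right after Lemma~\ref{lem:dpres.minus} into an explicit elimination algorithm guided by a nest-set elimination ordering. First I would invoke Corollary~\ref{cor:fpt} (via the algorithm of Section~\ref{sec:fpt}) to compute a $k$-NEO $\calO = (s_1, \dots, s_\ell)$ of $H(F)$ in time $2^{O(k^2)}poly(\norm{H(F)})$, where $k = \nsw(F)$. The algorithm then sets $F_0 := F$ and successively computes $F_i := DP_{s_i}(F_{i-1})$, resolving on the variables of the $i$-th nest-set (in any order, since $DP_{s_i}$ is order-independent up to equisatisfiability). Because the $s_i$ jointly cover all of $vars(F)$, the final formula $F_\ell$ is variable-free, so $F$ is unsatisfiable exactly when an empty clause appears at some step and satisfiable otherwise; correctness is then immediate from the soundness and completeness of Davis--Putnam resolution.

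The crux of the argument is showing that each round runs in fixed-parameter polynomial time and, crucially, that the formula does not accumulate size across rounds. For the intra-round blowup I would use the observation recorded after Lemma~\ref{lem:dpres.minus}: while resolving the variables of $s_i$ one at a time, every intermediate formula $DP_{s'}(F_{i-1})$ with $s' \subsetneq s_i$ has at most $3^{\,|s_i| - |s'|}\,|F_{i-1}| \le 3^k |F_{i-1}|$ clauses. A single resolution step $DP_x$ on a formula with $m$ clauses produces at most $m^2$ resolvents and runs in $poly(m)$ time; hence each of the at most $k$ resolution steps inside round $i$ costs $2^{O(k)}poly(|F_{i-1}|)$. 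The decisive point is that at the \emph{end} of round $i$, Lemma~\ref{lem:dpres.minus} gives $C - s_i \in F_{i-1} - s_i$ for every $C \in F_i$, and since $F_i$ contains no variable of $s_i$ this yields $|F_i| \le |F_{i-1}|$. Thus the size never exceeds $|F|$ between rounds and the transient $3^k$ factor never compounds; over the at most $|vars(F)|$ rounds the total running time is $2^{O(k)}poly(|F|)$.

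The one gap that must be closed is that the precomputed NEO remains usable after resolution, i.e., that $s_{i+1}$ is still a nest-set of the hypergraph of the \emph{resolved} formula $F_i$, and not merely of $H(F) - s_1 - \cdots - s_i$. I would close this again by Lemma~\ref{lem:dpres.minus}: since every clause of $F_i$ has variable set $vars(C') \setminus (s_1 \cup \cdots \cup s_i)$ for some original clause $C'$, the hypergraph $H(F_i)$ is a subhypergraph of $H(F) - (s_1 \cup \cdots \cup s_i)$. The NEO property guarantees $s_{i+1}$ is a nest-set of the latter, and heredity of nest-sets (Lemmas~\ref{lem:nsw.induced.sub} and~\ref{lem:nsw.hereditary}) transfers this to the subhypergraph $H(F_i)$; variables of $s_{i+1}$ that have vanished from $F_i$ simply require no resolution. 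This is the main obstacle, since it is precisely what licenses applying the size bounds of Lemma~\ref{lem:dpres.minus} inductively in every round. Once it is in place, the running-time analysis above completes the proof.
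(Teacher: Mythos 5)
Your proposal is correct and follows essentially the same route as the paper: compute a $k$-NEO via the algorithm of Section~\ref{sec:fpt}, perform Davis--Putnam resolution nest-set by nest-set, bound the transient blowup within a round by $3^{k-|s'|}|F|$ and the end-of-round size by $|DP_{s}(F)|\leq|F|$ via Lemma~\ref{lem:dpres.minus}. The one step you flag as a gap --- that $s_{i+1}$ remains a nest-set of $H(F_i)$ because $H(F_i)$ is a subhypergraph of $H(F)-(s_1\cup\cdots\cup s_i)$ --- is indeed left implicit in the paper, and your appeal to Lemmas~\ref{lem:nsw.induced.sub} and~\ref{lem:nsw.hereditary} closes it correctly.
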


\end{document}